\newtheorem{theorem}{Theorem}
\newtheorem{lemma}{Lemma}
\newtheorem{corollary}{Corollary}
\newtheorem{definition}{Definition}
\newtheorem{observation}{Observation}
\newtheorem{proposition}{Proposition}
\newtheorem{example}{Example}
\newcommand{\minitab}[2][l]{\begin{tabular}{#1}#2\end{tabular}}
\begin{document}

\title{Cooperative Wi-Fi Deployment: \\A One-to-Many Bargaining Framework}

\author{Haoran~Yu,~\IEEEmembership{Student Member,~IEEE,}
        Man~Hon~Cheung, and~Jianwei~Huang,~\IEEEmembership{Fellow,~IEEE}
\IEEEcompsocitemizethanks{\IEEEcompsocthanksitem H. Yu, M. H. Cheung, and J. Huang are with the Department of Information Engineering, The Chinese University of Hong Kong, Hong Kong. E-mail: \{yh012, mhcheung, jwhuang\}@ie.cuhk.edu.hk.}
\thanks{Part of the results appeared in {\it IEEE WiOpt 2015} \cite{yu2015cooperative}. This work is supported by the General Research Fund (Project Number CUHK 14202814) established under the University Grant Committee of the Hong Kong Special Administrative Region, China.}
}

\thispagestyle{empty}

\twocolumn

\IEEEtitleabstractindextext{

\begin{abstract}
We study the cooperation of the mobile network operator (MNO) and the venue owners (VOs) on the public Wi-Fi deployment. We consider a \emph{one-to-many bargaining} framework, where the MNO bargains with VOs sequentially to determine where to deploy Wi-Fi and how much to pay. {{Taking into account the negative externalities among different steps of bargaining, we analyze the following two cases:}} for the \emph{exogenous bargaining sequence} case, we compute the optimal bargaining solution on the cooperation decisions and payments under a predetermined bargaining sequence; for the \emph{endogenous bargaining sequence} case, the MNO decides the bargaining sequence to maximize its payoff. Through exploring the structural property of the optimal bargaining sequence, we design a low-complexity \emph{Optimal VO Bargaining Sequencing} (OVBS) algorithm to search the optimal sequence. More specifically, we categorize the VOs into three types based on the impact of the Wi-Fi deployment at their venues, and show that it is optimal for the MNO to bargain with these three types of VOs sequentially. Numerical results show that compared with the random and worst bargaining sequences, the optimal bargaining sequence improves the MNO's payoff by up to 14.8\% and 45.3\%, respectively.
\end{abstract}

\begin{IEEEkeywords}
Wi-Fi deployment, venue owner, Nash bargaining.
\end{IEEEkeywords}}

\newpage 
\setcounter{page}{1}

\maketitle

\IEEEdisplaynontitleabstractindextext

\IEEEpeerreviewmaketitle

\IEEEraisesectionheading{\section{Introduction}\label{Section1}}

\subsection{Motivation}
\IEEEPARstart{T}{he} proliferation of mobile devices has lead to an explosive growth of global mobile data traffic, so the \emph{mobile network operators} ({MNOs}) are seeking innovative approaches to expand the network capacity and improve users' quality of experience. With the recent technology developments and standardization efforts (\emph{e.g.}, Hotspot 2.0 and 
{{the access network discovery and selection function \cite{4Ginteg}}}), Wi-Fi data offloading has emerged as an important approach to alleviate cellular congestion. A recent study \cite{lee2013mobile} showed that Wi-Fi has offloaded 65\% of total mobile traffic in the major cities in Korea. Furthermore, the Wireless Broadband Alliance's report \cite{WBA} estimated that the annual global Wi-Fi deployment rate will increase to $10.5$ million in 2018.

Instead of building their own Wi-Fi hotspots, many MNOs have been collaborating with \emph{venue owners} (VOs), which are the owners of public places such as shopping malls and stadiums, on hotspot installment \cite{WBA}. Since a large volume of cellular data traffic is generated from these crowded public places, MNOs are especially interested in deploying hotspots at these venues to relieve the traffic congestion. With the location information provided by Wi-Fi hotspots, MNOs can also earn profits by delivering context-aware mobile advertisements to mobile users.{\footnote{Although the MNO can also deliver advertisements through the cellular network, users are much more receptive to advertising through Wi-Fi due to their voluntary use of Wi-Fi \cite{Cisco2013}. Furthermore, Wi-Fi usually provides more accurate user localization, and is more suitable for supporting multimedia advertisement due to the higher data rate.}} Meanwhile, VOs also welcome the MNOs' help in building the carrier-grade Wi-Fi, which usually provides a higher capacity and better integration with the cellular network than a regular Wi-Fi \cite{Cisco2013}, hence significantly enhances the mobile users' experience and attracts more visitors to those Wi-Fi available venues. Moreover, the carrier-grade Wi-Fi can help both MNOs and VOs collect visitor analytics, provide location-based services, and promote products or activities \cite{WBA,Cisco2013}. Therefore, both MNOs and VOs benefit from the Wi-Fi deployment and have incentives to provide Wi-Fi service cooperatively. 
For example, AT\&T has been cooperating with some VOs (such as Starbucks) to install the public Wi-Fi networks \cite{ATTWiFismall}. 
Although this kind of MNO-VO cooperation is increasingly popular, the detailed economic interactions among MNOs and VOs still have not been sufficiently explored and understood by the existing literatures. This motivates us to extensively analyze both MNOs and VOs' strategies in the cooperative Wi-Fi deployment in this paper.
\subsection{Our Work}
We consider a case where both MNOs and VOs have considerable market power, and study the cooperative Wi-Fi deployment problem under the one-to-many bargaining framework.{\footnote{The case where different sides have unbalanced market power can be studied in the same framework as in this paper, using the asymmetric Nash bargaining formulation \cite{napel2002bilateral}.}} Specifically, a monopoly MNO bargains with multiple VOs sequentially, \emph{i.e.}, at each step, the MNO bargains with only one VO for deploying Wi-Fi at the corresponding venue.{\footnote{More precisely, the one-to-many bargaining contains several types. The most common type is the one-to-many bargaining with a \emph{sequential} bargaining protocol. Another type is the one-to-many bargaining with a \emph{concurrent} bargaining protocol, where the buyer bargains with multiple sellers concurrently \cite{gao2014bargaining}. In practice, conducting the concurrent bargaining is much more difficult than the sequential bargaining, as it requires the evaluation of simultaneous responses of all bargainers. In this paper, we focus on the sequential bargaining protocol in the one-to-many bargaining.}} We analyze the bargaining solution of each step, including the cooperation decision and payment, by using the \emph{Nash bargaining theory} \cite{nash1950bargaining}. Since the MNO's willingness to deploy new hotspots decreases as the number of deployed hotspots increases, the cooperation between the MNO and a particular VO imposes {a} \emph{negative externality} to the bargaining among the MNO and other VOs. Such an externality significantly complicates the analysis. There are very few literatures studying the one-to-many bargaining, especially under the Nash bargaining theory. Our work provides a systematic study on this problem.

In the first part of this paper, we study the \emph{exogenous} {bargaining sequence} scenario, where the MNO bargains with VOs sequentially according to a predetermined bargaining sequence.
We take into account the \emph{data offloading benefit}, \emph{Wi-Fi operation cost}, \emph{advertising profit}, and \emph{business revenue} of the MNO and VOs. In particular, we differentiate the MNO's \emph{data offloading benefit} at a venue during different time periods (\emph{e.g.}, daytime and nighttime). We would like to answer the following key questions: (i) \emph{Which VOs should the MNO cooperate with?} (ii) \emph{How much should the MNO pay these VOs?} We apply backward induction to compute the optimal bargaining solution on the cooperation decisions and payments.

In the second part of this paper, we study the \emph{endogenous} {bargaining sequence} scenario, where the MNO first determines the bargaining sequence and then bargains with VOs accordingly. We want to answer the following key question: \emph{Under what bargaining sequence can the MNO maximize its payoff?} Based on the analysis in the first part, we can compute the MNO's payoff under a fixed bargaining sequence. However, due to the complex structure of the one-to-many bargaining, we often cannot obtain the closed-form solution of such a payoff. Therefore, it is very challenging to directly compare the MNO's payoffs under all possible bargaining sequences and determine the optimal one.

To tackle the high complexity of the optimal sequencing problem, we first establish an important structural property of the one-to-many bargaining. More precisely, we categorize VOs into three types based on the impact of the Wi-Fi deployment at their venues. We show that there exists a group of optimal bargaining sequences, under which the MNO bargains with these three types of VOs sequentially. As a result, we design an \emph{Optimal VO Bargaining Sequencing} (OVBS) algorithm that searches for the optimal bargaining sequence from a significantly reduced set. In fact, the structural property we prove in this paper is general, and is valid for many other one-to-many bargaining problems. We further characterize two special system settings, where we can explicitly determine the optimal sequence without running OVBS.

In the third part of this paper, we study the influence of the bargaining sequence on the VOs' payoffs. Our analysis shows that: (i) When VOs are homogenous, it is beneficial for a VO to bargain with the MNO as early as possible; (ii) When VOs are heterogenous, ``the earlier the better'' is no longer true in general.

The main contributions of this paper are as follows:

\begin{itemize}
\item \emph{Study of the one-to-many bargaining with cooperation cost}: To the best of our knowledge, this is the first work studying the one-to-many bargaining with the \emph{cooperation cost} (\emph{i.e.}, Wi-Fi deployment and operation cost) under the Nash bargaining theory.
We show that with the cooperation cost, the bargaining sequence significantly influences the bargaining results. We analyze the one-to-many bargaining with both \emph{exogenous} and \emph{endogenous} bargaining sequences. The results in this paper are general enough to be applied in other one-to-many bargaining problems.

\item \emph{Modeling and analysis of the cooperative Wi-Fi deployment}: As far as we know, this is the first work studying the economic interactions among the MNO and VOs in terms of the cooperative Wi-Fi deployment. We show the negative externalities among different steps of negotiation, and analyze the bargaining results for any given bargaining sequence.

\item \emph{Low-complexity optimal bargaining sequence search algorithm}: Motivated by the fact that the bargaining sequence influences the bargaining results, we formulate the MNO's optimal bargaining sequencing problem. Then we prove an important structural property for the optimal bargaining sequence, and design a low-complexity OVBS algorithm to search the optimal sequence. Numerical results show that the optimal bargaining sequence improves the MNO's payoff over the random and worst bargaining sequences by up to 14.8\% and 45.3\%, respectively.

\item \emph{Study of the bargaining sequence's impact on VOs}: We prove that for homogenous VOs, bargaining with the MNO at earlier positions always improves their payoffs. However, for heterogenous VOs, earlier bargaining positions may decrease their payoffs. To the best of our knowledge, this is the first paper showing and explaining this feature.

\end{itemize}
\subsection{Literature Review}\label{literaturereview}
\subsubsection{Deployment of MNO's Wi-Fi Networks}
{{There are a few literatures studying the MNO's Wi-Fi access point deployment problem.}} Zheng \emph{et al.} in \cite{zheng2012sparse} proposed Wi-Fi access point deployment algorithms, which provide the worst-case guarantee to the interconnection gap for vehicular Internet access. 
Wang \emph{et al.} in \cite{wang2013exploiting} exploited users' mobility patterns to deploy Wi-Fi access points, aiming at maximizing the continuous Wi-Fi coverage for mobile users. 
{{Bulut \emph{et al.} in \cite{bulut2013wifi} analyzed some real user mobility traces and deployed Wi-Fi access points based on the density of users' data access requests. 
Liao \emph{et al.} in \cite{liao2011two} investigated the Wi-Fi access point deployment problem with the consideration of both the coverage and localization accuracy. 
Poularakis \emph{et al.} in \cite{george2016mobihoc} studied a joint Wi-Fi access point deployment and Wi-Fi service pricing problem. 
These works focused on a single MNO's Wi-Fi deployment decision, and did not consider the VOs, who may collaborate with the MNO and compensate the MNO's Wi-Fi deployment cost.}}
\subsubsection{Economics of VOs' Wi-Fi Networks}
{{There have been many literatures studying the mobile data offloading market, where the MNOs lease the VOs' (or resident users') Wi-Fi networks to offload the cellular data traffic. 
For example, Iosifidis \emph{et al.} in \cite{iosifidis2015double} designed an iterative double auction mechanism for an offloading market, where the MNOs compete to lease the VOs' Wi-Fi networks for data offloading. The authors proposed an efficient allocation and payment rule that maximizes the social welfare. References \cite{paris2015efficient,dong2014ideal,lu2014easybid} designed reverse auctions for an MNO to motivate the VOs to offload the cellular traffic. 
Gao \emph{et al}. in \cite{gao2014bargaining} applied a bargaining framework to study a similar Wi-Fi capacity trading problem. 
Furthermore, Yu \emph{et al.} in \cite{yu2016wifiad} focused on the VOs' optimal Wi-Fi monetization strategies by considering the Wi-Fi advertising technique. 
However, these works assumed that the Wi-Fi networks have already been deployed and are owned by the VOs. They did not study the VOs' cooperation with the MNO in deploying the Wi-Fi networks.}}

\subsubsection{One-to-Many Bargaining}
In terms of the \emph{one-to-many bargaining}, the most relevant works are \cite{gao2014bargaining}, \cite{moresi2008model}. Both papers studied the one-to-many bargaining under the Nash bargaining theory. However, since they did not consider the \emph{cooperation cost}, their conclusion was that the bargaining sequence does not affect the buyer's payoff, and their analysis was limited to the one-to-many bargaining with \emph{exogenous} sequence. In our work, we take into account the \emph{cooperation cost} (\emph{i.e.}, Wi-Fi deployment and operation cost), which complicates the one-to-many bargaining with \emph{exogenous} sequence. Such a consideration also motivates us to study the one-to-many bargaining with \emph{endogenous} sequence.
References \cite{li2010one,cai2000delay,cai2003inefficient} studied several one-to-many bargaining problems, where the buyer bargains with multiple sellers on a joint project that requires the cooperation from all the participants. It is different from our problem, as here the MNO may only cooperate with a subset of the VOs on the Wi-Fi deployment.

The rest of the paper is organized as follows. In Section \ref{Section2}, we introduce the system model. In Section \ref{Section3}, we analyze the bargaining between the MNO and a single VO. In Sections \ref{Section4} and \ref{Section5}, we study the one-to-many bargaining with exogenous and endogenous bargaining sequences, respectively. In Section \ref{sec:VOpayoff}, we investigate the impact of the bargaining sequence on the VOs. We provide the numerical results in Section \ref{sec:numerical}, and conclude the paper in Section \ref{sec:conclusion}.

\section{System Model}\label{Section2}
\subsection{Basic Settings}
We consider one mobile network operator (MNO), who operates multiple macrocells and bargains with venue owners (VOs) to deploy Wi-Fi access points. For simplicity, we assume that each venue (such as a cafe) has a limited space and hence is covered by only one cellular macrocell. Since deploying Wi-Fi at a particular venue only offloads traffic for the corresponding macrocell under our assumption and does not benefit other macrocells, the MNO can consider the Wi-Fi deployments for different macrocells separately. Without loss of generality, we study the MNO's strategy within one macrocell.

We consider a set ${\cal{N}}\triangleq \left\{1,2,\ldots,N\right\}$ of VOs, whose venues are non-overlapping but covered by the same macrocell. According to \cite{wang2015understanding}, the mobile traffic exhibits a periodical daily pattern. Hence, we divide a day equally into $T\in\left\{1,2,\ldots\right\}$ time periods, and assume that when Wi-Fi is deployed at venue $n$,{\footnote{To simplify the description, we use venue $n$ to refer to VO $n$'s venue.}} the expected amount of offloaded macrocell traffic during the $t$-th ($t=1,2,\ldots,T$) time period is $X_n^t\ge 0$. We define
\begin{align}
\bm{X}_n \triangleq \left(X_n^1,X_n^2,\ldots,X_n^T\right)
\end{align}
as the offloading vector of VO $n$. Each VO $n\in{\cal{N}}$ is further characterized by parameters $R_n$, $C_n$, and $A_n$:
\begin{itemize}
\item $R_n\ge0$ denotes the extra revenue that Wi-Fi creates for VO $n$'s business (\emph{e.g.}, via attracting more customers and collecting customer analytics);{\footnote{{{Different from ${\bm X}_n$, we aggregate the extra revenues obtained by VO $n$ during different time periods into a single parameter $R_n$. The reason is that VO $n$'s payoff is linear in $R_n$, as we will discuss in Section \ref{subsec:payoff}. Hence, considering the total value leads to the same result as considering different values in different time periods. Similar explanations apply for the definitions of parameters $C_n$ and $A_n$.}}}}
\item $C_n\ge0$ denotes the total cost for the MNO to deploy and operate Wi-Fi at venue $n$, including the installment fee, management cost, and backhaul cost;{\footnote{In practice, some VOs undertake the backhaul cost for the MNO. This can be easily incorporated into our analysis by properly redefining $R_n$ and $C_n$.}}
\item $A_n\ge0$ denotes the advertising profit to the MNO when Wi-Fi is deployed at venue $n$.{\footnote{Sometimes VOs promote their products via Wi-Fi, and we include the corresponding advertising profit in $R_n$.}}
\end{itemize}
We assume that the information of $\bm{X}_n$, $R_n$, $C_n$, and $A_n$ for all $n\in{\cal N}$ is known to the MNO and all VOs.{\footnote{{In practice, the MNO and VOs can estimate these parameters. For example, parameter ${\bm X}_n$ can be estimated by combining the results in \cite{wang2015understanding} and \cite{lee2013mobile}, which studied the spatial-temporal distribution of cellular traffic and the percentage of offloaded cellular traffic, respectively. Parameters $R_n$ and $A_n$ are mainly determined by the statistics like the number of customers and the customers' average sojourn time, which can be estimated by the method proposed in \cite{kim2006extracting}. Parameter $C_n$ can be estimated based on \cite{SenzaEco}, which showed the Wi-Fi hotspots' detailed capital expenditures (\emph{e.g.}, equipment fees) and operating expenses (\emph{e.g.}, backhaul costs, power costs, and maintenance fees).}}} 
This allows us to focus on studying the optimal bargaining decisions in this paper. In our future work, we will further analyze how incomplete and asymmetric information affects the cooperation among the MNO and VOs.

\subsection{MNO's Payoff, VO's Payoff, and Social Welfare}\label{subsec:payoff}
We use ${{b_n}}\in\left\{0,1\right\}$ to denote the bargaining outcome between the MNO and VO $n$: $b_n=1$ if they agree on the Wi-Fi deployment at venue $n$, and $b_n=0$ otherwise. We use $p_n\in{\mathbb R}$ to denote the MNO's payment to VO $n$.{\footnote{We allow $p_n$ to be negative, in which case VO $n$ pays the MNO. {{This will be the case when deploying Wi-Fi is more beneficial to VO $n$ than to the MNO.}}}} As we will see in Sections \ref{Section3} and \ref{Section4}, under the Nash bargaining solution, ${{p_n}}=0$ whenever ${{b_n}}=0$, \emph{i.e.,} there is no transfer if no agreement is reached.

To simplify the notations, we define
\begin{align}
{{\bm{b}}_n}\triangleq \left(b_1,b_2,\ldots,b_n\right){\rm~and~}{{\bm{p}}_n}\triangleq \left(p_1,p_2,\ldots,p_n\right)
\end{align}
as the bargaining outcomes and payments between the MNO and the \emph{first} $n\in{\cal N}$ VOs, respectively.

The \emph{MNO's payoff} depends on the offloading benefit, advertising profit, Wi-Fi deployment and operation cost, and its payment to VOs. Based on ${\bm b}_N$ and ${\bm p}_N$, the MNO's payoff is
\begin{align}
\nonumber
{U}\left( {\bm b}_N,{\bm p}_N\right)\triangleq & \sum\limits_{t=1}^T {f_t\left( {\sum\limits_{n = 1}^N {{b_n}{X_n^t}}} \right)}\\
& + \sum\limits_{n = 1}^N {{b_n}\left( {{A_n}- {C_n}} \right)}-\sum\limits_{n = 1}^N {{p_n}}.\label{MNOpayoff}
\end{align}
Here, $f_t\left( \cdot  \right),t=1,2,\ldots,T,$ is an increasing and concave function with $f_t\left( 0 \right) = 0$,{\footnote{{{Notice that the situation where function $f_t\left( \cdot  \right)$ is linear for all $t$ is a special case of our framework. In this case, there is no externality among different steps of bargaining, and the one-to-many bargaining problem degenerates to $N$ independent one-to-one bargaining between the MNO and each VO.}}}} and ${\sum_{n = 1}^N {{b_n}{X_n^t}}}$ is the MNO's total offloaded traffic from all the $N$ venues during the $t$-th time period. Hence, $f_t\left( {\sum_{n = 1}^N {{b_n}{X_n^t}}} \right)$ characterizes the \emph{offloading benefit} of the MNO during the $t$-th time period, and $\sum_{t=1}^T {f_t\left( {\sum_{n = 1}^N {{b_n}{X_n^t}}} \right)}$ is the MNO's total \emph{offloading benefit} of all time periods.{\footnote{Reference \cite{gao2014bargaining} used a similar function to characterize the MNO's serving cost reduction due to the data offloading. However, \cite{gao2014bargaining} did not consider the temporal heterogeneity of the offloaded traffic, while our work defines the offloading benefit function $f_t\left(\cdot\right)$ for each time period $t=1,2,\ldots,T$.}} Furthermore, $\sum_{n = 1}^N {{b_n} {A_n}}$ and $\sum_{n = 1}^N {{b_n} {C_n}}$ describe the MNO's total advertising profit and total cost, respectively. Term $\sum_{n = 1}^N {{p_n}}$ is the MNO's total payment to the VOs.

\emph{VO $n$'s payoff} depends on the revenue directly brought by Wi-Fi and the MNO's payment as
\begin{align}
{V_n}\left( {{b_n},{p_n}} \right) \triangleq {b_n}{R_n} + {p_n}.\label{VOpayoff}
\end{align}

The \emph{social welfare} is the aggregate payoff of the MNO and all VOs:
\begin{align}
\nonumber
\Psi \left( {\bm b}_N \right)& \triangleq U\left( {{{\bm b}_N},{{\bm p}_N}} \right) + \sum\limits_{n = 1}^N {{V_n}\left( {{b_n},{p_n}} \right)}\\
&=\sum\limits_{t=1}^{T} {f_t\left( {\sum\limits_{n = 1}^N {{b_n}{X_n^t}} } \right)} + \sum\limits_{n = 1}^N {{b_n} {Q_n}},\label{socialwelfare}
\end{align}
where for each VO $n\in\cal{N}$, we define
\begin{align}
{Q_n}\triangleq {R_n} + {A_n} - {C_n}.\label{Qn}
\end{align}
Here $Q_n$ captures the increase in social welfare by deploying Wi-Fi at venue $n$, excluding the data offloading effect. Hence, we call $Q_n$ as the \emph{net benefit} of deploying Wi-Fi at venue $n$ without considering the data offloading benefit.
We summarize the key notations in this paper in Table \ref{table:notation}, including some notations to be discussed in Sections \ref{Section3} and \ref{Section4}.

Since the payment terms are cancelled out in (\ref{socialwelfare}), the social welfare only depends on the bargaining outcomes ${{\bm b}_N}=\left(b_1,b_2,\ldots,b_N\right)$ between the MNO and $N$ VOs.

\begin{table}[t]
\centering
\caption{Main Notations}\label{table:notation}
\begin{tabular}{|p{1.8cm}|p{6cm}|}
\hline
{\minitab[c]{$n,{\cal N}$}} & {VO index and its feasible set}\\
\hline
{\minitab[c]{$t$}} & {Time period index}\\
\hline
{\minitab[c]{$X_n^t$}} & {Amount of offloaded traffic at venue $n$ during the $t$-th time period}\\
\hline
{\minitab[c]{$Q_n$}} & {Net benefit of deploying Wi-Fi at venue $n$ without data offloading effect}\\
\hline
{\minitab[c]{$f_t\left(\cdot\right)$}} & {MNO's data offloading benefit function for the $t$-th time period}\\
\hline
{\minitab[c]{${\bm b}_n$}} & {Bargaining outcomes between the MNO and the first $n$ VOs (\emph{Variables})}\\
\hline
{\minitab[c]{${\bm p}_n$}} & {Payments from the MNO to the first $n$ VOs (\emph{Variables})}\\
\hline
{\minitab[c]{${\bm \pi}_n$}} & {Payoffs of the first $n$ VOs (\emph{Variables})}\\
\hline
{\minitab[c]{$U\left({\bm b}_N,{\bm p}_N\right)$}} & {MNO's payoff function}\\
\hline
{\minitab[c]{$V_n\left({b}_n,{p}_n\right)$}} & {VO $n$'s payoff function}\\
\hline
{\minitab[c]{$\Psi\left({\bm b}_N\right)$}} & {Social welfare function}\\
\hline
{\minitab[c]{$U_n^0,V_n^0$}} & {MNO's and VO $n$'s disagreement points at step $n$}\\
\hline
{\minitab[c]{$U_n^1,V_n^1$}} & {MNO's and VO $n$'s payoffs at step $n$ under bargaining result $\left(b_n,\pi_n\right)$}\\
\hline
{\minitab[c]{$B_{m}^{s}\left({\bm b}_s\right)$}} & {Outcomes of the first $m$ steps when the MNO reaches ${\bm b}_s$ in the first $s$ steps}\\
\hline
{\minitab[c]{${b_k^*\left( {\bm b}_{k-1} \right)}$}} & {Outcome of step $k$ when the MNO reaches ${\bm b}_{k-1}$ in the first $k-1$ steps}\\
\hline
{\minitab[c]{$\pi_{k}^*\left( {\bm b}_{k-1} \right)$}} & {VO $k$'s payoff when the MNO reaches ${\bm b}_{k-1}$ in the first $k-1$ steps}\\
\hline
{\minitab[c]{${\hat {\bm b}}_N,{\hat {\bm \pi}}_N$}} & {NBS of all the $N$ steps}\\
\hline
{\minitab[c]{$U_0$}} & {MNO's eventual payoff after bargaining}\\
\hline
\end{tabular}
\end{table}

\section{One-To-One Bargaining}\label{Section3}
We first study a special case where there is only one VO, \emph{i.e.,} $\left|{\cal N}\right|=1$. We analyze the one-to-one bargaining under the Nash bargaining theory, which helps us better understand the more general results in the later sections.

The \emph{Nash bargaining solution} (NBS) \cite{nash1950bargaining} of the one-to-one bargaining solves the following problem:
\begin{align}
\begin{split}
&{\max} {\rm ~}\left( {U\left( {{b_1},{p_1}} \right) - U\left( {0,0} \right)} \right) \cdot \left( {{V_1}\left( {{b_1},{p_1}} \right) - {V_1}\left( {0,0} \right)} \right)\\
&{\rm s.t.~~}U\!\left( {{b_1},{p_1}} \right)\!-\!U\left( {0,0} \right)\! \ge 0,{V_1}\!\left( {{b_1},{p_1}} \right)\! -\! {V_1}\left( {0,0} \right)\! \ge 0,\\
&{\rm var.~~~~~~~~~~~~~~~~~}{b_1} \in \left\{ {0,1} \right\},{p_1} \in {\mathbb R}.\label{formulation1}
\end{split}
\end{align}
Here, $U\left( {0,0} \right)$ and ${V_1}\left( {0,0} \right)$ are the \emph{disagreement points} of the MNO and VO $1$ (the only VO), which are equal to their payoffs when no agreement is reached. Through setting $b_1 = 0$ and $p_1 = 0$ in (\ref{MNOpayoff}) and (\ref{VOpayoff}), we obtain $U\left( {0,0} \right)=0$ and ${V_1}\left( {0,0} \right)=0$, respectively. The NBS essentially maximizes the product of the MNO and VO $1$'s payoff gains over their disagreement points. Intuitively, with a higher disagreement point, the MNO (or the VO) can obtain a larger payoff under the NBS.

We further define ${\pi _1} \triangleq {V_1}\left( {{b_1},{p_1}} \right)$ as the payoff of VO $1$. This enables us to rewrite problem (\ref{formulation1}) with respect to ${\pi _1}$ and $\Psi \left( {{b_1}} \right)$:
\begin{align}
\begin{split}
&{\max}{\rm{~~~~~~~}} \left( {\Psi \left( {{b_1}} \right) -{\pi _1}} \right) \cdot {\pi _1}\\
&{\rm s.t.~~~~~~~} {\Psi \left( {{b_1}} \right) - {\pi _1}} \ge 0, {\pi _1} \ge 0,\\
&{\rm var.~~~~~~~~~}{b_1} \in \left\{ {0,1} \right\},{\pi _1} \in {\mathbb R}.\label{formulation2}
\end{split}
\end{align}
Problems (\ref{formulation1}) and (\ref{formulation2}) are equivalent, in the sense that given any bargaining solution in terms of $\left( {{b_1},{\pi_1}} \right)$, we can compute the equivalent bargaining solution in terms of $\left( {{b_1},{p_1}} \right)$ as $\left( {b_1,p_1} \right) = \left( {b_1,\pi_1-b_1{R_1}} \right)$ based on (\ref{VOpayoff}).

We show the closed-form optimal solution to (\ref{formulation2}) in the following proposition.{\footnote{The detailed proofs of the propositions and theorems in this paper are given in the appendix.}} 
\begin{proposition}\label{proposition:one2one}
The optimal solution to problem (\ref{formulation2}) is
\begin{align}
\left( {b_1^*,\pi _1^*} \right) = \left\{ {\begin{array}{*{20}{l}}
{\left( {1,\frac{1}{2}\Psi \left( 1 \right)} \right),}&{{\rm if~}\Psi \left( 1 \right) \ge 0},\\
{\left( {0,0} \right),}&{\rm otherwise},
\end{array}} \right.\label{choice1}
\end{align}
where $\Psi \left( 1 \right) = \sum_{t=1}^{T} {f_t\left( {{X_1^t}} \right)} + {Q_1}$ is defined in (\ref{socialwelfare}).
\end{proposition}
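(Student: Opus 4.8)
The plan is to exploit the fact that $b_1$ is a \emph{binary} variable, so problem (\ref{formulation2}) decomposes into just two subproblems, one for $b_1=0$ and one for $b_1=1$. For each fixed value of $b_1$ the objective becomes a single-variable function of $\pi_1$, which I would maximize over the feasible interval carved out by the two constraints, and then I would compare the two resulting optimal values to determine the overall optimizer.

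First I would dispose of the case $b_1=0$. Substituting $b_1=0$ into (\ref{socialwelfare}) and using $f_t(0)=0$ gives $\Psi(0)=0$, so the objective collapses to $-\pi_1^2$ while the constraints $\Psi(0)-\pi_1\ge 0$ and $\pi_1\ge 0$ force $\pi_1\le 0$ and $\pi_1\ge 0$ simultaneously. The only feasible point is therefore $\pi_1=0$, with objective value $0$. Next I would treat $b_1=1$, where $\Psi(1)=\sum_{t=1}^T f_t(X_1^t)+Q_1$ and the objective $\left(\Psi(1)-\pi_1\right)\pi_1$ is a strictly concave quadratic in $\pi_1$ on the feasible set $\{\pi_1: 0\le\pi_1\le\Psi(1)\}$. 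Here the sign of $\Psi(1)$ is decisive: if $\Psi(1)<0$ this interval is empty and $b_1=1$ is infeasible; if $\Psi(1)\ge 0$ the unconstrained maximizer $\pi_1=\frac{1}{2}\Psi(1)$ lies inside the interval and yields optimal value $\frac{1}{4}\Psi(1)^2\ge 0$.

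Finally I would compare the two branches. When $\Psi(1)<0$ only $b_1=0$ is feasible, so the solution is $\left(0,0\right)$; when $\Psi(1)\ge 0$ the value $\frac{1}{4}\Psi(1)^2$ attained at $b_1=1$ dominates the value $0$ attained at $b_1=0$, so the optimizer is $\left(1,\frac{1}{2}\Psi(1)\right)$, exactly matching (\ref{choice1}). I do not anticipate any real difficulty, since the argument reduces to elementary concave quadratic optimization after the binary split; the only point requiring a little care is the boundary $\Psi(1)=0$, where both branches tie at objective value $0$ and the stated convention assigns $b_1=1$ (with $\pi_1=0$). This tie-break is harmless because both branches then produce identical payoffs, so the convention is consistent with the Nash bargaining interpretation.
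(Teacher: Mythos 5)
Your proposal is correct and follows essentially the same route as the paper's own proof: decompose on the binary $b_1$, show $b_1=0$ forces $\pi_1=0$ with objective value $0$, maximize the concave quadratic $\left(\Psi(1)-\pi_1\right)\pi_1$ over $\left[0,\Psi(1)\right]$ when $b_1=1$ (infeasible if $\Psi(1)<0$, optimum $\pi_1=\frac{1}{2}\Psi(1)$ with value $\frac{1}{4}\Psi^2(1)$ otherwise), and compare the two branches. The only cosmetic difference is that you organize the case split by $b_1$ first while the paper splits on the sign of $\Psi(1)$ first, and your explicit remark on the tie at $\Psi(1)=0$ is a harmless refinement the paper leaves implicit.
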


Proposition \ref{proposition:one2one} indicates that if reaching an agreement increases the social welfare, \emph{i.e.,} $\Psi \left( 1 \right) \ge \Psi \left( 0 \right) = 0$, the MNO will deploy Wi-Fi at venue $1$ and equally share the generated social welfare with VO $1$; otherwise no Wi-Fi will be deployed, and both the MNO and VO $1$ will obtain zero payoff.

\section{One-to-Many Bargaining with \\Exogenous Sequence}\label{Section4}
In this section, we study the case where the MNO bargains with $N$ VOs sequentially under a fixed sequence. We illustrate the bargaining protocol in Figure \ref{figure1}. At each step, the MNO bargains with one VO $n\in\cal{N}$ on $\left( {{b_n},{p_n}} \right)$.
\begin{figure}[t]
  \centering
  \includegraphics[scale=0.3]{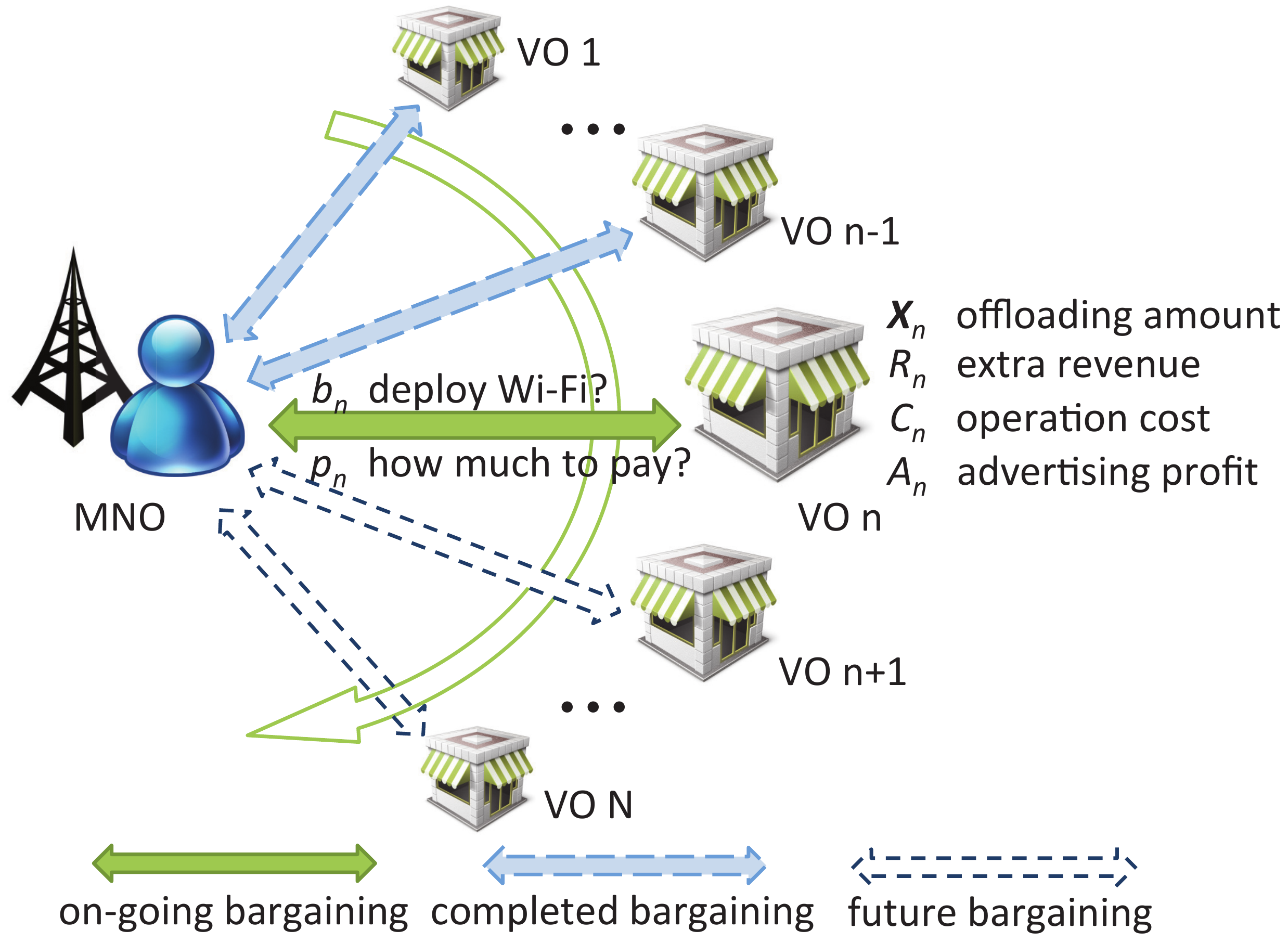}
  \centering
  \caption{Bargaining Protocol.}
  \label{figure1}
\end{figure}

We define ${\pi _n}$ as VO $n\in\cal{N}$'s payoff. As we have discussed in Section \ref{Section3}, bargaining on $\left( {{b_n},{p_n}} \right)$ and bargaining on $\left( {{b_n},{\pi_n}} \right)$ are equivalent. Therefore, in Sections \ref{Section4} and \ref{Section5}, we present the NBS in the form of $\left( {{b_n},{\pi_n}} \right)$ to simplify the notations. Similar to ${{\bm{b}}_n}$ and ${{\bm{p}}_n}$, we define
\begin{align}
{{\bm{\pi}}_n}\triangleq \left(\pi_1,\pi_2,\ldots,\pi_n\right)
\end{align}
as the payoffs of the \emph{first} $n$ VOs.

Without loss of generality, we assume that the bargaining sequence follows $1,2,\ldots,N$, \emph{i.e.,} the MNO bargains with VO $n$ at step $n\in{\cal N}$. In Section \ref{subsec:general}, we formulate the bargaining problem for step $n$. In Section \ref{subsec:NBS}, we apply backward induction to compute the NBS for step $n$.
\subsection{Bargaining Problem for Step $n\in{\cal N}$}\label{subsec:general}
At step $n\in{\cal N}$, the MNO bargains with VO $n$. We define $U_n^0$ and $V_n^0$ as the MNO's and VO $n$'s disagreement points, respectively. Furthermore, when the MNO and VO $n$ agree on $\left( {b_n},{\pi _n} \right)$, we define their payoffs by $U_n^1$ and $V_n^1$, respectively.

Similar as (\ref{formulation1}), we formulate the Nash bargaining problem at step $n$ as
\begin{align}
\begin{split}
&{\max}{\rm ~~~~~~~} \left( {U_{n}^1 - U_{n}^0} \right)\cdot\left( {V_{n}^1 - V_{n}^0} \right)\\
&{\rm s.t.}{\rm ~~~~~~~}{U_{n}^1 - U_{n}^0}\ge 0,{V_{n}^1 - V_{n}^0} \ge 0,\\
&{\rm var.~~~~~~~~~~}{b_{n}} \in \left\{ {0,1} \right\},{\pi _{n}} \in {\mathbb R}.\label{formulation:general}
\end{split}
\end{align}
Because VO $n$ has a zero disagreement point if not reaching an agreement with the MNO, we have $V_n^0=0$. Moreover, based on the definition of $\pi_n$, we have $V_n^1=\pi_n$. However, the computation of $U_n^0$ and $U_n^1$ are challenging, as the MNO's payoff depends on the bargaining results of all the $N$ steps. In the next section, we compute $U_n^0$ and $U_n^1$ by backward induction, and solve problem (\ref{formulation:general}) to obtain the NBS for step $n$.

\subsection{NBS for Step $n\in{\cal N}$}\label{subsec:NBS}
We use backward induction to solve problem (\ref{formulation:general}) from step $n=N$ to step $n=1$.

\subsubsection{Step $N$}
Suppose that the MNO has already bargained with VO $1,\ldots,N-1$, and has reached ${{\bm b}_{N - 1}}$ and ${{\bm \pi}_{N - 1}}$. It now bargains with VO $N$.

The MNO's disagreement point is
\begin{align}
U_N^0 = \Psi \left( {{\bm b}_{N-1},0}\right) - \sum\limits_{m = 1}^{N - 1} {{\pi _m}}.\label{equ:MNO:dis:N}
\end{align}
Here, $\Psi \left( {{\bm b}_{N-1},0} \right)$ is the social welfare when the bargaining outcomes of all $N$ steps are given as $\left( {{\bm b}_{N-1},0}\right)$, \emph{i.e.}, assuming that no agreement is reached in step $N$. We obtain $U_N^0$ by subtracting the first $N-1$ VOs' payoffs from the social welfare.\footnote{Notice that when no agreement is reached in step $N$, we have $\pi_N=0$. Hence, we do not need to subtract $\pi_N$ from the social welfare in (\ref{equ:MNO:dis:N}).}

If the MNO reaches $\left( {{b_N},{\pi _N}} \right)$ with VO $N$ in step $N$, its payoff is
\begin{align}
U_N^1 = \Psi \left({{{\bm b}_{N - 1}},{b_N}} \right)- \sum\limits_{m = 1}^{N - 1} {{\pi _m}}-{\pi_N}.\label{equ:MNO:agr:N}
\end{align}
Here, $\Psi \left({{\bm b}_{N-1},b_N} \right)$ is the social welfare when the bargaining outcomes are given as $\left( {{\bm b}_{N-1},b_N}\right)$. We obtain $U_N^1$ by subtracting all VOs' payoffs from the social welfare.

Recall that $V_N^0=0$ and $V_N^1=\pi_N$. Based on $U_N^0$ in (\ref{equ:MNO:dis:N}) and $U_N^1$ in (\ref{equ:MNO:agr:N}), we solve problem (\ref{formulation:general}) for $n=N$ and obtain the NBS for step $N$:
\begin{align}
\nonumber
& \left( {b_N^*\left({{\bm b}_{N-1}}\right),\pi _N^*\left({{\bm b}_{N-1}}\right)} \right) = \\
& \left\{ {\begin{array}{*{20}{l}}
{\left( {1,\frac{1}{2}{\Delta _N}\left( {{\bm b}_{N-1}} \right)} \right),}&{{\rm if~} {\Delta_N}\left({{\bm b}_{N-1}}\right)\ge 0},\\
{\left( {0,0} \right),}&{\rm otherwise},
\end{array}} \right.\label{solutionN}
\end{align}
where we define
\begin{align}
{\Delta _N}\left( {{{\bm b}_{N-1}}} \right) \triangleq \Psi \left( {{{\bm b}_{N-1}},1} \right) - \Psi \left( {{{\bm b}_{N-1}},0} \right).\label{Njude}
\end{align}
Here, ${\Delta _N}\left( {{{\bm b}_{N-1}}} \right)$ can be understood as follows: if we treat the MNO and VO $N$ as a coalition, ${\Delta _N}\left( {{{\bm b}_{N-1}}} \right)$ describes the increase in the coalition's payoff by deploying Wi-Fi at venue $N$. If and only if such a value is non-negative, the MNO and VO $N$ will reach an agreement and equally share the generated revenue; otherwise no agreement is reached. This is similar as the one-to-one bargaining in Section \ref{Section3}.

We can also understand ${\Delta _N}\left( {{{\bm b}_{N-1}}} \right)$ as the increase in social welfare by deploying Wi-Fi at venue $N$. This is because VO $N$ is the last one that the MNO bargains with. For a general bargaining step $n\in{\cal N}$, we will later show that ${\Delta _n}\left( {{{\bm b}_{n-1}}} \right)$ is generally \emph{not} equal to the increase in social welfare by deploying Wi-Fi at venue $n$.

Based on (\ref{solutionN}), $\left( {b_N^*\left({{\bm b}_{N-1}}\right),\pi _N^*\left({{\bm b}_{N-1}}\right)} \right)$ depends on vector ${{\bm b}_{N-1}}$ but is independent of vector ${{\bm \pi}_{N-1}}$. This means that the NBS for step $N$ only depends on the first $N-1$ steps' bargaining outcomes, and not on the VOs' payoffs.

\vspace{-0.1cm}

\subsubsection{Step $N-1$}

Suppose that the MNO has already bargained with VO $1,\ldots,N-2$, and has reached ${{\bm b}_{N - 2}}$ and ${{\bm \pi}_{N - 2}}$. It now bargains with VO $N-1$.

The MNO's disagreement point is
\begin{align}
\nonumber
&U_{N - 1}^0=\Psi \left( {{{\bm b}_{N-2}},0,{b_N^*}\left( {{{\bm b}_{N- 2}},0} \right)} \right)\\
&{~~~~~~~~~~~~~}- \sum\limits_{m = 1}^{N -2} {{\pi _m}} - {\pi _N^*}\left({{{\bm b}_{N- 2}},0}\right).\label{equ:MNO:dis:N-1}
\end{align}
Here, $\Psi \left( {{{\bm b}_{N-2}},0,{b_N^*}\left( {{{\bm b}_{N- 2}},0} \right)} \right)$ is the social welfare when the MNO reaches ${\bm b}_{N-2}$ with the first $N-2$ VO, does not reach an agreement with VO $N-1$, and reaches ${b_N^*}\left( {{{\bm b}_{N - 2}},0} \right)$ with VO $N$. We obtain $U_{N-1}^0$ by subtracting VOs' payoffs from the social welfare. Notice that ${b_N^*}\left( {{{\bm b}_{N - 2}},0} \right)$ and ${\pi_N^*}\left( {{{\bm b}_{N - 2}},0} \right)$ together correspond to the NBS for step $N$ when the bargaining outcomes of the first $N-1$ steps are $\left( {{{\bm b}_{N - 2}},0}\right)$, as computed by (\ref{solutionN}).

If the MNO reaches $\left( {{b_{N-1}},{\pi _{N-1}}} \right)$ with VO $N-1$ in step $N-1$, its payoff is
\begin{align}
\nonumber
& U_{N - 1}^1=\Psi \left( {{{\bm b}_{N-2}},b_{N-1},{b_N^*}\left( {{{\bm b}_{N - 2}},b_{N-1}} \right)} \right)\\
&{~~~~~~~~~~}- \sum\limits_{m = 1}^{N - 2} {{\pi _m}}  - {\pi _{N - 1}}- {\pi _N^*}\left({{{\bm b}_{N - 2}},b_{N-1}}\right).\label{equ:MNO:agr:N-1}
\end{align}
Here ${b_N^*}\left( {{{\bm b}_{N-2}},b_{N-1}} \right)$ and ${{\pi}_N^*}\left( {{{\bm b}_{N-2}},b_{N-1}} \right)$ are also determined by (\ref{solutionN}).

Based on $U_{N-1}^0$ in (\ref{equ:MNO:dis:N-1}) and $U_{N-1}^1$ in (\ref{equ:MNO:agr:N-1}), we solve problem (\ref{formulation:general}) for $n=N-1$ and obtain the NBS for step $N-1$:
\begin{align}
\nonumber
&\left( {b_{N-1}^*\left( {{{\bm b}_{N-2}}} \right),\pi _{N-1}^*\left( {{{\bm b}_{N-2}}} \right)} \right) =\\
& \left\{ {\begin{array}{*{20}{l}}
{\left( {1,\frac{1}{2}{\Delta _{N-1}}\left( {{{\bm b}_{N-2}}} \right)} \right),}&{{\rm if~}{\Delta _{N-1}}\left( {{{\bm b}_{N-2}}} \right) \ge 0},\\
{\left( {0,0} \right),}&{\rm otherwise},
\end{array}} \right.\label{solutionN-1}
\end{align}
where we define
\begin{align}
\nonumber
&{\Delta _{N-1}}\left( {{{\bm b}_{N-2}}} \right)\triangleq \Psi \left( {{{\bm b}_{N - 2}},1,b_N^*\left( {{{\bm b}_{N - 2}},1} \right)} \right)-\pi _N^*\left( {{{\bm b}_{N - 2}},1} \right)\\
&{\rm~~~~~~~~~~}- \Psi \left( {{{\bm b}_{N - 2}},0,{b_N^*}\left( {{{\bm b}_{N - 2}},0} \right)} \right) + {\pi _N^*}\left( {{{\bm b}_{N - 2}},0} \right).\label{N-1jude}
\end{align}
If we treat the MNO and VO $N-1$ as a coalition, then ${\Delta _{N-1}}\left( {{{\bm b}_{N - 2}}} \right)$ describes the increase in the coalition's payoff by deploying Wi-Fi at venue $N-1$, taking into account VO $N$'s response.

\subsubsection{Step $k$, $k\in\left\{2,3,\ldots,N-2\right\}$}
Suppose that the MNO has bargained with VO $1,\ldots,k-1$, and has reached ${{\bm b}_{k\!-1}}$ and ${{\bm \pi}_{k-\!1}}$. It now bargains with VO $k$.

For ease of exposition, we define $B_{m}^{s}\left({\bm b}_s\right), m\ge s, m,s\in{\cal N}$, as
\begin{align}
\nonumber
& B_{m}^{s}\left({\bm b}_s\right)=\\
& \left\{ {\begin{array}{*{20}{l}}
{{\bm b}_s,}&{{\rm if~}m=s},\\
{\left( B_{m-1}^{s}\left({\bm b}_s\right), b_{m}^*\left( B_{m-1}^{s}\left({\bm b}_s\right) \right)\right),}&{{\rm if~}m=s+1,\ldots,N.}
\end{array}} \right.\label{equ:Bfunction}
\end{align}
Intuitively, $B_{m}^{s}\left({\bm b}_s\right)$ characterizes the bargaining outcomes of the first $m$ ($m \ge s$) steps when the MNO reaches ${\bm b}_s$ in the first $s$ steps.{\footnote{Notice that, $B_{m-1}^{s}\left({\bm b}_s\right)$ in (\ref{equ:Bfunction}) returns a vector with a length of $m-1$, and $b_{m}^*\left( B_{m-1}^{s}\left({\bm b}_s\right) \right)$ is the bargaining outcome computed in step $m$.}}

Based on $B_{m}^{s}\left({\bm b}_s\right)$, we can write the MNO's disagreement point at step $k$ as
\begin{align}
\nonumber
&U_{k}^0=\Psi \left( B_{N}^{k}\left({\bm b}_{k-1},0\right) \right)-\sum\limits_{m = 1}^{k-1} {{\pi _m}}\\
&{~~~~~~~~~~} -\sum\limits_{m = k+1}^{N} {{\pi_m^*}\left( B_{m-1}^{k}\left({\bm b}_{k-1},0\right) \right)}.\label{equ:MNO:dis:k}
\end{align}
Here, $B_{N}^{k}\left({\bm b}_{k-1},0\right)$ describes the bargaining outcomes of all the $N$ steps when the MNO reaches $\left({\bm b}_{k-1},0\right)$ with the first $k$ VOs. Based on (\ref{equ:Bfunction}), this is computed in a recursive manner. For example, from (\ref{equ:Bfunction}), we have $B_{N}^{k}\left({\bm b}_{k-1},0\right)=\left( B_{N-1}^{k}\left({\bm b}_{k-1},0\right), b_{N}^*\left( B_{N-1}^{k}\left({\bm b}_{k-1},0\right) \right)\right)$, where $B_{N-1}^{k}\left({\bm b}_{k-1},0\right)$ can be further obtained by using (\ref{equ:Bfunction}), and $b_{N}^*\left( B_{N-1}^{k}\left({\bm b}_{k-1},0\right) \right)$ is computed by (\ref{solutionN}). Term $\Psi \left( B_{N}^{k}\left({\bm b}_{k-1},0\right) \right)$ is the social welfare under the bargaining outcomes given by $B_{N}^{k}\left({\bm b}_{k-1},0\right)$. Furthermore, $\sum_{m = 1}^{k-1} {{\pi _m}}$ is the total payoff of the first $k-1$ VOs, and term $\sum_{m = k+1}^{N} {{\pi_m^*}\left( B_{m-1}^{k}\left({\bm b}_{k-1},0\right) \right)}$ is the total payoff of VOs $k+1,k+2,\ldots,N$. Notice that term ${\pi_m^*}\left( B_{m-1}^{k}\left({\bm b}_{k-1},0\right) \right),m=k+1,k+2,\ldots,N,$ denotes VO $m$'s payoff, and is a function of the bargaining outcomes of the first $m-1$ steps. In (\ref{equ:MNO:dis:k}), we compute $U_k^0$ by subtracting all VOs' payoffs from the social welfare.

If the MNO reaches $\left( {{b_{k}},{\pi _{k}}} \right)$ with VO $k$, its payoff is
\begin{align}
\nonumber
&U_{k}^1=\Psi \left( B_{N}^{k}\left({\bm b}_{k-1},b_k\right) \right)- \sum\limits_{m = 1}^{k-1} {{\pi _m}}-\pi_k\\
&{~~~~~~~~~~} -\sum\limits_{m = k+1}^{N} {{\pi_m^*}\left( B_{m-1}^{k}\left({\bm b}_{k-1},b_k\right) \right)}.\label{equ:MNO:agr:k}
\end{align}

Based on $U_k^0$ in (\ref{equ:MNO:dis:k}) and $U_k^1$ in (\ref{equ:MNO:agr:k}), we solve problem (\ref{formulation:general}) for $n=k$ and obtain the NBS for step $k$:
\begin{align}
\nonumber
&\left( {b_{k}^*\left( {{{\bm b}_{k-1}}} \right),\pi _{k}^*\left( {{{\bm b}_{k-1}}} \right)} \right)=\\
& \left\{ {\begin{array}{*{20}{l}}
{\left( {1,\frac{1}{2}{\Delta _{k}}\left( {{{\bm b}_{k - 1}}} \right)} \right),}&{{\rm if~}{\Delta _{k}}\left( {{{\bm b}_{k - 1}}} \right)\ge0},\\
{\left( {0,0} \right),}&{\rm otherwise},
\end{array}} \right.\label{solutionk}
\end{align}
where we define
\begin{align}
\nonumber
&{\Delta _{k}}\left( {{\bm b}_{k - 1}} \right) \triangleq  \Psi \left( B_{N}^{k}\left({\bm b}_{k-1},1\right) \right) -\!\!\sum\limits_{m = k+1}^{N} {{\pi_m^*}\left( B_{m-1}^{k}\left({\bm b}_{k-1},1\right) \right)}\\
& -\Psi \left( B_{N}^{k}\left({\bm b}_{k-1},0\right) \right)+\sum\limits_{m = k+1}^{N} {{\pi_m^*}\left( B_{m-1}^{k}\left({\bm b}_{k-1},0\right) \right)}.\label{kjude}
\end{align}
If we treat the MNO and VO $k$ as a coalition, ${\Delta _{k}}\left( {{{\bm b}_{k - 1}}} \right)$ characterizes the increase of the coalition's payoff by deploying Wi-Fi at venue $k$, considering the responses of VOs $k+1,\ldots,N$.

\subsubsection{Step $1$}
The analysis of step $1$ is similar to that of step $k$, $k=2,3,\ldots,N-2$, except that for step $1$, there is no prior bargaining outcome. To save space, we skip the computation of $U_1^0$ and $U_1^1$, and provide the NBS as follows:
\begin{align}
\left( {b_{1}^*,\pi _{1}^*} \right)= \left\{ {\begin{array}{*{20}{l}}
{\left( {1,\frac{1}{2}{\Delta _{1}}} \right),}&{{\rm if~}{\Delta _{1}}\ge0},\\
{\left( {0,0} \right),}&{\rm otherwise},
\end{array}} \right.\label{solution1}
\end{align}
where we define
\begin{align}
\nonumber
{\Delta _{1}} \triangleq &\Psi \left( B_{N}^{1}\left(1\right) \right) -\sum\limits_{m = 2}^{N} {{\pi_m^*}\left( B_{m-1}^{1}\left(1\right) \right)}\\
& -\Psi \left( B_{N}^{1}\left(0\right) \right)+\sum\limits_{m = 2}^{N} {{\pi_m^*}\left( B_{m-1}^{1}\left(0\right) \right)}.\label{equ:delta1}
\end{align}
\subsection{MNO's Payoff after Bargaining}
After applying backward induction to the analysis from step $N$ to $1$, we can eventually obtain the bargaining outcomes in all steps and all VOs' payoffs, and we denote them by ${\hat {\bm b}}_N=\left( {{{\hat b}_1}, \ldots ,{{\hat b}_N}} \right)$ and ${\hat {\bm \pi}}_N=\left( {{{\hat \pi}_1}, \ldots ,{{\hat \pi}_N}} \right)$. 
Based on ${\hat {\bm b}}_N$ and ${\hat {\bm \pi}}_N$, we can easily compute the MNO's eventual payoff as
\begin{align}
\nonumber
U_0 &=\Psi \left( {\hat{\bm b}}_N \right) - \sum\limits_{n = 1}^N {{{\hat \pi} _n}}\\
&=\sum\limits_{t=1}^T {f_t\left( {\sum\limits_{n = 1}^N {{{\hat b}_n}{X_n^t}} } \right)} + \sum\limits_{n = 1}^N {{{\hat b}_n}{Q_n}}  - \sum\limits_{n = 1}^N {{{\hat \pi }_n}}.\label{MNOpayoffsequential}
\end{align}

\subsection{Engineering Insights}\label{subsec:engineering}
Here we summarize the insights from the above analysis of the one-to-many bargaining under a fixed bargaining sequence.

First, we find that the NBS of a particular step depends on the Wi-Fi deployment decisions of all the prior bargaining steps. This is because the more Wi-Fi networks the MNO has already deployed, the less motivation it has to deploy a new Wi-Fi network. On the other hand, since such a negative externality is not related to the payments among the MNO and VOs, the NBS of a particular step is independent of the payments of all the prior bargaining steps.

Second, the MNO may cooperate with the VOs nonconsecutively. As we will discuss in Example 3 in Section \ref{optimalsequencingalgorithm}, under a particular bargaining sequence, the MNO does not cooperate with a VO in the middle, while reaching agreements with VOs before and after the middle VO.

\section{One-to-Many Bargaining with \\Endogenous Sequence}\label{Section5}
In this section, we study the one-to-many bargaining with {endogenous} sequence, where the bargaining sequence is selected by the MNO to maximize its payoff. In Section \ref{Section5Z}, we illustrate the influence of the bargaining sequence on the MNO's payoff through two examples. In Section \ref{Section5A}, we formulate the MNO's optimal bargaining sequencing problem. In Section \ref{optimalsequencingalgorithm}, we solve the problem through an \emph{Optimal VO Bargaining Sequencing} (OVBS) algorithm. In Sections \ref{SpecialA} and \ref{SpecialB}, we study two special cases, where we can explicitly determine the optimal bargaining sequence without running OVBS.
\subsection{Examples on the Influence of Bargaining Sequence}\label{Section5Z}
Based on the analysis in Section \ref{Section4}, we present two examples in Figure \ref{figureb} to illustrate that the bargaining sequence can significantly affect the bargaining solutions and the MNO's payoff.

\begin{figure}[t]
  \centering
  \includegraphics[scale=0.3]{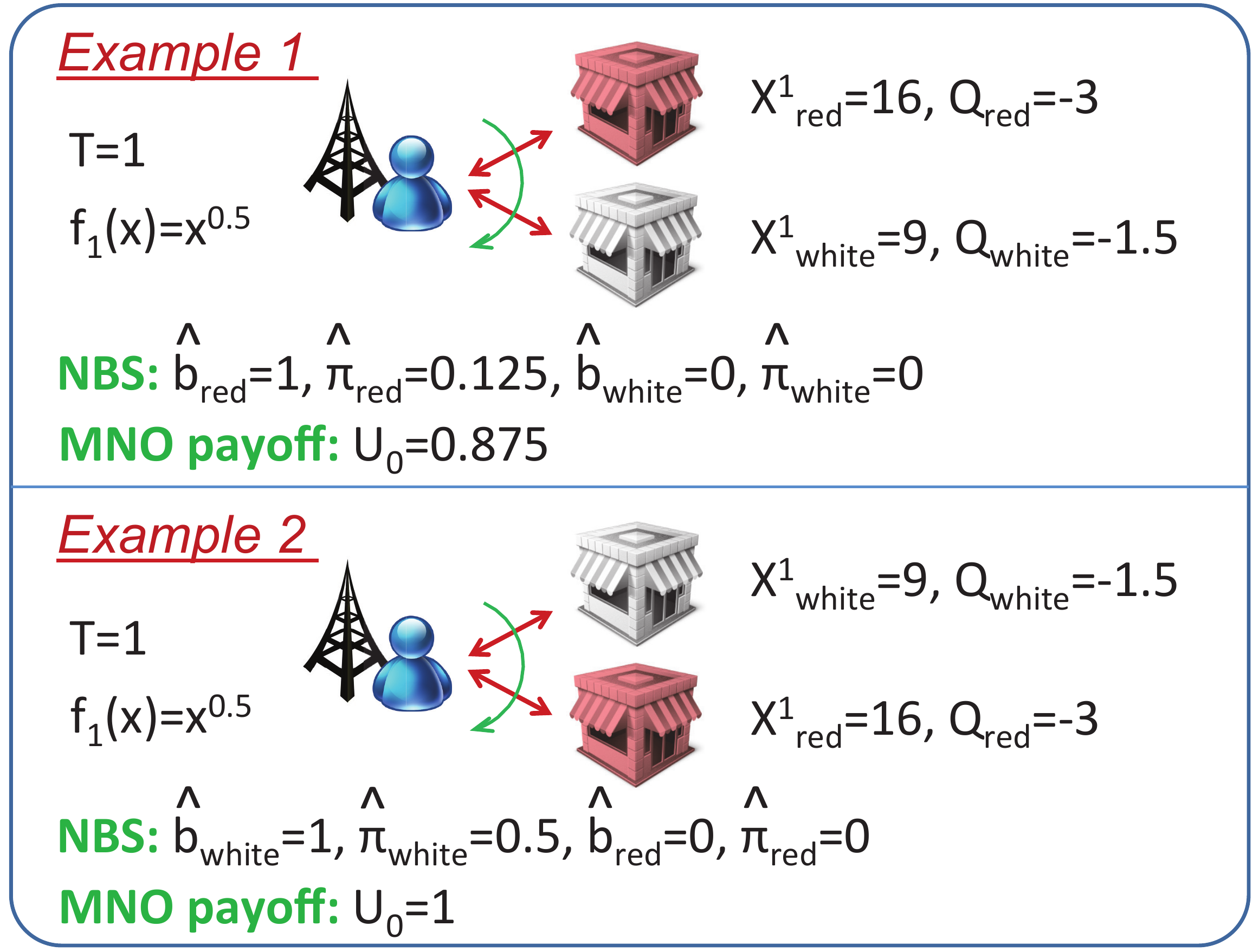}
  \centering
  \caption{Influence of Bargaining Sequence on MNO's Payoff.}
  \label{figureb}
\end{figure}

\begin{example}
The MNO first bargains with VO {red} and then bargains with VO {white}. We apply the backward induction and start the analysis from step 2. We first consider the case where the MNO reaches an agreement with VO {red} in step 1. By taking $N=2$ and $b_1=1$ in (\ref{Njude}), we have ${\Delta _2}\left( 1 \right) =  \Psi \left( {1,1} \right) - \Psi \left( {1,0} \right) =\sqrt {16+9}  - \sqrt 16  - 1.5 < 0$. Hence, we obtain from (\ref{solutionN}) that ${b_2^*}\left( 1 \right) = 0, {\pi_2^*}\left( 1 \right) = 0$, {i.e.}, the MNO does not cooperate with VO {white} in this case. We further consider the case where the MNO does not reach an agreement with VO {red} in step 1. By taking $N=2$ and $b_1=0$ in (\ref{Njude}), we have ${\Delta _2}\left( 0 \right) =  \Psi \left( {0,1} \right) - \Psi \left( {0,0} \right) =\sqrt {0+9}-\sqrt{0}  - 1.5=1.5 > 0$. Hence, we obtain from (\ref{solutionN}) that ${b_2^*}\left( 0 \right) = 1, {\pi_2^*}\left( 0 \right) = \frac{1}{2}{\Delta _2}\left( 0 \right)=0.75$, {i.e.}, the MNO cooperates with VO {white} in this case, and VO {white}'s payoff is $0.75$.

Next we come to the analysis of step 1, where the MNO bargains with VO {red}. Based on ${b_2^*}\left( 1 \right)$, ${\pi_2^*}\left( 1 \right)$, ${b_2^*}\left( 0 \right)$, and ${\pi_2^*}\left( 0 \right)$, we take $N=2$ in (\ref{equ:delta1}) and compute ${\Delta _1}$ as
\begin{align}
\nonumber
{\Delta _1}&=\Psi\left(1,b_2^*\left(1\right)\right)-\pi_2^*\left(1\right)-\Psi\left(0,b_2^*\left(0\right)\right)+\pi_2^*\left(0\right)\\
\nonumber
& =\left(\sqrt {16 } -3\right) -0 - \left(\sqrt {9} -1.5\right) + 0.75\\
&=0.25.
\end{align}
Since ${\Delta _1}>0$, based on (\ref{solution1}), we have ${b_1^*} = 1$ and ${\pi_1^*} =\frac{1}{2}\Delta _1= 0.125$. Therefore, the eventual bargaining outcome is ${{\hat b}_1} = 1$, ${{\hat b}_2} = 0$, ${{\hat \pi}_1} = 0.125$, and ${{\hat \pi}_2} = 0$. The MNO's eventual payoff is ${U_0} = \sqrt 16  - 3 - 0.125 = 0.875$.
\end{example}

\begin{example}
The MNO first bargains with VO {white} and then bargains with VO {red}. We start the analysis from step 2. We first consider the case that the MNO reaches an agreement with VO {white} in step 1, we have ${\Delta _2}\left( 1 \right) =  \Psi \left( {1,1} \right) - \Psi \left( {1,0} \right) =\sqrt {9+16}  - \sqrt 9  - 3 < 0$. Hence, we obtain ${b_2^*}\left( 1 \right) = 0, {\pi_2^*}\left( 1 \right) = 0$, {i.e.}, the MNO does not cooperate with VO {red} in this case. We further consider the case that the MNO does not reach an agreement with VO {white} in step 1, we have ${\Delta _2}\left( 0 \right) =   \Psi \left( {0,1} \right) - \Psi \left( {0,0} \right) = \sqrt {0+16}- \sqrt {0}  - 3 =1> 0$. Hence, we obtain ${b_2^*}\left( 0 \right) = 1, {\pi_2^*}\left( 0 \right) = \frac{1}{2}{\Delta _2}\left( 0 \right)=0.5$, {i.e.}, the MNO cooperates with VO {red} in this case, and VO {red}'s payoff is $0.5$.

Next we come to the analysis of step 1, where the MNO bargains with VO {white}. Based on ${b_2^*}\left( 1 \right)$, ${\pi_2^*}\left( 1 \right)$, ${b_2^*}\left( 0 \right)$, and ${\pi_2^*}\left( 0 \right)$, from (\ref{equ:delta1}), we can compute ${\Delta _1}$ as
\begin{align}
\nonumber
{\Delta _1}&=\Psi\left(1,b_2^*\left(1\right)\right)-\pi_2^*\left(1\right)-\Psi\left(0,b_2^*\left(0\right)\right)+\pi_2^*\left(0\right)\\
\nonumber
& =\left(\sqrt {9 } -1.5\right) -0 - \left(\sqrt {16} -3\right) + 0.5\\
&=1.
\end{align}
Since ${\Delta _1}>0$, we have ${b_1^*} = 1$ and ${\pi_1^*} =\frac{1}{2}\Delta _1= 0.5$. Therefore, the eventual bargaining outcome is ${{\hat b}_1} = 1$, ${{\hat b}_2} = 0$, ${{\hat \pi}_1} = 0.5$, and ${{\hat \pi}_2} = 0$. The MNO's eventual payoff is ${U_0} = \sqrt 9  - 1.5 - 0.5 = 1$.
\end{example}

Comparing Example 1 and Example 2, we find that the MNO obtains different payoffs under different bargaining sequences. Through exchanging the bargaining positions of the two VOs (\emph{red} and \emph{white}), the MNO's payoff ${U}_0$ improves from $0.875$ to $1$.
{{This is due to the cooperation cost and the externality between the two bargaining steps. In our problem, the cooperation cost is the cost of deploying and operating Wi-Fi, which is denoted by $C_n$ and has been included in $Q_n$ based on (\ref{Qn}). Because of the cooperation cost, the MNO may not choose to cooperate with all VOs.{\footnote{{{As we will discuss in Section \ref{SpecialA}, references \cite{gao2014bargaining} and \cite{moresi2008model} did not consider the cooperation cost, in which case the buyer's payoff is independent of the bargaining sequence.}}}} Moreover, the externality couples the analysis of the two bargaining steps, and makes the bargaining results dependent on the bargaining sequence.}}

\subsection{Optimal Sequencing Problem}\label{Section5A}
We use $\bm l=\left(l_1,l_2,\ldots,l_N\right)$ to denote the bargaining sequence, \emph{i.e.,} the MNO bargains with VO $l_n\in\cal N$ at step $n$. We further define $\cal L$ as the set of all possible bargaining sequences:
\[{\cal L} \triangleq \left\{ {{\bm l}\!:\!{l_i},{l_j}\! \in\! {\cal N} \!{\rm ~and~}\! {l_i} \ne {l_j},\forall i \ne j,i,j \in {\cal N}} \right\}.\]
We use $U_0^{\bm l}$ to denote the MNO's payoff in (\ref{MNOpayoffsequential}) under bargaining sequence ${\bm l}\in\cal L$. The MNO's optimal sequencing problem is

\begin{align}
\mathop {\max }\limits_{{\bm l}\in\cal L} U_0^{\bm l},\label{optimalsequencing}
\end{align}
\emph{i.e.,} choosing the optimal sequence ${\bm l}^*$ to maximize its payoff.

To solve (\ref{optimalsequencing}), we may apply the exhaustive search to compute the MNO's payoff for each ${\bm l}\in\cal L$ and determine $\bm l^*$ accordingly. Since $\left|{\cal L}\right|=N!$, the computational complexity of this method is high. In the next section, we prove an important structural property for the one-to-many bargaining, which allows us to design an \emph{Optimal VO Bargaining Sequencing} (OVBS) algorithm with a significantly lower complexity.

\subsection{Structural Property and OVBS Algorithm}\label{optimalsequencingalgorithm}
We categorize VOs into three types:
\begin{definition}\label{definition1}
VO $n\in\cal{N}$ belongs to

(i) {Type $1$}, if ${Q_n} \ge 0$;

(ii) {Type $2$}, if ${Q_n} < 0$ and $\sum_{t=1}^{T} {f_t\left( {{X_n^t}} \right)} + {Q_n} \ge 0$;

(iii) {Type $3$}, if $\sum_{t=1}^{T} {f_t\left( {{X_n^t}} \right)} + {Q_n} < 0$.{\footnote{Notice that since $\sum_{t=1}^{T} {f_t\left( {{X_n^t}} \right)}\ge0$, condition $\sum_{t=1}^{T} {f_t\left( {{X_n^t}} \right)} + {Q_n} < 0$ implies that $Q_n<0$ for type $3$ VOs.}}
\end{definition}

Recall that $Q_n$ is the net benefit of deploying Wi-Fi at venue $n$ without considering the data offloading benefit. Term $\sum_{t=1}^{T} {f_t\left( {{X_n^t}} \right)}$ is the offloading benefit brought by deploying Wi-Fi at venue $n$ when the MNO does not deploy Wi-Fi at other venues. Since function $f_t\left(\cdot\right)$ is concave for all $t=1,2,\ldots,T$, term $\sum_{t=1}^{T} {f_t\left( {{X_n^t}} \right)}$ can also be understood as the maximum possible offloading benefit brought by deploying Wi-Fi at venue $n$.

Based on the definition of the social welfare (\ref{socialwelfare}), the categorization in Definition \ref{definition1} can be understood as follows:
\begin{itemize}
\item For type $1$ VO $n$, its cooperation with the MNO does not decrease the social welfare, \emph{i.e.}, $\Psi \left(  { {{b_1}, \ldots ,{b_{n - 1}},1,{b_{n + 1}}, \ldots ,{b_N}} }  \right)  \ge  \Psi \left(  {b_1}, \ldots ,{b_{n  -  1}},0,{b_{n  +  1}}, \ldots ,{b_N} \right)$ for all $\left(b_1,\ldots,\!b_{n-1},\!\right.$\\$\left.b_{n+1},\ldots,b_N\right)$;
\item For type $2$ VO $n$, its cooperation with the MNO may or may not decrease the social welfare, which depends on other VOs' parameters and bargaining positions;
\item For type $3$ VO $n$, its cooperation with the MNO decreases the social welfare, \emph{i.e.}, $\Psi \left(  { {{b_1}, \ldots ,{b_{n - 1}},1,{b_{n + 1}}, \ldots ,{b_N}} }  \right)  <  \Psi \left(  {b_1}, \ldots ,{b_{n  -  1}}, \right.$\\$\left. 0,{b_{n  +  1}}, \ldots ,{b_N} \right)$ for all $\left(b_1,\ldots,b_{n-1},b_{n+1},\ldots,b_N\right)$.
\end{itemize}
We assume that the number of each type of VOs is ${N_1}$, $N_2$, and $N_3$, respectively, with ${N_1} + {N_2} + {N_3} = N$. We have the following propositions.
\begin{proposition}\label{proposition1}
The MNO will always cooperate with a type $1$ VO, regardless of such a VO's position in the bargaining sequence.
\end{proposition}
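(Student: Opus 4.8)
The plan is to reduce the statement to a monotonicity property of the bargaining recursion. By the NBS characterization in (\ref{solutionk}), the MNO cooperates with the VO at position $k$ exactly when $\Delta_k\left(\bm{b}_{k-1}\right)\ge 0$. So it suffices to show that, whenever the VO at position $k$ is of type $1$, we have $\Delta_k\left(\bm{b}_{k-1}\right)\ge 0$ for \emph{every} possible history $\bm{b}_{k-1}$ and every position $k$; this simultaneously handles all bargaining sequences, since position and history are arbitrary.

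To organize the argument I would introduce the auxiliary \emph{coalition value}
\[
\Phi_k\left(\bm{b}_k\right)\triangleq \Psi\left(B_N^k\left(\bm{b}_k\right)\right)-\sum_{m=k+1}^{N}\pi_m^*\left(B_{m-1}^k\left(\bm{b}_k\right)\right),
\]
the combined payoff of the MNO and the first $k$ VOs after the continuation subgame with VOs $k+1,\ldots,N$ has been resolved by backward induction. Comparing with (\ref{kjude}) gives immediately $\Delta_k\left(\bm{b}_{k-1}\right)=\Phi_k\left(\bm{b}_{k-1},1\right)-\Phi_k\left(\bm{b}_{k-1},0\right)$, so the goal becomes showing that $\Phi_k$ is non-decreasing in its last coordinate when that coordinate corresponds to a type $1$ VO.

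The key step is a clean backward recursion for $\Phi_k$. Using (\ref{solutionk}) and the definition of $B_m^s$ in (\ref{equ:Bfunction}), I would first establish $\Phi_k\left(\bm{b}_k\right)=\Phi_{k+1}\left(\bm{b}_k,b_{k+1}^*\left(\bm{b}_k\right)\right)-\pi_{k+1}^*\left(\bm{b}_k\right)$, and then substitute the two branches of the NBS to collapse this into
\[
\Phi_k\left(\bm{b}_k\right)=\max\left\{\Phi_{k+1}\left(\bm{b}_k,0\right),\tfrac{1}{2}\Phi_{k+1}\left(\bm{b}_k,1\right)+\tfrac{1}{2}\Phi_{k+1}\left(\bm{b}_k,0\right)\right\},
\]
with base case $\Phi_N=\Psi$. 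With this recursion I would prove, by backward induction on $m$ from $N$ down to $k$, that $\Phi_m$ is non-decreasing in its $k$-th coordinate. The base case $m=N$ is exactly the type $1$ property recalled just before the proposition: deploying Wi-Fi at a type $1$ venue never decreases $\Psi$, because $f_t$ is increasing and $Q_n\ge 0$. For the inductive step, flipping the $k$-th coordinate (which is distinct from the appended coordinate since $k\le m<m+1$) raises both $\Phi_{m+1}\left(\cdot,0\right)$ and $\Phi_{m+1}\left(\cdot,1\right)$ by the hypothesis; since the map $\left(A,B\right)\mapsto\max\{A,\tfrac{1}{2}(A+B)\}$ is non-decreasing in each argument, $\Phi_m$ inherits the monotonicity. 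Taking $m=k$ yields $\Phi_k\left(\bm{b}_{k-1},1\right)\ge\Phi_k\left(\bm{b}_{k-1},0\right)$, i.e. $\Delta_k\left(\bm{b}_{k-1}\right)\ge 0$, for every history and position, which is the claim.

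I expect the main obstacle to be establishing the closed-form recursion for $\Phi_k$ correctly: one must carefully unwind the nested definitions of $B_m^s$ and $\pi_m^*$, confirm that the later VOs' payoffs telescope as claimed, and verify that the two cases of (\ref{solutionk}) merge into a single monotonicity-preserving $\max$ operator. Once that recursion is isolated, the inductive monotonicity argument is short and robust, and the only external input it needs is the type $1$ property of $\Psi$.
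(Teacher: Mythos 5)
Your proof is correct and is essentially the paper's own argument in lightly different notation: your $\Phi_k\left({\bm b}_k\right)$ equals the paper's $W_k\left({\bm b}_k\right)+\sum_{n=1}^{k}b_nQ_n$, your collapsed $\max$ recursion is exactly Lemma \ref{lemma1}, and your backward-induction monotonicity in the $k$-th coordinate mirrors the proof of Lemma \ref{lemma3}. The only cosmetic difference is bookkeeping: the paper proves monotonicity of $W_k$ in the aggregate offloaded traffic (using only that $f_t$ is increasing) and adds $Q_k\ge 0$ at the final step via ${\Delta_k}=W_k\left({\bm b}_{k-1},1\right)-W_k\left({\bm b}_{k-1},0\right)+Q_k$, whereas you absorb $Q_k\ge 0$ into the base case $\Phi_N=\Psi$ of the same induction.
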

\begin{proposition}\label{proposition2}
The MNO will never cooperate with a type $3$ VO, regardless of such a VO's position in the bargaining sequence.
\end{proposition}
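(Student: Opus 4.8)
The plan is to show that for a type~$3$ VO occupying \emph{any} position $k$ of the sequence, the quantity $\Delta_k(\bm b_{k-1})$ that governs the NBS at step $k$ is strictly negative for \emph{every} reachable history $\bm b_{k-1}$. By (\ref{solutionk}), together with its boundary analogues (\ref{solutionN}) and (\ref{solution1}), $\Delta_k<0$ forces $b_k^*(\bm b_{k-1})=0$, i.e.\ no agreement, irrespective of where the VO sits. The difficulty is that $\Delta_k$ in (\ref{kjude}) already folds in the optimal responses $\pi_m^*$ of all downstream VOs $m>k$, so its sign cannot be read off directly from Definition~\ref{definition1}; the negative externality couples the steps.

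First I would introduce the \emph{continuation value} $W_k(\bm b_k)\triangleq \Psi\!\left(B_N^k(\bm b_k)\right)-\sum_{m=k+1}^N \pi_m^*\!\left(B_{m-1}^k(\bm b_k)\right)$, the realized social welfare net of the payoffs conceded to the downstream VOs. Matching this against (\ref{kjude}) gives at once $\Delta_k(\bm b_{k-1})=W_k(\bm b_{k-1},1)-W_k(\bm b_{k-1},0)$, so the whole problem reduces to understanding $W_k$. Peeling off the step-$(k+1)$ outcome in the recursion (\ref{equ:Bfunction}) and inserting the half-split payoff $\pi_{k+1}^*=\tfrac12\max(\Delta_{k+1},0)$ from (\ref{solutionk}), I would obtain the clean backward recursion $W_k(\bm b_k)=\max\!\big(\tfrac12[W_{k+1}(\bm b_k,1)+W_{k+1}(\bm b_k,0)],\,W_{k+1}(\bm b_k,0)\big)$, with base case $W_N(\bm b_N)=\Psi(\bm b_N)$. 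Because payments cancel in $\Psi$ and the downstream VOs react only to the aggregate offloaded traffic, $W_k$ separates as $W_k(\bm b_k)=\sum_{m\le k}b_m Q_m+\tilde W_k(\bm y_k)$, where $\bm y_k=\left(\sum_{m\le k}b_m X_m^t\right)_t$ is the deployed-traffic vector and $\tilde W_k$ depends on $\bm b_k$ only through $\bm y_k$. This yields the key identity $\Delta_k(\bm b_{k-1})=Q_k+\tilde W_k(\bm y_{k-1}+\bm X_k)-\tilde W_k(\bm y_{k-1})$.

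The heart of the argument is then a backward induction on $k$ proving the invariant that extra deployed traffic raises $\tilde W_k$ by no more than its stand-alone offloading benefit: $\tilde W_k(\bm y+\bm X)-\tilde W_k(\bm y)\le \sum_{t=1}^T f_t(X^t)$ for all $\bm y,\bm X\ge 0$. The base case $\tilde W_N(\bm y)=\sum_t f_t(y^t)$ is immediate from concavity and $f_t(0)=0$. For the step I would write the recursion for $\tilde W_k$ as a pointwise maximum of the two functions $\phi_1(\bm y)=\tfrac12[Q_{k+1}+\tilde W_{k+1}(\bm y+\bm X_{k+1})+\tilde W_{k+1}(\bm y)]$ and $\phi_2(\bm y)=\tilde W_{k+1}(\bm y)$, then verify that the invariant passes to the average (apply the hypothesis to each summand, the additive constant $Q_{k+1}$ dropping out) and, crucially, to the maximum $\Phi=\max(\phi_1,\phi_2)$: if the maximum at $\bm y+\bm X$ is attained by branch $\phi_j$, then $\Phi(\bm y+\bm X)-\Phi(\bm y)\le \phi_j(\bm y+\bm X)-\phi_j(\bm y)\le\sum_t f_t(X^t)$ since $\Phi(\bm y)\ge\phi_j(\bm y)$. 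Feeding the invariant into the identity above gives $\Delta_k(\bm b_{k-1})\le Q_k+\sum_t f_t(X_k^t)<0$ for a type~$3$ VO, which finishes the proof.

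I expect the main obstacle to be identifying the \emph{correct} invariant to carry through the induction. The tempting choice---concavity (or decreasing differences) of $\tilde W_k$---fails, because the agree/disagree dichotomy of the NBS produces a pointwise maximum, and maxima do not preserve concavity. The rescuing observation is to weaken concavity to the one-sided ``bounded increment'' estimate $\tilde W_k(\bm y+\bm X)-\tilde W_k(\bm y)\le \sum_t f_t(X^t)$, which is exactly strong enough to reproduce the type classification of Definition~\ref{definition1} and, being an upper bound, \emph{is} stable under both averaging and maximization. The remaining work---checking the continuation recursion for $W_k$ and its traffic-only separation---is bookkeeping against the notation of (\ref{equ:Bfunction})--(\ref{kjude}).
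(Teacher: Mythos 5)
Your proof is correct, and it reaches the paper's conclusion by the same underlying mechanism --- backward induction on a continuation value, seeded by the subadditivity $f_t(y+X)-f_t(y)\le f_t(X)$ that concavity and $f_t(0)=0$ provide --- but the packaging is genuinely different. The paper defines essentially your $W_k$ (with the $\sum_m b_m Q_m$ terms split off), verifies the strict gap $W_N(\bm{b}_{k-1},0,\bm{c})-Q_k>W_N(\bm{b}_{k-1},1,\bm{c})$ at the leaf level for the relevant suffixes $\bm{c}$, and then propagates that fixed gap $\delta=-Q_k$ backward along the one pair of histories differing in slot $k$, using its gap-preservation Lemma \ref{lemma4} (itself a consequence of the recursion in Lemma \ref{lemma1}, which is your max-form recursion after relocating the $Q$ terms). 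You instead prove a \emph{uniform} bounded-increment invariant, $\tilde W_k(\bm{y}+\bm{X})-\tilde W_k(\bm{y})\le\sum_t f_t(X^t)$ for all base points and increments, after first reducing the state to aggregate traffic --- a separation step the paper's Proposition \ref{proposition2} proof never needs (its history-pairwise argument sidesteps aggregation, which the paper isolates separately as Lemma \ref{lemma2}). What your route buys: the invariant is established once, is position- and history-independent, handles in one stroke all leaf continuations (the paper's write-up explicitly checks only two suffix patterns and compresses the rest into ``repeating these processes''), and your observation that the one-sided increment bound --- unlike concavity --- survives pointwise maximization is exactly the same structural fact the paper encodes as Lemma \ref{lemma4}, just stated in a cleaner, reusable form. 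What the paper's route buys: it avoids the extra bookkeeping of proving the traffic-only separation of $W_k$ and works with the minimal statement needed for this proposition. Both arguments correctly yield $\Delta_k(\bm{b}_{k-1})\le Q_k+\sum_t f_t(X_k^t)<0$ for every reachable history and hence $b_k^*(\bm{b}_{k-1})=0$ via (\ref{solutionk}); no gap in your proposal.
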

\begin{proposition}\label{proposition3}
If the bargaining sequence follows $1,2,\ldots,N$, and VO $k$ belongs to type $1$, where $k\in\left\{2,3,\ldots,N\right\}$, the MNO's payoff does not decrease after exchanging VOs $k-1$ and $k$'s bargaining positions.
\end{proposition}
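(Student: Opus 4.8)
The plan is to reduce the proposition to a purely local comparison of two adjacent steps by introducing a recursively defined potential that tracks the MNO's continuation payoff. For each $n$ I would define
\[
M_n\left(\bm b_n\right) \triangleq \Psi\left(B_N^n\left(\bm b_n\right)\right) - \sum_{m=n+1}^N \pi_m^*\left(B_{m-1}^n\left(\bm b_n\right)\right),
\]
i.e.\ the final social welfare minus the payoffs of the not-yet-processed VOs, which equals the MNO's eventual payoff plus the payments already committed to the first $n$ VOs. Using the NBS in (\ref{solutionk})--(\ref{kjude}) one verifies directly the backward recursion
\[
M_{n-1}\left(\bm b_{n-1}\right) = M_n\left(\bm b_{n-1},0\right) + \tfrac{1}{2}\left[M_n\left(\bm b_{n-1},1\right) - M_n\left(\bm b_{n-1},0\right)\right]^+,
\]
where $\left[x\right]^+\triangleq\max\left\{x,0\right\}$, with terminal value $M_N=\Psi$ and $M_0=U_0$ equal to the MNO's payoff in (\ref{MNOpayoffsequential}). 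The recursion simply states that the MNO retains its disagreement value plus half of any nonnegative surplus.

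The key structural observation is that $M_k$, regarded as a function of the aggregate state created by the first $k$ positions (the offloaded-traffic vector $\sum_n b_n\bm X_n$ and the aggregate net benefit $\sum_n b_n Q_n$), is \emph{identical} for the original sequence and for the sequence obtained by swapping VOs $k-1$ and $k$. Indeed, the recursion from $M_N$ down to $M_k$ only processes positions $k+1,\ldots,N$, which carry the same VOs under both orderings, while $\Psi$ in (\ref{socialwelfare}) depends only on the \emph{set} of cooperating VOs and not on their order. Hence both sequences share one common function $M_k$, and it suffices to compare how the two orderings of VOs $k-1$ and $k$ turn $M_k$ into $M_{k-2}$.

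For this local step I would abbreviate the four relevant values of $M_k$ --- the base state with neither, with only VO $k-1$, with only VO $k$, and with both cooperating --- as $P,A,B,D$. Since VO $k$ is type $1$, Proposition \ref{proposition1} ensures it always cooperates, which here reads $B\ge P$ and $D\ge A$ and lets me discard the $\left[\cdot\right]^+$ on every step that processes VO $k$. Applying the recursion twice in each order then gives, after a short computation,
\[
M_{k-2}^{\mathrm{new}} - M_{k-2}^{\mathrm{old}} = \tfrac{1}{4}\left(\left[D-B\right]^+ + \left[A-P\right]^+ - \left[\left(D-B\right)+\left(A-P\right)\right]^+\right)\ge 0,
\]
the inequality being the subadditivity $\left[x\right]^++\left[y\right]^+\ge\left[x+y\right]^+$. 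Thus $M_{k-2}^{\mathrm{new}}\ge M_{k-2}^{\mathrm{old}}$ for every base state.

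Finally I would propagate this inequality up to $M_0$. Because positions $1,\ldots,k-2$ carry the same VOs under both orderings, $M_0$ is obtained from $M_{k-2}$ by repeatedly applying the single operator $\mathcal{T}f = \max\left\{f\left(\cdot,0\right),\tfrac{1}{2}\left[f\left(\cdot,0\right)+f\left(\cdot,1\right)\right]\right\}$, which is monotone in $f$; a pointwise inequality at level $k-2$ therefore yields $M_0^{\mathrm{new}}\ge M_0^{\mathrm{old}}$, i.e.\ the MNO's payoff does not decrease. I expect the main obstacle to be the bookkeeping behind the recursion for $M_n$ and, above all, the order-invariance of $M_k$: one must argue carefully that the continuation depends on the bargaining history only through the aggregate state, so that a single function $M_k$ serves both sequences. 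Once that invariance is secured, the surplus comparison and the monotone propagation are elementary.
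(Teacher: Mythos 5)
Your proposal is correct and is essentially the paper's own proof in lightly reparametrized form: your potential $M_n$ equals the paper's $W_n$ from (\ref{complicated}) plus $\sum_{m\le n}b_m Q_m$, your backward recursion is Lemma \ref{lemma1}, the order-invariance through the aggregate state is Lemma \ref{lemma2}, and the monotone propagation from the pointwise inequality at level $k-2$ up to $M_0=U_0$ is Lemma \ref{lemma6}. Your local four-state comparison --- resolving the maxima via the type-$1$ property (the paper does this through Lemma \ref{lemma3} together with $Q_k\ge 0$, which is the proof of Proposition \ref{proposition1}) and concluding by the subadditivity $[x]^+ + [y]^+ \ge [x+y]^+$ --- is exactly the computation in Lemma \ref{lemma5}, whose final ``easy to check'' step is that same inequality written as a difference of nested maxima.
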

\begin{proposition}\label{proposition4}
If the bargaining sequence follows $1,2,\ldots,N$, and VO $k$ belongs to type $3$, where $k\in\left\{2,3,\ldots,N\right\}$, the MNO's payoff does not change after exchanging VOs $k-1$ and $k$'s bargaining positions.
\end{proposition}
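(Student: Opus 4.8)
The plan is to exploit Proposition \ref{proposition2}, which guarantees that a type $3$ VO is never cooperated with, and to argue that such a VO therefore behaves as an \emph{inert} bargaining step whose position is immaterial. Write $d$ for the VO occupying position $k-1$ and $c$ for the type $3$ VO occupying position $k$ in the original sequence; after the exchange, $c$ sits at position $k-1$ and $d$ at position $k$, while the relative order of every other VO is unchanged. By Proposition \ref{proposition2}, regardless of its position and of the incoming outcomes, VO $c$ always settles at $\left(b_c,\pi_c\right)=\left(0,0\right)$ in both sequences.

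The key step is an \emph{inertness lemma}: a bargaining step that yields zero deployment and zero payoff changes nothing for the other participants. First I would observe that, by the definition of the social welfare in (\ref{socialwelfare}), $\Psi$ depends on an outcome vector only through the aggregates $\sum_n b_n X_n^t$ and $\sum_n b_n Q_n$; hence it is invariant to the ordering of its entries and, in particular, unaffected by inserting or deleting an entry equal to $0$. Since the backward-induction quantities $b_m^*$, $\pi_m^*$, and $\Delta_m$ in (\ref{solutionk})--(\ref{kjude}) are built entirely from $\Psi$ evaluated on such vectors and from the later steps' payoffs $\pi_m^*$, a step contributing $b_c=0$ and $\pi_c=0$ neither alters the effective state passed to subsequent steps nor contributes to any $\Delta_m$ or to any payoff sum. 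I would formalise this by backward induction from step $N$ down to step $1$, showing that deleting VO $c$ from the sequence (and shifting the later positions up by one) leaves the deployment decision and payoff of every remaining VO unchanged, while VO $c$ itself contributes $\left(0,0\right)$.

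Granting the inertness lemma, the conclusion follows quickly. Deleting the inert VO $c$ from the original sequence yields the reduced sequence $\left(1,\ldots,k-1,k+1,\ldots,N\right)$, and deleting $c$ from its new position $k-1$ in the exchanged sequence yields exactly the same reduced sequence, since removing $c$ restores the identical relative order $\left(1,\ldots,k-2,d,k+1,\ldots,N\right)$ of all non-$c$ VOs. Hence both sequences induce the same set of deployed venues and the same profile of VO payoffs. Because the MNO's payoff in (\ref{MNOpayoffsequential}) equals $\Psi\left(\hat{\bm b}_N\right)-\sum_{n}\hat\pi_n$, and both terms depend only on which venues are deployed and on the VO payoffs, the MNO's payoff is identical before and after the exchange, which is the desired equality.

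The main obstacle I anticipate is making the inertness argument fully rigorous despite the index shift: the recursive definitions (\ref{equ:Bfunction}) and (\ref{kjude}) are written for the fixed labelling $1,\ldots,N$, so one must track carefully how deleting position $k$ reindexes the remaining steps and verify that $B_m^s$ and the resulting $\Delta_m$ transform consistently. The clean way around this is precisely to note that all of these objects factor through $\Psi$, and that $\Psi$ sees only the unordered collection of deployed venues; the bookkeeping then reduces to checking that the multiset of deployed VOs and the list of positive payoffs are preserved, which the backward induction delivers.
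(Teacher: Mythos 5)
Your proposal is correct, but it takes a genuinely different route from the paper. The paper works entirely inside its value-function machinery: it defines $W_k\left({\bm b}_k\right)$ via the recursion of Lemma \ref{lemma1}, proves in Lemma \ref{lemma:add1} (by expanding two levels of the recursion, invoking Lemma \ref{lemma2} for the cross-identifications $B=\tilde C$, $C=\tilde B$, and using the strict inequalities $A>B+Q_{k+2}$, $C>D+Q_{k+2}$ from the proof of Proposition \ref{proposition2}) that the adjacent swap leaves $W_k$ exactly unchanged, and then propagates this equality down to $W_0=U_0$ via Lemma \ref{lemma6}. You instead prove an \emph{inertness/deletion} lemma: since the proof of Proposition \ref{proposition2} establishes $\Delta<0$ for \emph{every} incoming outcome vector (not merely on the equilibrium path — a point you correctly flag by saying ``regardless of the incoming outcomes,'' and which is essential because the disagreement points $U_n^0$ of earlier steps involve counterfactual continuations through the type-$3$ step), the type-$3$ VO contributes $\left(0,0\right)$ in every branch of the backward induction, and because all quantities in (\ref{solutionk})--(\ref{kjude}) factor through $\Psi$, which sees only the aggregates $\sum_n b_n X_n^t$ and $\sum_n b_n Q_n$, the VO can be deleted from the sequence without affecting anyone else; both the original and swapped sequences then reduce to the identical $\left(N-1\right)$-VO sequence. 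Your route is arguably stronger: it shows at once that a type-$3$ VO can be placed \emph{anywhere} in the sequence without changing the MNO's payoff, which the paper only obtains later by iterating adjacent swaps in the proof of Theorem \ref{theoremA}, and it avoids the case analysis over the $\max$ operators. What the paper's approach buys in exchange is reusability: the swap-plus-propagation framework (Lemma \ref{lemma6}) also handles the inequality case of Proposition \ref{proposition3} for type-$1$ VOs, where a type-$1$ VO is always cooperated with and a deletion argument would not apply, whereas your inertness lemma is specific to the exact-equality, never-cooperate situation. The one piece you leave as a sketch — the reindexing bookkeeping for $B_m^s$ after deletion — is handled exactly as you suggest, by noting that the recursion depends on the outcome history only through the set of deployed venues, so no genuine gap remains.
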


Now we are ready to state our main theorem, which describes the structural property of the optimal bargaining sequence.

\begin{theorem}\label{theoremA}
There exists a non-empty set of optimal bargaining sequences $\mathcal{L}^{\ast} \subseteq \mathcal{L}$, such that any $\boldsymbol{l} \in \mathcal{L}^\ast$ satisfies both of the following two conditions:{\footnote{Naturally, VO ${l_{N_1+1}},{l_{N_1+2}}, \ldots ,{l_{{N_1+N_2}}}$ are of type 2 when these two conditions are satisfied.}}

(i) VO ${l_1},{l_2}, \ldots ,{l_{{N_1}}}$ are of type 1;

(ii) VO ${l_{N_1+N_2+1}},{l_{N_1+N_2+2}},\ldots ,{l_{{N}}}$ are of type 3.

For any optimal sequence $\boldsymbol{l} \in \mathcal{L}^{\ast}$,

(i) if the MNO interchanges the bargaining positions of any two type 1 VOs, the MNO's payoff will not change;

(ii) if the MNO interchanges the bargaining positions of any two type 3 VOs, the MNO's payoff will not change.
\end{theorem}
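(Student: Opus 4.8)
The plan is to prove Theorem~\ref{theoremA} by a \emph{bubble-sort} argument that uses only adjacent transpositions of neighboring VOs and rests on Propositions~\ref{proposition3} and~\ref{proposition4}. Since $\mathcal{L}$ is finite, an optimal sequence exists; I fix one, say $\boldsymbol{l}^{\circ}$, and transform it into a sequence of the required form through adjacent swaps, each of which does not decrease the payoff and hence, by the optimality of $\boldsymbol{l}^{\circ}$, leaves the payoff unchanged. I will first establish existence of an optimal sequence with the stated block structure, and then prove the interchange invariance inside each block.

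For the existence part I would proceed in two phases. In Phase~1, as long as some type~1 VO sits at a position $k$ immediately preceded by a non-type-1 VO at position $k-1$, I swap positions $k-1$ and $k$; by Proposition~\ref{proposition3} this does not decrease the payoff, and since $\boldsymbol{l}^{\circ}$ is optimal the payoff cannot strictly increase, so the sequence stays optimal. Each such swap strictly lowers the number of pairs in which a type~1 VO is preceded by a non-type-1 VO, so Phase~1 terminates with all $N_1$ type~1 VOs occupying positions $1,\ldots,N_1$. In Phase~2, working inside the block of positions $N_1+1,\ldots,N$, whenever a type~3 VO at position $k-1$ is immediately followed by a non-type-3 VO at position $k$ (so $k-1\ge N_1+1$), I swap them. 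Because a swap is its own inverse and Proposition~\ref{proposition4} guarantees payoff invariance whenever the VO at position $k$ is of type~3, a swap of positions $k-1$ and $k$ preserves the payoff whenever either of the two involved VOs is of type~3; moreover such swaps only ever exchange a type~3 VO with a type~2 VO, leaving the type~1 block untouched. Phase~2 terminates with the order [type~1][type~2][type~3], which satisfies conditions~(i) and~(ii) and is optimal, so $\mathcal{L}^{\ast}$ is non-empty.

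For the invariance part I would fix any $\boldsymbol{l}\in\mathcal{L}^{\ast}$ and use that all type~1 VOs form a contiguous front block and all type~3 VOs a contiguous back block. Interchanging two type~1 (resp.\ type~3) VOs at positions $i<j$ can be realized by the standard decomposition into $2(j-i)-1$ adjacent transpositions, each of which exchanges two VOs of the \emph{same} type lying inside that block. For two type~1 VOs, every adjacent swap has a type~1 VO in the higher position, so Proposition~\ref{proposition3} yields ``payoff does not decrease''; combined with the optimality of the current (still optimal) sequence this forces equality at each step and keeps every intermediate sequence optimal, so the composed interchange leaves the payoff unchanged. For two type~3 VOs the argument is more direct, since each adjacent swap preserves the payoff exactly by Proposition~\ref{proposition4}, with no appeal to optimality needed.

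The main obstacle, and the step I would treat most carefully, is upgrading the ``does not decrease'' conclusion of Proposition~\ref{proposition3} to an \emph{equality} for the type~1 interchanges. This hinges on the fact that each intermediate sequence produced along the adjacent transpositions is itself optimal (because the preceding swap already preserved the optimal payoff), so a further non-decreasing swap cannot exceed the optimum and must be payoff-preserving. I would also make explicit that the two phases do not interfere---Phase~2 swaps live entirely within positions greater than $N_1$---so the final sequence genuinely lies in $\mathcal{L}^{\ast}$, and that the decomposition of a general interchange into adjacent swaps never leaves the relevant type block, so that every transposition invoked is indeed a same-type swap covered by Proposition~\ref{proposition3} or~\ref{proposition4}.
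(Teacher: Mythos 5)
Your proposal is correct and follows essentially the same route as the paper's own proof: start from an arbitrary optimal sequence, rearrange it into the form [type $1$][type $2$][type $3$] through position exchanges justified by Propositions~\ref{proposition3} and~\ref{proposition4} (each exchange preserving optimality), and then derive the within-block interchange invariance from the same two propositions. If anything, you are more explicit than the paper on two points it leaves implicit --- using the fact that a swap is an involution to apply Proposition~\ref{proposition4} in both directions when moving type~$3$ VOs toward the end, and invoking optimality of the current sequence to upgrade Proposition~\ref{proposition3}'s ``does not decrease'' to exact payoff equality.
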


Notice that there may exist some optimal bargaining sequences that are not in set $\mathcal{L}^\ast$. Since our focus is to maximize the MNO's payoff by a properly chosen sequence, we will focus on set $\mathcal{L}^{\ast}$ in the rest of this paper.

\begin{algorithm}[t]
\begin{algorithmic}[1]
\caption{\emph{Optimal VO Bargaining Sequencing} (OVBS)}\label{algorithm2}
\STATE {\underline{\textbf{Phase $1$}: Construct the reduced set ${\cal L}^{R\!E}$}}
\STATE Order all type $1$ VOs arbitrarily, and denote the sequence by a vector $\bm h^1=\left(h_1^1,h_2^1\ldots,h_{N_1}^1\right)$;
\STATE Order all type $3$ VOs arbitrarily, and denote the sequence by a vector $\bm h^3=\left(h_1^3,h_2^3\ldots,h_{N_3}^3\right)$;
\STATE Denote the set of all permutations of type $2$ VOs by set ${\cal H}^2$. Each permutation is denoted by a vector $\bm h^2=\left(h_1^2,h_2^2\ldots,h_{N_2}^2\right)\in{\cal H}^2$.
\STATE Pick every ${\bm h^2}\in{\cal H}^2$ and construct the corresponding total sequencing by ${\bm l}\!=\!\left({\bm h^1},\!{\bm h^2},\!{\bm h^3}\right)$. Denote the set of all such $\bm l$s as ${\cal L}^{R\!E}$.
\STATE {\underline{\textbf{Phase $2$}: Search the optimal sequence}}
\STATE Apply the backward induction and (\ref{MNOpayoffsequential}) in Section \ref{Section4} to compute $U_0^{\bm l}$ for each ${\bm l}\!\in\!{{\cal L}^{R\!E}}$ and return ${\bm l}^{R\!E}\!=\!\mathop {\arg\!\max }_{{\bm l} \in {{\cal L}^{R\!E}}}\! U_0^{\bm l}$.
\end{algorithmic}
\end{algorithm}

Based on Theorem \ref{theoremA}, we propose an \emph{Optimal VO Bargaining Sequencing} (OVBS) algorithm (\emph{i.e.}, Algorithm \ref{algorithm2}), which solves the optimal sequencing problem (\ref{optimalsequencing}) as follows.
\begin{theorem}\label{theoremadd}
The sequence ${\bm l}^{R\!E}$ obtained by OVBS lies in set ${\cal L}^*$. In other words, ${\bm l}^{R\!E}$ is one of the optimal bargaining sequences for problem (\ref{optimalsequencing}).
\end{theorem}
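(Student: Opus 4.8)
The plan is to establish two facts that together pin down $\bm{l}^{R\!E}$. First I would argue that the reduced set ${\cal L}^{R\!E}$ built in Phase~1 of OVBS already contains at least one \emph{globally} optimal sequence, i.e., ${\cal L}^{R\!E} \cap {\cal L}^{\ast} \neq \emptyset$. Since ${\cal L}^{R\!E} \subseteq {\cal L}$ by construction, this forces $\max_{\bm{l} \in {\cal L}^{R\!E}} U_0^{\bm{l}} = \max_{\bm{l} \in {\cal L}} U_0^{\bm{l}}$, so the maximizer $\bm{l}^{R\!E}$ returned in Phase~2 attains the global optimum. Because every element of ${\cal L}^{R\!E}$ is by construction of the form $\left(\bm{h}^1, \bm{h}^2, \bm{h}^3\right)$ --- all type~1 VOs first, type~2 VOs in the middle, type~3 VOs last --- the returned $\bm{l}^{R\!E}$ satisfies conditions (i) and (ii) of Theorem~\ref{theoremA}; being also optimal, it therefore lies in ${\cal L}^{\ast}$.

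The substantive step is showing ${\cal L}^{R\!E} \cap {\cal L}^{\ast} \neq \emptyset$. By Theorem~\ref{theoremA}, ${\cal L}^{\ast}$ is non-empty, so I would fix any $\bm{l}^{\ast} \in {\cal L}^{\ast}$. By conditions (i), (ii) and the accompanying footnote, its first $N_1$ entries are exactly the type~1 VOs, its last $N_3$ entries exactly the type~3 VOs, and its middle $N_2$ entries the type~2 VOs in some order $\bm{h}^{2,\ast} \in {\cal H}^2$. I would then transform $\bm{l}^{\ast}$ into the candidate $\left(\bm{h}^1, \bm{h}^{2,\ast}, \bm{h}^3\right) \in {\cal L}^{R\!E}$ by a finite chain of transpositions: first permuting the type~1 block into the fixed order $\bm{h}^1$ chosen by OVBS, then permuting the type~3 block into $\bm{h}^3$, while leaving the middle block $\bm{h}^{2,\ast}$ untouched. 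The second part of Theorem~\ref{theoremA} guarantees that each intra-type-1 (resp.\ intra-type-3) interchange leaves $U_0$ unchanged, so the endpoint $\left(\bm{h}^1, \bm{h}^{2,\ast}, \bm{h}^3\right)$ has the same (optimal) payoff as $\bm{l}^{\ast}$ and hence lies in ${\cal L}^{\ast} \cap {\cal L}^{R\!E}$.

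The main obstacle, and the point demanding care, is the closure argument underlying this chain. The invariance in the second part of Theorem~\ref{theoremA} is asserted only for sequences \emph{already} in ${\cal L}^{\ast}$, so to apply it repeatedly I must verify that each intermediate sequence stays in ${\cal L}^{\ast}$. Here I would use the reading of ${\cal L}^{\ast}$ as the set of all optimal sequences obeying (i) and (ii): a single intra-type-1 or intra-type-3 transposition preserves optimality (by the invariance just used) and preserves the block structure (it never moves a VO out of its own block, so the type~1 VOs remain in the first $N_1$ positions and the type~3 VOs in the last $N_3$ positions), whence the intermediate sequence is again in ${\cal L}^{\ast}$ and the invariance reapplies. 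An induction on the number of transpositions then closes the chain. With this closure in hand, the two inclusions of the first paragraph combine to yield $\bm{l}^{R\!E} \in {\cal L}^{\ast}$, which is the claim.
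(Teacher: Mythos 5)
Your proposal is correct and follows essentially the same route as the paper's proof: both establish ${\cal L}^{R\!E}\cap{\cal L}^{\ast}\neq\emptyset$ by taking an optimal sequence from ${\cal L}^{\ast}$ (non-empty by Theorem~\ref{theoremA}) and permuting its type~$1$ and type~$3$ blocks into the fixed orders $\bm h^1$ and $\bm h^3$ via payoff-preserving interchanges, then conclude from the Phase~2 maximization that ${\bm l}^{R\!E}$ attains the global optimum and inherits the block structure. Your explicit closure induction --- checking that each intermediate sequence remains in ${\cal L}^{\ast}$ so that the invariance of Theorem~\ref{theoremA} can be reapplied --- is a step the paper leaves implicit, and it is handled correctly.
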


The basic idea of OVBS is to utilize Theorem \ref{theoremA} to reduce the searching space of ${\bm l}^*$ from set $\cal L$ to a new constructed set ${\cal L}^{R\!E}$. Since $\left| {\cal L} \right|=N!$ and $\left| {\cal L}^{R\!E} \right|={N_2}!$, the complexity of determining ${\bm l}^*$ is significantly reduced.

To summarize, the optimal sequence determined by OVBS has the following features: \textbf{(a)} The MNO bargains with the VOs sequentially in the order of type $1$, type $2$, and type $3$ (Theorem \ref{theoremA}); \textbf{(b)} The MNO will cooperate with all type $1$ VOs (Proposition \ref{proposition1}); \textbf{(c)} The MNO will not cooperate with any type $3$ VO (Proposition \ref{proposition2}); \textbf{(d)} Interchanging any two type $1$ VOs' positions will not change the MNO's payoff (Theorem \ref{theoremA}); \textbf{(e)} Interchanging any two type $3$ VOs' positions will not change the MNO's payoff (Theorem \ref{theoremA}).

We illustrate the optimal sequence's structure in Figure \ref{figure2}.
\begin{figure}[t]
  \centering
  \includegraphics[scale=0.32]{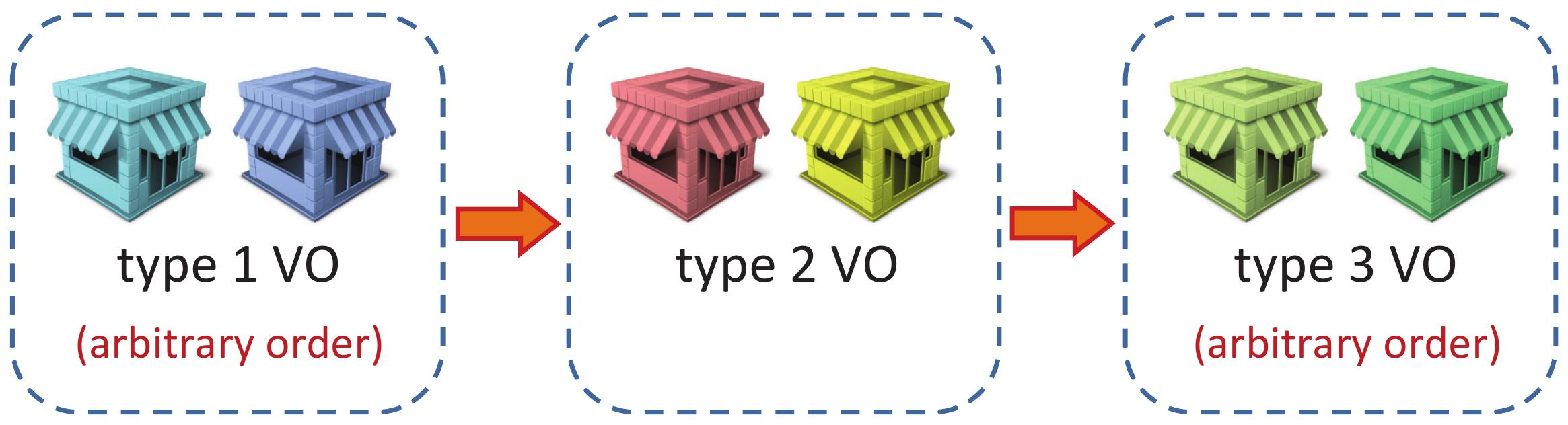}
  \centering
  \caption{Structure of The Optimal Bargaining Sequence under OVBS.}
  \label{figure2}
\end{figure}
\begin{figure}[t]
  \centering
  \includegraphics[scale=0.32]{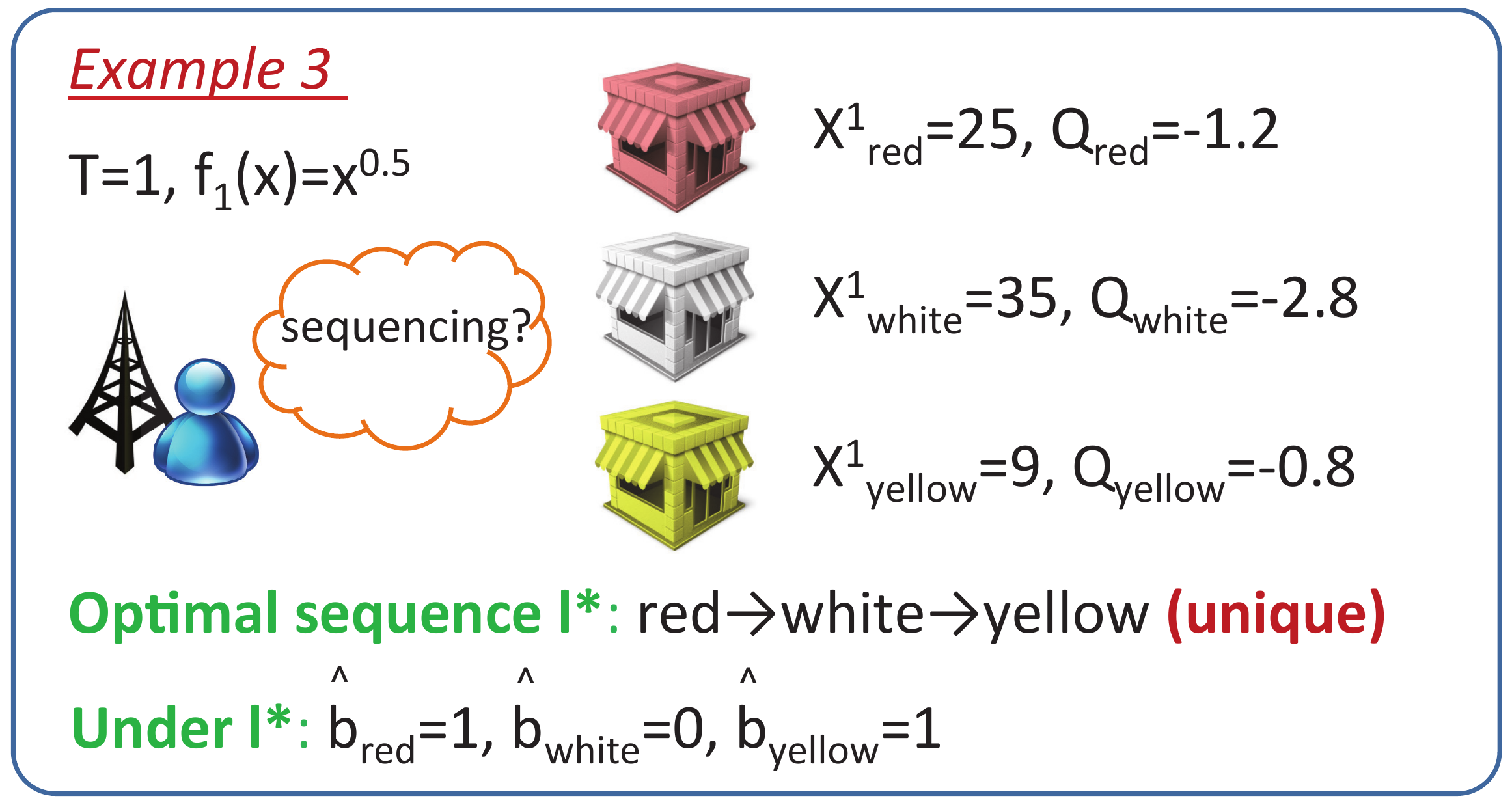}
  \centering
  \caption{Counter-Intuitive Sequencing for Type $2$ VOs.}
  \label{figurecounter}
\end{figure}

It is difficult to further reduce the searching space ${\cal L}^{RE}$, because the optimal sequencing problem involving type $2$ VOs is very complicated in general. To see this, we show a counter-intuitive result in the following proposition.\\

{\vspace{-0.3cm}}
\begin{proposition}\label{proposition:nonconsecutive}
If there are type $2$ VOs, {i.e.}, $N_2>0$, the MNO may cooperate with the VOs nonconsecutively under all the optimal bargaining sequences.
\end{proposition}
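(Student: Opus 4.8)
The plan is to prove this existence statement by exhibiting a concrete problem instance with at least one type 2 VO in which every optimal bargaining sequence produces a nonconsecutive cooperation pattern. Since Proposition \ref{proposition:nonconsecutive} only asserts that such behavior \emph{may} occur, a single well-chosen counterexample suffices, and I would build one matching the configuration sketched in Figure \ref{figurecounter}.

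First I would fix a small system: choose $T$, concave offloading functions $f_t(\cdot)$ (e.g.\ $f_t(x)=\sqrt{x}$ as in Examples 1--2), offloading vectors $\bm X_n$, and net benefits $Q_n$ for a handful of VOs. I would include at least one type 1 VO and two type 2 VOs, selecting the numbers so that the classification of Definition \ref{definition1} is unambiguous, i.e.\ verifying directly the signs of $Q_n$ and of $\sum_{t=1}^{T} f_t(X_n^t)+Q_n$ for each VO. This anchors the instance and guarantees $N_2>0$.

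Next, by Theorem \ref{theoremA} I may restrict attention to sequences of the form (type 1 block, type 2 block, type 3 block), so the only freedom affecting the MNO's payoff is the ordering of the type 2 VOs. For each such ordering I would run the backward induction of Section \ref{Section4}, computing the quantities $\Delta_k(\cdot)$ from (\ref{kjude})--(\ref{equ:delta1}) and the eventual outcome $\hat{\bm b}_N$ and payoff $U_0$ via (\ref{MNOpayoffsequential}). By Proposition \ref{proposition1} and Proposition \ref{proposition2} the type 1 VOs are always cooperated with and the type 3 VOs never are, so I only need to track the cooperation decisions inside the type 2 block. The goal is to tune the parameters so that at every payoff-maximizing ordering the MNO skips an \emph{earlier} type 2 VO while cooperating with a \emph{later} one, giving an overall cooperation pattern $1,\ldots,1,0,1$ across positions; sandwiched between the all-cooperated type 1 prefix and the all-rejected type 3 suffix, this is nonconsecutive.

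The delicate part is not a single calculation but ensuring the conclusion holds for \emph{all} optimal sequences rather than merely one. Because several type 2 orderings may tie for the maximum payoff, I must compute $U_0$ for every type 2 ordering, identify the full optimal set, and check that each member yields a nonconsecutive $\hat{\bm b}_N$, in particular confirming that no consecutive ordering attains the maximum. The intuition for why the skip-then-cooperate pattern can be optimal is the strategic substitution captured by the $\Delta_k$ recursion: cooperating with the earlier (negative-$Q$) type 2 VO consumes enough of the concave offloading benefit that it depresses the coalition value extractable from the later, more valuable type 2 VO. Verifying the signs of the relevant $\Delta_k$ comparisons, and that the optimum is attained only at nonconsecutive outcomes, is where the main effort lies.
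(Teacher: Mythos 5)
Your overall route is the same as the paper's: the paper proves this proposition by exhibiting a single concrete instance (Example 3, Figure \ref{figurecounter}) in which all three VOs are of type $2$ and the unique optimal sequence (\emph{red}, \emph{white}, \emph{yellow}) yields the nonconsecutive cooperation pattern $\left(1,0,1\right)$. Your plan --- fix a small instance, run the backward induction of Section \ref{Section4} for each ordering, and verify the outcome --- is exactly this, and you correctly isolate where the real burden lies: the universal quantifier over optimal sequences, not any single calculation.

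However, one step of your verification logic would fail as written. You invoke Theorem \ref{theoremA} to ``restrict attention'' to block sequences (type $1$ block, type $2$ block, type $3$ block) and propose to identify ``the full optimal set'' within that family. Theorem \ref{theoremA} only guarantees that \emph{some} optimal sequences have the block structure; the paper explicitly remarks, immediately after the theorem, that there may exist optimal bargaining sequences not in $\mathcal{L}^{\ast}$. Since the proposition quantifies over \emph{all} optimal sequences, an optimal non-block sequence with a consecutive cooperation pattern would falsify your instance, and your restricted enumeration would never detect it. The restriction is sound for computing the optimal \emph{value} $\max_{{\bm l}\in\mathcal{L}} U_0^{\bm l}$, but not for enumerating the set of maximizers. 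The fix is either to enumerate all $N!$ sequences directly (cheap for $N=3$), or to do what the paper's example does: make \emph{all} VOs type $2$, so the block restriction is vacuous and every sequence is automatically covered. Your decision to include a type $1$ VO is precisely what opens this hole, and it buys you nothing, since $N_2>0$ is the only hypothesis the proposition needs.
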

We show Example 3 in Figure \ref{figurecounter} to prove Proposition \ref{proposition:nonconsecutive}.{\footnote{In Section \ref{subsec:engineering}, we use Example 3 to show that under a given bargaining sequence, the MNO may cooperate with the VOs nonconsecutively. Here, we use Example 3 to show that this can still happen even if the bargaining sequence is the optimal one.}} In Example 3, all VOs are of type $2$, and the MNO has a unique optimal bargaining sequence, where it bargains with VOs \emph{red}, \emph{white}, and \emph{yellow} sequentially. We find that the MNO only cooperates with VOs \emph{red} and \emph{yellow} under this optimal bargaining sequence. In other words, it is optimal for the MNO in this example to bargain with someone (VO \emph{white}) that it will not cooperate with ahead of someone (VO \emph{yellow}) that it will cooperate with. {{The reason for this counter-intuitive result is that such strategy increases the MNO's disagreement point at the first bargaining step, and hence helps the MNO earn more profit from the cooperation with VO \emph{red}.}} 

Proposition \ref{proposition:nonconsecutive} implies that besides the structural property described in Theorem \ref{theoremA}, it is difficult to explore other structural properties to further reduce the complexity of OVBS.


\subsection{Special Case 1: Only Type $1$ VOs}\label{SpecialA}
We next study a special case where all VOs are of type $1$, \emph{i.e.}, $Q_n\ge0$ for all $n\in\cal N$. In this case, we not only know that any bargaining sequence is optimal (based on Theorem \ref{theoremA}), but also can obtain the closed-form solution of the MNO's payoff as follows.
\begin{theorem}\label{theoremB}
If all VOs are of type $1$, the MNO's payoff is independent of the bargaining sequence $\bm l$ and is given as:
\begin{equation}
{U_0} = \frac{1}{{{2^N}}} \sum\limits_{{{\bm b}_N}\in\cal B} {\Psi \left( {{\bm b}_N} \right)},\label{flipcoin}
\end{equation}
where ${\cal B} \triangleq \left\{ {\left( {{b_1},{b_2}, \ldots ,{b_N}} \right):{b_n} \in \left\{ {0,1} \right\},\forall n \in {\cal N}} \right\}$.
\end{theorem}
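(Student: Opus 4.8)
The plan is to recast the backward induction, when all VOs are of type 1, as a cascade of unbiased coin flips. By Proposition \ref{proposition1}, the MNO cooperates with every type 1 VO regardless of history, so on every branch each step ends in agreement. For $s\in\{1,2,\ldots,N\}$ and each length-$s$ vector ${\bm b}_s$, I would introduce the \emph{reduced value}
\begin{align}
\Phi^s\left({\bm b}_s\right) \triangleq \Psi\left(B_N^s\left({\bm b}_s\right)\right) - \sum_{m=s+1}^N \pi_m^*\left(B_{m-1}^s\left({\bm b}_s\right)\right),\nonumber
\end{align}
namely the social welfare of the outcome reached by optimal continuation from ${\bm b}_s$, less the payoffs of VOs at positions $s+1,\ldots,N$. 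From (\ref{equ:Bfunction}) the top level satisfies $\Phi^N\left({\bm b}_N\right)=\Psi\left({\bm b}_N\right)$, and comparing with (\ref{kjude}) gives $\Delta_k\left({\bm b}_{k-1}\right)=\Phi^k\left({\bm b}_{k-1},1\right)-\Phi^k\left({\bm b}_{k-1},0\right)$.

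The core step is the one-level averaging identity
\begin{align}
\Phi^s\left({\bm b}_s\right) = \tfrac{1}{2}\Phi^{s+1}\left({\bm b}_s,1\right) + \tfrac{1}{2}\Phi^{s+1}\left({\bm b}_s,0\right),\nonumber
\end{align}
which I would prove to hold for \emph{every} ${\bm b}_s$. Since the VO at position $s+1$ is of type 1, Proposition \ref{proposition1} gives $b_{s+1}^*\left({\bm b}_s\right)=1$, whence (\ref{solutionk}) yields $\Delta_{s+1}\left({\bm b}_s\right)\ge0$ and $\pi_{s+1}^*\left({\bm b}_s\right)=\tfrac{1}{2}\left(\Phi^{s+1}\left({\bm b}_s,1\right)-\Phi^{s+1}\left({\bm b}_s,0\right)\right)$. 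Because optimal continuation from ${\bm b}_s$ starts by cooperating at step $s+1$, the definition of $\Phi^s$ telescopes into $\Phi^s\left({\bm b}_s\right)=\Phi^{s+1}\left({\bm b}_s,1\right)-\pi_{s+1}^*\left({\bm b}_s\right)$; substituting the expression for $\pi_{s+1}^*$ produces the identity. The same reasoning applied to step 1 via (\ref{solution1})--(\ref{equ:delta1}) gives the boundary relation $U_0=\tfrac{1}{2}\Phi^1\left(1\right)+\tfrac{1}{2}\Phi^1\left(0\right)$, using $U_0=\Phi^1\left(1\right)-\pi_1^*$ read off from (\ref{MNOpayoffsequential}).

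Unfolding the averaging identity repeatedly then finishes the proof: starting from $U_0$, each application doubles the number of $\Phi$-terms while halving their weights, so after $N$ levels I obtain $U_0=\frac{1}{2^N}\sum_{{\bm b}_N\in{\cal B}}\Phi^N\left({\bm b}_N\right)=\frac{1}{2^N}\sum_{{\bm b}_N\in{\cal B}}\Psi\left({\bm b}_N\right)$, which is exactly (\ref{flipcoin}). Independence of the bargaining sequence ${\bm l}$ is then immediate, since the right-hand side is a symmetric sum over the entire cube ${\cal B}$ involving only $\Psi$; permuting VOs (i.e., changing ${\bm l}$) merely reindexes the summands and leaves the total unchanged.

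The step I expect to demand the most care is the averaging identity, specifically its validity on the counterfactual disagreement branch $\left({\bm b}_s,0\right)$. A type 1 VO is always cooperated with, so a history with $b_{s+1}=0$ never arises on the equilibrium path; nevertheless it enters the Nash bargaining computation as VO $s+1$'s disagreement point, and $\Phi^{s+1}$ must be evaluated there. I would stress that $\Phi^{s+1}$ is defined as a function on \emph{all} length-$(s+1)$ binary vectors, and that along the branch $\left({\bm b}_s,0\right)$ the optimal continuation still cooperates at every later step (Proposition \ref{proposition1} is independent of the prior outcomes). Hence the recursion is a genuine pointwise identity over the whole Boolean cube rather than only along reachable histories, which is precisely what licenses the full unfolding into the symmetric average (\ref{flipcoin}).
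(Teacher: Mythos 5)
Your proof is correct and takes essentially the same route as the paper's: the paper defines $W_k\left({\bm b}_k\right)$, which equals your $\Phi^k\left({\bm b}_k\right)-\sum_{n=1}^{k} b_n Q_n$, proves the analogous one-step identity $W_k\left({\bm b}_k\right)=\tfrac{1}{2}W_{k+1}\left({\bm b}_k,0\right)+\tfrac{1}{2}W_{k+1}\left({\bm b}_k,1\right)+\tfrac{1}{2}Q_{k+1}$ pointwise on all histories (using, as you do, that Proposition \ref{proposition1} forces agreement on every branch so $\pi_{k+1}^*=\tfrac{1}{2}\Delta_{k+1}$), and unfolds it over the Boolean cube exactly as in your argument. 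The only cosmetic differences are that your $\Phi$-normalization absorbs the $Q_n$ bookkeeping into the potential, and that you obtain sequence-independence directly from the symmetric closed form, whereas the paper invokes Theorem \ref{theoremA} up front to fix the sequence $1,2,\ldots,N$.
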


Mathematically, the MNO's payoff in (\ref{flipcoin}) can be viewed as the expected social welfare under such a scenario, where the MNO cooperates with each VO with a probability of $0.5$. This observation is consistent with \cite{gao2014bargaining},\cite{moresi2008model}. In fact, \cite{gao2014bargaining},\cite{moresi2008model} studied the one-to-many bargaining without cooperation cost. Hence, the buyer would definitely cooperate with all sellers. That corresponds to the special case that we study in this subsection, \emph{i.e.,} all VOs are of type $1$. In this case, the bargaining sequence does not affect the buyer's payoff, so \cite{gao2014bargaining},\cite{moresi2008model} only studied the one-to-many bargaining with {exogenous} sequence. Our work in Sections \ref{Section4} and \ref{Section5} considers a more general case, where the buyer (\emph{i.e.}, the MNO) may not necessarily cooperate with sellers (\emph{i.e.}, the VOs), and provides a deeper understanding on the one-to-many bargaining with both {exogenous} and {endogenous} sequences.

\vspace{-0.5cm}
\subsection{Special Case 2: Sortable VOs}\label{SpecialB}
In this subsection, we study another special case where all VOs are \emph{sortable}, which is defined in the following.

\begin{definition}\label{definition2}
A set $\cal N$ of VOs is sortable if for any pair of VOs $i,j\in\cal{N}$, we have either (i) $Q_i\ge Q_j$ and $X_i^t\ge X_j^t$ for all $t=1,2,\ldots,T$, or (ii) $Q_i\le Q_j$ and $X_i^t\le X_j^t$ for all $t=1,2,\ldots,T$.
\end{definition}

When a set of VOs are sortable, we can sort them based on $Q_n$ and ${\bm X}_n$. The following theorem shows that this simple sorting generates the optimal bargaining sequence.

\begin{theorem}\label{theorem:sortable}
If all the VOs are sortable, we can construct a sequence $\bm l$ such that for all $n\in\left\{1,2,\ldots,N-1\right\}$, we have ${Q_{l_n}} \ge {Q_{l_{n + 1}}}$ and ${X_{l_n}^t} \ge {X_{l_{n + 1}}^t}$ for all $t=1,2,\ldots,T$. Furthermore:

(i) $\bm l$ is the optimal bargaining sequence of problem (\ref{optimalsequencing});

(ii) Under $\bm l$, the MNO will and only will cooperate with the first $k$ VOs, {i.e.}, VO $l_1,l_2,\ldots,l_{k}$, where $k\in{\left\{ 0 \right\} \cup \cal N}$ is the unique index that satisfies both of the following inequalities:
\begin{align}
& \sum\limits_{t=1}^{T} {f_t\left( {\sum\limits_{n = {l_1}}^{{l_{k-1}}} {{X_{n}^t}}  + {X_{l_k}^t}} \right)} - \sum\limits_{t=1}^{T} {f_t\left( {\sum\limits_{n = l_1}^{{l_{k-1}}} {{X_{n}^t}} } \right)} + {Q_{l_k}} \ge 0,\\
& \sum\limits_{t=1}^{T} {f_t\left( {\sum\limits_{n = l_1}^{l_k} {{X_n^t}}  + {X_{l_{k + 1}}^t}} \right)}- \sum\limits_{t=1}^{T} {f_t\left( {\sum\limits_{n = l_1}^{l_k} {{X_n^t}} } \right)} + {Q_{l_{k + 1}}} < 0.
\end{align}
\end{theorem}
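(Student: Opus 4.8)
The plan is to first construct the sorted sequence, then pin down the equilibrium cooperation set (part (ii)), and finally establish optimality (part (i)) by combining Theorem \ref{theoremA} with a new exchange argument. For the construction, observe that Definition \ref{definition2} makes the relation ``$i$ dominates $j$'' (meaning $Q_i \ge Q_j$ and $X_i^t \ge X_j^t$ for all $t$) a total preorder, so I can list the VOs as $l_1,l_2,\ldots,l_N$ with each dominating the next, breaking ties arbitrarily. I would first note that this order is consistent with the type partition of Definition \ref{definition1}: since every $f_t$ is increasing, both $Q_n$ and the threshold $\sum_t f_t(X_n^t)+Q_n$ are nonincreasing along the sorted order, so every type $1$ VO precedes every type $2$ VO, which precedes every type $3$ VO. Hence the sorted sequence already has the form guaranteed optimal by Theorem \ref{theoremA}, and the remaining work is (a) locating the cutoff $k$ and (b) showing that the sorted ordering \emph{within} the type $2$ block is optimal.

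The engine for part (ii) is a monotonicity lemma built from concavity of $f_t$ together with sortability: for a fixed base vector, the deployment marginal $\sum_t\left[f_t(S_t+X_n^t)-f_t(S_t)\right]+Q_n$ is nonincreasing in the base $S$ and nondecreasing in the domination ``size'' of VO $n$. I would run the backward induction of Section \ref{Section4} and argue, by a nested induction over the steps, that whenever the base is at least the greedy prefix $\{l_1,\ldots,l_k\}$, every remaining (smaller) VO has a strictly negative $\Delta$ and so never cooperates; this collapses the future responses in (\ref{kjude}) to zero and reduces the relevant $\Delta_n$ to the bare social-welfare marginal $g_n \triangleq \sum_t\left[f_t\left(\sum_{m<n}X_{l_m}^t+X_{l_n}^t\right)-f_t\left(\sum_{m<n}X_{l_m}^t\right)\right]+Q_{l_n}$. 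The two displayed inequalities are then exactly $g_k \ge 0$ and $g_{k+1}<0$; because $g_n$ is nonincreasing in $n$ along the sorted path (smaller VO and larger base both lower the marginal), such a cutoff exists and is unique, so the MNO cooperates with precisely $l_1,\ldots,l_k$.

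For optimality (part (i)), Theorem \ref{theoremA} and Propositions \ref{proposition3} and \ref{proposition4} already let me fix type $1$ at the front and type $3$ at the back with arbitrary internal order, so it remains to show that sorting the type $2$ VOs by domination maximizes $U_0$. I would prove an adjacent-transposition lemma in the spirit of Proposition \ref{proposition3}: if two neighbouring type $2$ VOs appear in the ``wrong'' order (the dominated one first), interchanging them so the dominating one comes first does not decrease $U_0$. Iterating this bubble-sort argument turns any admissible sequence into the sorted one without loss, yielding optimality. As a consistency check I would evaluate (\ref{MNOpayoffsequential}) along the sorted path; the telescoping of the $g_n$ gives prefix social welfare $\sum_{n=1}^{k} g_n$ and payoffs $\hat\pi_{l_n}=\tfrac{1}{2}g_n$, so $U_0=\tfrac{1}{2}\sum_{n=1}^{k}g_n$, generalizing the ``flip-a-coin'' structure of the all-type-$1$ case.

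The hard part will be the off-equilibrium branches of the backward induction. When the coalition at a step $n\le k$ contemplates \emph{not} deploying at $l_n$, the base seen by the smaller downstream VOs shrinks, so VOs that remain inactive on the equilibrium path can become active; consequently $\Delta_n$ is \emph{not} simply $g_n$, and the clean telescoping holds only on-path. Controlling these branches---showing that substituting the efficient VO $l_n$ for any bundle of smaller downstream VOs cannot lower the coalition's continuation value---is precisely where the domination structure of Definition \ref{definition2} must be exploited most carefully, and it is also the technical heart of the type $2$ exchange lemma needed in part (i).
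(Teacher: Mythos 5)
Your high-level architecture does match the paper's: an adjacent-transposition argument for optimality (the paper's Lemma \ref{lemma7} combined with Lemma \ref{lemma6}, applied bubble-sort style in Part A of its proof), and a threshold-plus-uniqueness argument for part (ii) (its Parts B--D). But two things stop your proposal from being a proof. First, a concrete error: the claimed on-path collapse of $\Delta_n$ to the bare marginal $g_n$, and the resulting ``consistency check'' $\hat\pi_{l_n}=\frac{1}{2}g_n$ and $U_0=\frac{1}{2}\sum_{n=1}^{k}g_n$, are false. Take homogenous type-$1$ VOs (which are sortable, and with which the MNO cooperates fully, so $k=N$) and strictly concave $f_t$: telescoping gives $\sum_{n=1}^{N}g_n=\Psi\left({\bm 1}_N\right)$, so your formula asserts $U_0=\frac{1}{2}\Psi\left({\bm 1}_N\right)$, whereas Theorem \ref{theoremB} gives $U_0=2^{-N}\sum_{{\bm b}_N\in{\cal B}}\Psi\left({\bm b}_N\right)$, which is strictly larger: already for $N=2$, $\frac{1}{4}\left(\Psi(0,1)+\Psi(1,0)+\Psi(1,1)\right)>\frac{1}{2}\Psi(1,1)$ because strict concavity with $f_t(0)=0$ forces $\Psi(0,1)+\Psi(1,0)>\Psi(1,1)$. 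The culprit is exactly the off-path reactivation you flag in your last paragraph: the MNO's disagreement branch at any step $n\le k$ lets downstream VOs become active, which raises the disagreement point, so $\Delta_n<g_n$ strictly in general---even at step $k$ itself, on the equilibrium path. So the flaw is not a loose end; it invalidates the telescoping that your part (ii) argument rests on.

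Second, the two steps you explicitly defer are the entire technical content, and the paper's resolutions sidestep your equality claim rather than repair it. For part (ii) the paper never computes $\Delta_n$ exactly: it first proves (Part B, by contradiction, using $B\le C$ from sortability via Lemma \ref{lemma3} and the recursion of Lemma \ref{lemma1}) that the cooperation set is a prefix of $\bm l$, and then sandwiches the equilibrium conditions $\Delta_k\left({\bm 1}_{k-1}\right)\ge 0>\Delta_{k+1}\left({\bm 1}_k\right)$ between one-sided bounds: the branch that lies on the equilibrium path is evaluated exactly as the all-decline continuation, while the off-path branch is only bounded in the favorable direction via $W_k\left({\bm b}_{k-1},0\right)\ge W_N\left({\bm b}_{k-1},0,{\bm 0}_{N-k}\right)$ (the max in Lemma \ref{lemma1} is at least its first argument). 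This converts $\Delta_k\ge0$ into your $g_k\ge0$ and $\Delta_{k+1}<0$ into $g_{k+1}<0$ without ever controlling the off-path continuation values; uniqueness of $k$ (your monotonicity of $g_n$, the paper's Part C) then identifies the threshold. For part (i), your type-$2$ exchange lemma is precisely the paper's Lemma \ref{lemma7}, proved by an explicit nine-case comparison of the four continuation values $W_{k+2}\left({\bm b}_k,b,b'\right)$ under the swap using Lemmas \ref{lemma2} and \ref{lemma3}; note it holds for \emph{any} dominated adjacent pair, not just type-$2$ pairs, so the paper obtains optimality directly by bubble sort without routing through Theorem \ref{theoremA}. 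Until you supply proofs of these two lemmas and replace the $\Delta_n=g_n$ reduction with one-sided bounds of this kind, the proposal remains a correct outline built on a wrong quantitative core.
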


That is to say, when all VOs are sortable, we can explicitly determine the optimal bargaining sequence and identify those VOs that the MNO will cooperate with.

\vspace{-0.2cm}
\section{Influence of Bargaining Sequence on VOs' Payoffs}\label{sec:VOpayoff}
In this section, we study the influence of the bargaining sequence on VOs' payoffs. When VOs are homogenous, we prove that it is always no worse for a particular VO to bargain with the MNO at an earlier position. When VOs are heterogenous, we use an example to show that such ``the earlier the better'' feature is no longer true in general.
\vspace{-0.2cm}
\subsection{Homogenous VOs}
We assume $Q_n=Q$ and $X_n^t=X^t$ for all $n\in\cal N$ and $t=1,2,\ldots,T$, and state the following theorem.
\begin{theorem}\label{homogethe}
If all VOs are homogenous, then for any bargaining sequence $\bm l\in\cal L$, we have ${\hat \pi _{{l_i}}} \ge {\hat \pi _{{l_j}}}$ for any $i<j,i,j\in\cal N$.
\end{theorem}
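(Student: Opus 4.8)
The plan is to exploit homogeneity to collapse the backward induction to a one-dimensional state and then track how the per-step Nash surplus evolves along the equilibrium path. Since all VOs share the same $Q$ and ${\bm X}_n$, the social welfare $\Psi$ depends on an outcome vector only through the number of equipped venues; writing $g(j)\triangleq\sum_{t=1}^{T}f_t\!\left(jX^t\right)+jQ$, concavity of each $f_t$ makes $g$ concave on $\{0,1,\ldots,N\}$. Because identical parameters make label permutations a symmetry of the problem, the payoff attached to position $n$ is the same whichever VO occupies it, so it suffices to treat the identity sequence $l_n=n$ and show $\hat\pi_1,\ldots,\hat\pi_N$ are non-increasing.

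First I would recast the backward induction of Section \ref{Section4} as a value recursion on the single integer state. Let $M_k(j)$ denote the social welfare of the eventual outcome minus the payments made at steps $k,k+1,\ldots,N$, viewed as a function of the number $j$ of venues already equipped among the first $k-1$. The boundary value is $M_{N+1}(j)=g(j)$, and the coalition reading of $\Delta_k$ in (\ref{kjude}) collapses, after the payment bookkeeping cancels, to the clean form $\Delta_k=\delta_k(j)\triangleq M_{k+1}(j+1)-M_{k+1}(j)$. The NBS (\ref{solutionk}) then delivers both the recursion $M_k(j)=M_{k+1}(j)+\tfrac12\left[\delta_k(j)\right]^+$ and the equilibrium payoff $\hat\pi_k=\tfrac12\left[\delta_k(j_k)\right]^+$, where $j_k$ is the number of agreements reached before step $k$.

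The two structural lemmas I would establish by backward induction on $k$ are: (a) each $M_k(\cdot)$ is concave in $j$, equivalently $\delta_k(\cdot)$ is non-increasing; and (b) the sign of $\delta_k(j)$ does not depend on $k$. For (a) the subtle point is that $M_k=\max\{M_{k+1},\tfrac12(M_{k+1}(\cdot+1)+M_{k+1})\}$ is a maximum of concave functions, which need not be concave; I would instead use the single crossover point $\theta$ (with $\delta_{k+1}(j)\ge0$ iff $j\le\theta$) induced by monotonicity of $\delta_{k+1}$, and check the first differences of $M_k$ directly across the regions $j<\theta$, $j=\theta$, $j>\theta$ and at the two transitions, confirming they stay non-increasing. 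For (b), substituting the recursion into $\delta_k(j)=\delta_{k+1}(j)+\tfrac12\big(\left[\delta_{k+1}(j+1)\right]^+-\left[\delta_{k+1}(j)\right]^+\big)$ and splitting on the signs of $\delta_{k+1}(j)$ and $\delta_{k+1}(j+1)$ yields $\delta_k(j)\ge0\iff\delta_{k+1}(j)\ge0$. Property (b) is exactly what forbids any ``resumption'' of cooperation: once a step ends in disagreement the state freezes, so every later step from the same state also disagrees; hence cooperation forms a prefix and $j_k=\min(k-1,m^{\ast})$, with $m^{\ast}$ the step-independent threshold count.

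With these in hand the conclusion is short. For $k\ge m^{\ast}$ we get $\hat\pi_{k+1}=0\le\hat\pi_k$. For $k<m^{\ast}$ both steps reach agreement, so $\hat\pi_k=\tfrac12\delta_k(k-1)$ and $\hat\pi_{k+1}=\tfrac12\delta_{k+1}(k)$ with all relevant $\delta$'s non-negative; sign-independence places $k-1$ and $k$ on the non-negative side of the threshold, so the averaging identity from (b) gives $\delta_k(k-1)=\tfrac12\big(\delta_{k+1}(k-1)+\delta_{k+1}(k)\big)\ge\delta_{k+1}(k)$, using that $\delta_{k+1}$ is non-increasing. Thus $\hat\pi_k\ge\hat\pi_{k+1}$ for every $k$, which for a general sequence $\bm l$ reads $\hat\pi_{l_i}\ge\hat\pi_{l_j}$ whenever $i<j$. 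I expect the main obstacle to be Lemma (a): showing concavity survives the $\max$/$\left[\,\cdot\,\right]^+$ step needs the explicit threshold structure rather than a generic argument, and it is the linchpin that both forces the prefix cooperation pattern and supplies the averaging identity used in the final monotonicity step.
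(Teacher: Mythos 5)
Your proposal is correct, and its engine is recognizably the same as the paper's, but the scaffolding is genuinely different and worth comparing. The paper proves the theorem by (i) invoking Theorem \ref{theorem:sortable} (homogeneous VOs are sortable) to obtain the cooperation threshold $k$ with ${\hat \pi}_n=0$ for $n>k$, and (ii) establishing Lemma \ref{lemma8}, a decreasing-differences property $W_n\left({\bm b}_{n-2},0,1\right)-W_n\left({\bm b}_{n-2},0,0\right)\ge W_n\left({\bm b}_{n-2},1,1\right)-W_n\left({\bm b}_{n-2},1,0\right)$ proved by backward induction with case analysis, from which $\Delta_{n-1}\left({\bm b}_{n-2}\right)-\Delta_n\left({\bm b}_{n-2},1\right)=\frac{1}{2}\left(B-A\right)-\frac{1}{2}\left(D-C\right)\ge 0$ follows. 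Your lemma (a) is exactly Lemma \ref{lemma8} transported to count coordinates: concavity of $M_k(\cdot)$ in $j$ is the same decreasing-differences statement, once the paper's Lemma \ref{lemma2} identities (e.g., $B=C$ under homogeneity) are used to collapse the state to a single integer, which you do globally at the outset rather than piecemeal. Likewise your averaging identity $\delta_k(k-1)=\frac{1}{2}\bigl(\delta_{k+1}(k-1)+\delta_{k+1}(k)\bigr)\ge\delta_{k+1}(k)$ is the paper's final display in disguise. The genuinely new ingredient is your sign-invariance lemma (b), $\delta_k(j)\ge 0 \iff \delta_{k+1}(j)\ge 0$, which re-derives the prefix-cooperation/threshold structure internally; the paper instead imports this from Theorem \ref{theorem:sortable}, whose Part B is a separate contradiction argument built on Lemmas \ref{lemma1} and \ref{lemma3}. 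Your route buys two things: it is self-contained for the homogeneous case (no detour through the sortable-VO machinery), and, by evaluating the $\delta$'s only at on-path states $j_k=\min(k-1,m^{\ast})$ with non-negativity certified by (b), it supplies a justification the paper's computation leaves implicit — the paper expands $W_{n-1}\left({\bm b}_{n-2},1\right)$ and $W_{n-1}\left({\bm b}_{n-2},0\right)$ assuming the agreement branch of the $\max$ binds for \emph{any} ${\bm b}_{n-2}$, i.e., that the relevant $\Delta_n\ge 0$ even at off-path states, which is not verified there and is exactly what your threshold bookkeeping handles. What the paper's route buys in exchange is generality: its threshold comes for free from Theorem \ref{theorem:sortable}, which covers all sortable (not just homogeneous) VO sets, whereas your one-dimensional recursion is specific to identical $Q$ and ${\bm X}_n$. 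Your concavity-preservation argument at the $\max$ step (checking first differences across the crossover $\theta$, where the difference drops from $\frac{1}{2}\bigl(\delta_k(\theta-1)+\delta_k(\theta)\bigr)$ to $\frac{1}{2}\delta_k(\theta)$ and then to $\delta_k(\theta+1)<0$) is sound, as is the permutation-symmetry reduction to the identity sequence, so I see no gap.
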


Theorem \ref{homogethe} shows that the payoff of a VO with an earlier bargaining position is no smaller than the payoff of a VO with a later bargaining position. Since all VOs are homogenous, we conclude that it is always better for a particular VO to bargain with the MNO at an earlier position.

Notice that when VOs are homogenous, they are sortable based on Definition \ref{definition2}. Therefore, we can apply the conclusions in Theorem \ref{theorem:sortable} and obtain the following corollary.
\begin{corollary}\label{corollary:homogenous}
If all VOs are homogenous, then for any bargaining sequence $\bm l\in\cal L$, we have (i) ${\hat \pi _{{l_i}}} \ge {\hat \pi _{{l_j}}} \ge 0$ for any $i<j\le k,i,j\in\cal N$, and (ii) ${\hat \pi _{{l_m}}}=0$ for any $m> k,m\in\cal N$, where $k\in{\left\{ 0 \right\} \cup \cal N}$ is the unique index that satisfies both of the following inequalities:
\begin{align}
& \sum\limits_{t=1}^{T} {f_t\left( k X^t \right)} - \sum\limits_{t=1}^{T} {f_t\left( \left(k-1\right) X^t\right)} + {Q} \ge 0,\\
& \sum\limits_{t=1}^{T} {f_t\left( \left(k+1\right) X^t \right)}- \sum\limits_{t=1}^{T} {f_t\left( kX^t \right)} + {Q} < 0.
\end{align}
\end{corollary}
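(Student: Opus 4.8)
The plan is to derive the corollary by combining Theorem \ref{homogethe} with Theorem \ref{theorem:sortable}, after first checking that homogenous VOs fit the sortable framework. First I would observe that when $Q_n = Q$ and $X_n^t = X^t$ for all $n$ and $t$, Definition \ref{definition2} is satisfied trivially, since both of its inequalities hold with equality, so the VOs are sortable. The crucial point is that, for homogenous VOs, every sequence $\bm l \in \mathcal{L}$ satisfies the monotonicity requirement of Theorem \ref{theorem:sortable}, namely $Q_{l_n} \ge Q_{l_{n+1}}$ and $X_{l_n}^t \ge X_{l_{n+1}}^t$ for all $t$ (again with equality), because all VOs share the same parameters. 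Hence Theorem \ref{theorem:sortable} applies verbatim to an arbitrary $\bm l$, not merely to a specially constructed one.

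Next I would specialize the threshold index $k$ of Theorem \ref{theorem:sortable} to the homogenous case. Since each partial sum $\sum_{n=l_1}^{l_{k-1}} X_n^t$ reduces to $(k-1)X^t$ (a sum of $k-1$ identical terms) and $\sum_{n=l_1}^{l_k} X_n^t$ reduces to $kX^t$, the two defining inequalities of Theorem \ref{theorem:sortable} collapse exactly into the two inequalities stated in the corollary. Theorem \ref{theorem:sortable}(ii) then asserts that the MNO cooperates with precisely VOs $l_1, \ldots, l_k$ and with none of the others. This immediately settles part (ii): any VO $l_m$ with $m > k$ is not cooperated with, so $\hat b_{l_m} = 0$, and because the NBS sets the payment to zero whenever no agreement is reached, we obtain $\hat\pi_{l_m} = 0$.

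For part (i), the ordering $\hat\pi_{l_i} \ge \hat\pi_{l_j}$ for $i < j$ is exactly the content of Theorem \ref{homogethe}, which I would invoke directly. The non-negativity $\hat\pi_{l_j} \ge 0$ for $j \le k$ follows from the structure of each NBS step: formulas (\ref{solutionk}), (\ref{solutionN}), and (\ref{solution1}) set $\pi^* = \tfrac{1}{2}\Delta \ge 0$ when agreement is reached, since agreement requires $\Delta \ge 0$, and $\pi^* = 0$ otherwise, so every VO's realized payoff is non-negative. Restricting to indices $j \le k$, where agreement is guaranteed by part (ii), then yields $\hat\pi_{l_j} \ge 0$, completing the chain $\hat\pi_{l_i} \ge \hat\pi_{l_j} \ge 0$.

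Since this is a corollary assembled from two already-established theorems, I do not anticipate a genuine technical obstacle. The only step requiring care is the observation that homogeneity makes every sequence simultaneously sortable and monotone, so that Theorem \ref{theorem:sortable}, although phrased in terms of a single constructed sorted sequence, in fact holds for all $\bm l \in \mathcal{L}$; this is precisely what allows the corollary to quantify over every bargaining sequence rather than over one distinguished ordering, and it is the place where a careless reading could otherwise leave a gap.
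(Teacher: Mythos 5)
Your proposal is correct and follows exactly the route the paper takes: its own proof is the one-line remark that the corollary follows by combining Theorem \ref{theorem:sortable} and Theorem \ref{homogethe}, and you have simply filled in the omitted details (homogeneity makes every sequence sortable and monotone, the partial sums collapse to $(k-1)X^t$ and $kX^t$, Theorem \ref{theorem:sortable}(ii) gives the threshold and hence $\hat\pi_{l_m}=0$ for $m>k$, and Theorem \ref{homogethe} plus the NBS formulas give the ordered non-negative payoffs). Your closing observation, that the "for any $\boldsymbol{l}\in\mathcal{L}$" quantifier is licensed precisely because all sequences are simultaneously sorted under homogeneity, is the one genuinely careful point and you handle it correctly.
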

Corollary \ref{corollary:homogenous} shows that the MNO only cooperates with the first $k$ VOs, and the remaining $N-k$ VOs obtain zero payoffs.
\subsection{Heterogenous VOs}
\begin{figure}[t]
  \centering
  \includegraphics[scale=0.36]{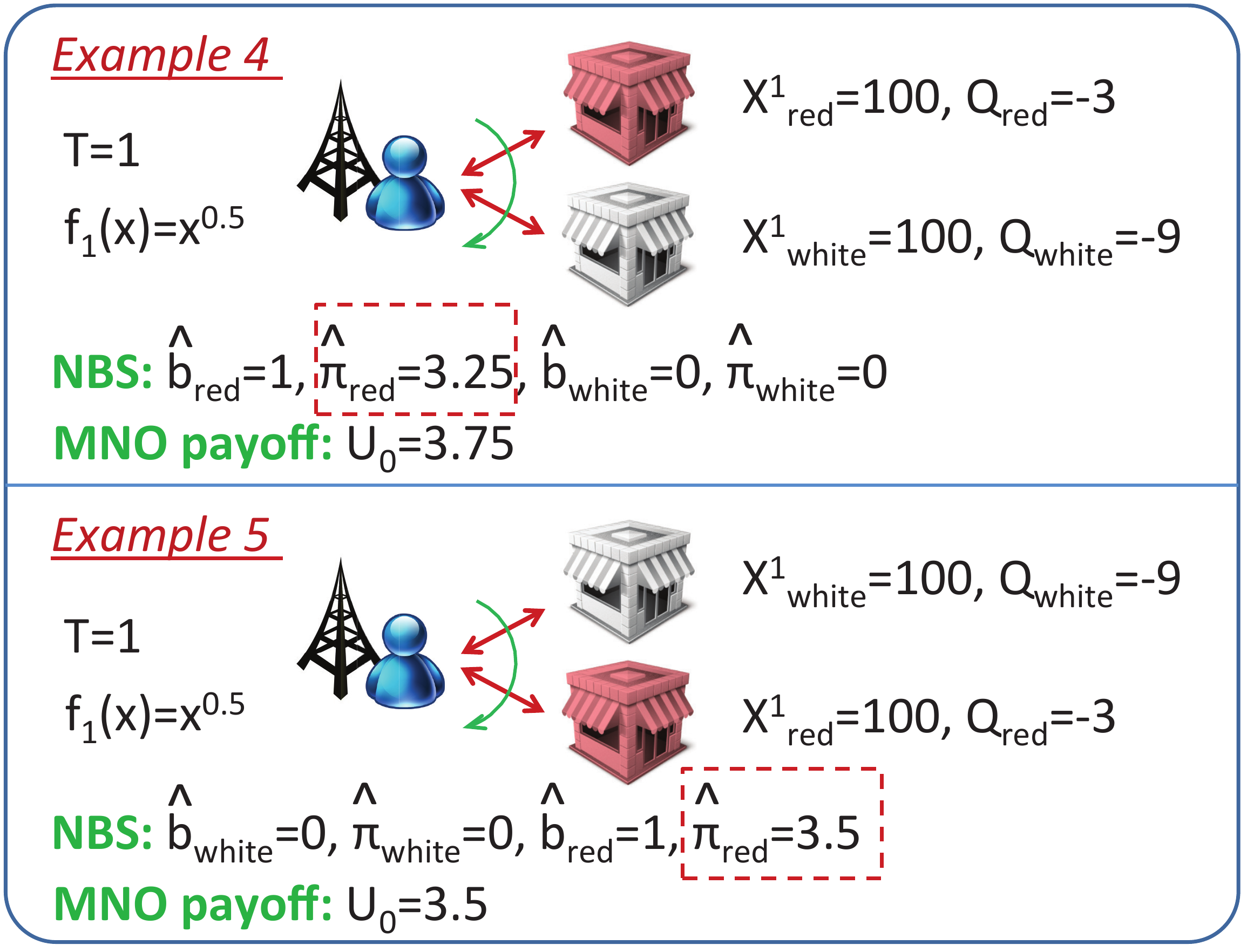}
  \centering
  \caption{Influence of Bargaining Sequence on Heterogenous VOs' Payoffs.}
  \label{figureextra}
\end{figure}

In Figure \ref{figureextra}, we illustrate Examples 4 and 5, where there are two VOs and they are heterogenous in $Q_n$.{\footnote{Similar examples where VOs are heterogenous in ${\bm X}_n$ are given in the appendix.}} 
{{We observe that, the \emph{red} VO's payoff under the later bargaining position is higher than that under the earlier bargaining position.}} 
Intuitively, this can be understood as follows. The MNO only cooperates with the \emph{red} VO in both cases. However, in the first case, the existence of the white VO serves as the ``backup plan'' for the MNO and allows the MNO to obtain a non-zero revenue even if the MNO fails to cooperate with the \emph{red} VO. This increases the MNO's disagreement point in the first bargaining step, and allows the MNO to extract more revenue from its cooperation with the \emph{red} VO. As a result, compared with the second case, the \emph{red} VO receives a lower payoff in the first case.

Examples 4 and 5 imply that when VOs are heterogenous, bargaining with the MNO at an earlier position may decrease the VO's payoff. This conclusion is very interesting, since it contrasts with literature \cite{gao2014bargaining}, which studies the one-to-many bargaining without cooperation cost and concludes that bargaining with the buyer earlier does not decrease the seller's payoff. In our problem, we show that this is not true when considering the cooperation cost.

\begin{figure*}[t]
\begin{minipage}[t]{0.47\linewidth}
\centering
  \includegraphics[scale=0.42]{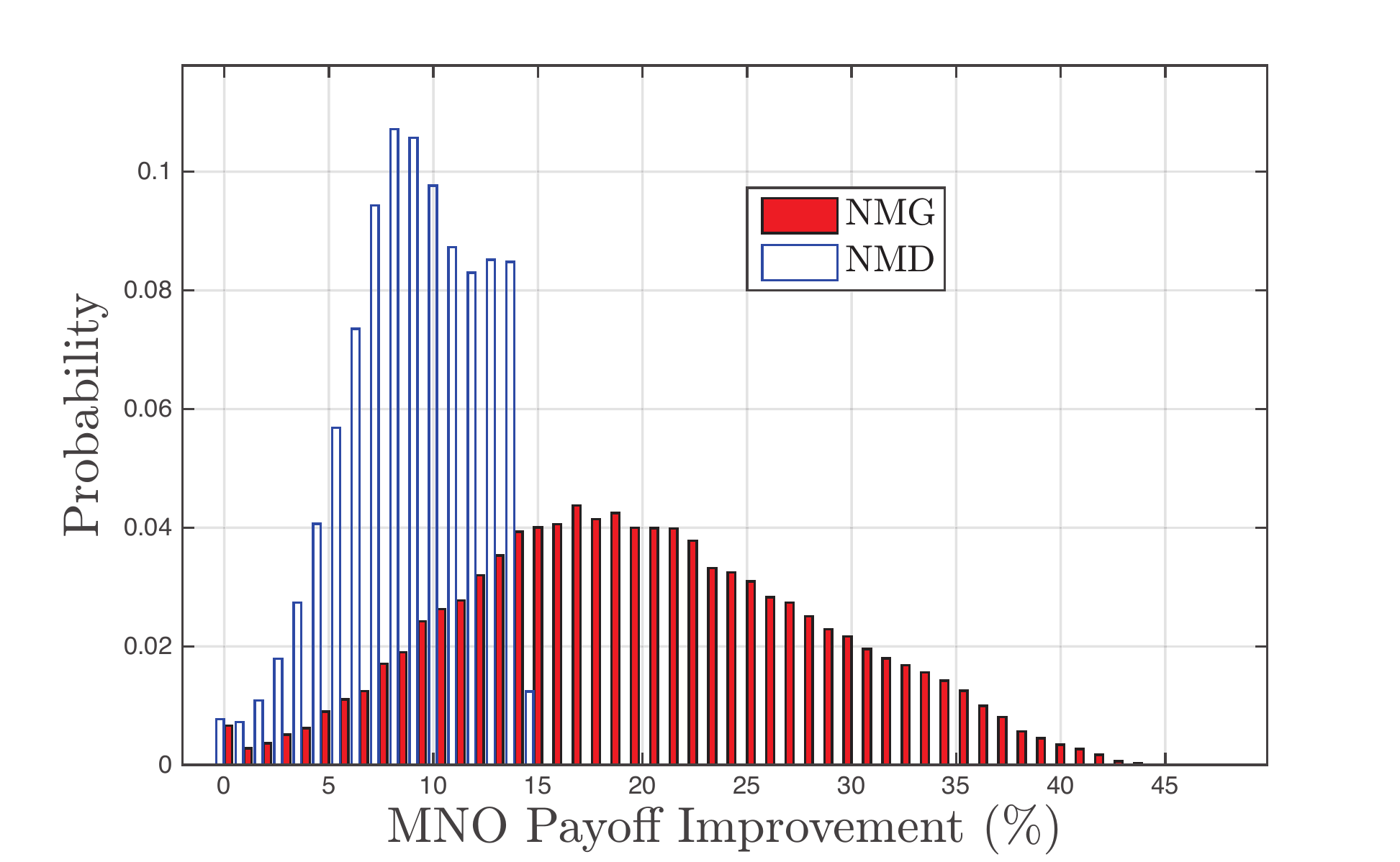}
  \centering
  \caption{Distributions of NMG and NMD (Truncated Normal Distribution).}
  \label{fig:1A}
\end{minipage}
\begin{minipage}[t]{.47\linewidth}
\centering
  \includegraphics[scale=0.42]{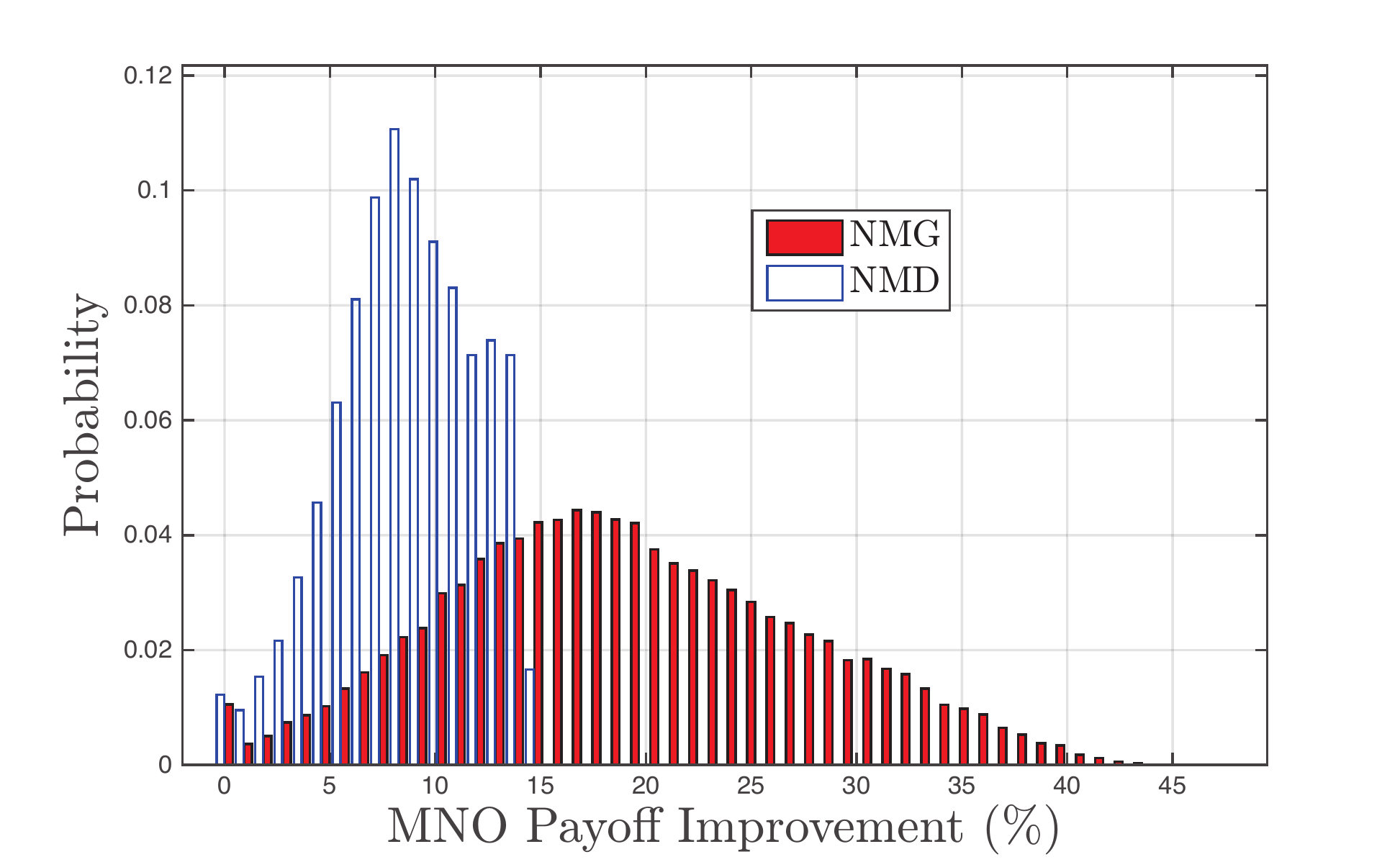}
  \centering
  \caption{Distributions of NMG and NMD (Uniform Distribution).}
  \label{fig:1B}
\end{minipage}
\end{figure*}

\vspace{-0.3cm}
\section{Numerical Results}\label{sec:numerical}
In this section, we evaluate the performance of the optimal sequencing and study the impact of system parameters on the bargaining.
\subsection{Performance of Optimal Sequencing}
First we define the criteria for evaluating the performance gap between different sequencing strategies. For a set $\cal N$ of VOs and the corresponding set $\cal L$ of bargaining sequences, we define the MNO's maximum, minimum, and average payoff as follows:
\begin{align}
\nonumber
{~~~}U_0^{\max } \triangleq \mathop {\max }\limits_{{\bm l} \in {\cal L}} U_0^{\bm l},{~}U_0^{\min } \triangleq \mathop {\min }\limits_{{\bm l} \in {\cal L}} U_0^{\bm l},{~}U_0^{\rm {ave}} \triangleq \frac{1}{{\left| {\cal L} \right|}}\sum\limits_{{\bm l} \in {\cal L}} {U_0^{\bm l}}.
\end{align}
Hence, $U_0^{\max }$, $U_0^{\min }$, and $U_0^{\rm {ave} }$ measure the MNO's payoff under the optimal sequence, worst sequence, and random sequence, respectively. Then we define the normalized maximum gap (NMG) and the normalized maximum deviation (NMD):
\begin{align}
\nonumber
{\rm NMG} \triangleq \frac{{U_0^{\max } - U_0^{\min }}}{{U_0^{\min }}},{\rm{~}} {\rm NMD} \triangleq \frac{{U_0^{\max } - U_0^{\rm ave}}}{{U_0^{\rm ave}}}.
\end{align}
NMG and NMD capture the performance improvement of the optimal sequence over the worst sequence and the random sequence, respectively.

\subsubsection{Distributions of NMG and NMD}\label{simulationtext1}

We choose $\left|\cal N\right|=5$, $T=2$, and $f_t\left(x\right)=x^{0.3}$ for $t=1,2$, and study the probability distributions of NMG and NMD.

First, we assume that $X_n^t$ and $Q_n$ follow the truncated normal distributions. Specifically, we obtain the distribution of $X_n^t,n\in{\cal N},t=1,2$, by truncating the normal distribution ${\cal N}\left(90,900\right)$ to interval $\left[60,120\right]$. Moreover, we obtain the distribution of $Q_n,n\in{\cal N}$, by truncating the normal distribution ${\cal N}\left(-6,9\right)$ to interval $\left[-9,-3\right]$. We run the experiment 30,000 times, and record the probability mass functions of NMG and NMD in Figure \ref{fig:1A}.
We conclude that, (i) compared with the worst sequence, the optimal sequence improves the MNO's payoff by 19.8\% on average and by 45.3\% in the extreme case; (ii) compared with the random sequence, the optimal sequence improves the MNO's payoff by 9.2\% on average and by 14.8\% in the extreme case.

Second, we consider the uniform distribution, and assume that $X_n^t\sim U\left[60,120\right]$ for all $n,t$, and $Q_n\sim U\left[-9,-3\right]$ for all $n$. We illustrate the corresponding probability mass functions of NMG and NMD in Figure \ref{fig:1B}. We can see that the results are similar to those in Figure \ref{fig:1A}, which shows that the simulation results on NMG and NMD are robust to the assumption on probability distributions of the system parameters. To save space, we only simulate the truncated normal distributions for the system parameters in the rest of this section.

We summarize the observations in Figures \ref{fig:1A} and \ref{fig:1B} as follows.
\begin{observation}
For both the truncated normal distribution and the uniform distribution, the optimal bargaining sequence improves the MNO's payoff over the random and worst bargaining sequences by more than 9\% and 19\% on average, respectively.
\end{observation}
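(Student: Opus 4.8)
The plan is to establish this Observation \emph{empirically}, since it is a statement about the sample means of NMG and NMD over randomly drawn system parameters rather than a deterministic algebraic identity; indeed, the counter-intuitive structure exposed in Proposition \ref{proposition:nonconsecutive} (optimal sequences that skip a middle VO) makes any clean closed-form lower bound on these averages hopeless. Concretely, I would fix $N=5$, $T=2$, and $f_t(x)=x^{0.3}$ as stated, and treat each Monte Carlo trial as an independent draw of the parameter profile $\{X_n^t\}$ and $\{Q_n\}$ from the prescribed truncated-normal law (and, separately, the uniform law) on the specified intervals.

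For a single trial, the core computation is the map from a bargaining sequence $\bm l$ to the MNO's payoff $U_0^{\bm l}$. I would implement exactly the backward induction of Section \ref{Section4}: starting at step $N$ and working down to step $1$, evaluate the incremental quantities $\Delta_k(\bm b_{k-1})$ in (\ref{kjude}), obtain the per-step solutions $(b_k^*,\pi_k^*)$ from (\ref{solutionk}), and finally assemble $U_0^{\bm l}$ via (\ref{MNOpayoffsequential}). Because $N=5$ is small, I would simply enumerate all $N!=120$ sequences in $\cal L$, compute $U_0^{\bm l}$ for each, and form $U_0^{\max}$, $U_0^{\min}$, and $U_0^{\rm ave}$, from which NMG and NMD follow directly; as a consistency check, $U_0^{\max}$ must coincide with the payoff of the sequence returned by the OVBS algorithm of Theorem \ref{theoremadd}.

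I would then repeat this over the stated $30{,}000$ trials, record the two ratios per trial, and report their sample averages, verifying that they clear the $19\%$ and $9\%$ thresholds claimed for NMG and NMD, respectively. To make the ``more than'' assertions trustworthy I would attach a Monte Carlo standard error computed from the $30{,}000$ i.i.d.\ trials, so that the gap between the observed means and the thresholds is clearly larger than sampling noise, and I would rerun the whole pipeline under the uniform law to confirm that the conclusion is not an artifact of the normal assumption.

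The main obstacle is not analytical but computational and statistical. On the computational side, the backward-induction evaluation of $U_0^{\bm l}$ implicitly branches over all $2^{k-1}$ possible prefixes $\bm b_{k-1}$ at step $k$, so a naive recomputation of $\Delta_k$ for every sequence costs $O(2^N)$ per sequence; for $N=5$ this is negligible, but I would still memoize the shared sub-sequence outcomes $B_m^s(\cdot)$ of (\ref{equ:Bfunction}) to avoid redundant work across the $120$ sequences. On the statistical side, the delicate point is that the claim concerns population means of ratios whose denominators ($U_0^{\min}$ and $U_0^{\rm ave}$) can occasionally be small, so I would inspect the per-trial ratios for heavy tails that might distort the average before trusting the reported $9\%$ and $19\%$ figures.
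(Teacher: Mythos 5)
Your proposal is correct and matches the paper's own support for this Observation: the statement is empirical, and the paper establishes it exactly as you describe---Monte Carlo draws of $X_n^t$ and $Q_n$ from the stated truncated-normal and uniform laws with $N=5$, $T=2$, $f_t(x)=x^{0.3}$, exhaustive evaluation of $U_0^{\bm l}$ over all $N!=120$ sequences via the backward induction of Section \ref{Section4}, and sample averages of NMG and NMD over $30{,}000$ trials (yielding $19.8\%$ and $9.2\%$, reported in Figures \ref{fig:1A} and \ref{fig:1B}). Your additions---standard-error reporting, the OVBS consistency check, and inspection of heavy-tailed ratios---go beyond what the paper does but do not change the method.
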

\subsubsection{Influences of ${\mathbb E}\left\{X_n^t\right\}$ and ${\mathbb E}\left\{Q_n\right\}$} We investigate the influences of the means of $X_n^t$ and $Q_n$ on the performance of the optimal sequencing. The settings of $\left|{\cal N}\right|$, $T$, and $f_t\left(x\right)$ are the same as those in Section \ref{simulationtext1}.

\begin{figure*}[t]
\begin{minipage}[t]{0.47\linewidth}
\centering
  \includegraphics[scale=0.42]{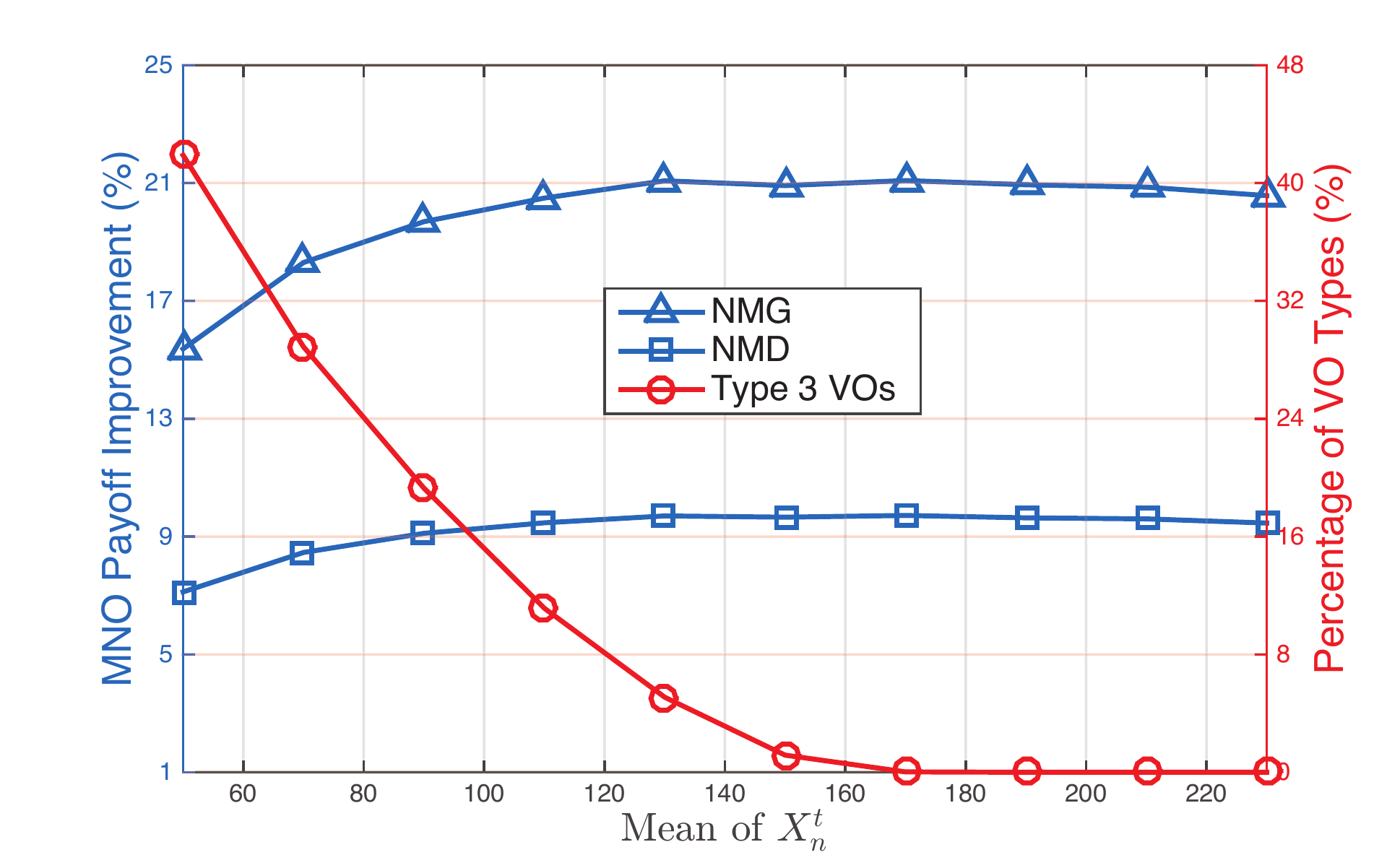}
  \caption{Influence of ${\mathbb{E}}\left\{X_n^t\right\}$ on NMG and NMD.}
  \label{fig:2A}
\end{minipage}
\begin{minipage}[t]{.47\linewidth}
\centering
  \includegraphics[scale=0.42]{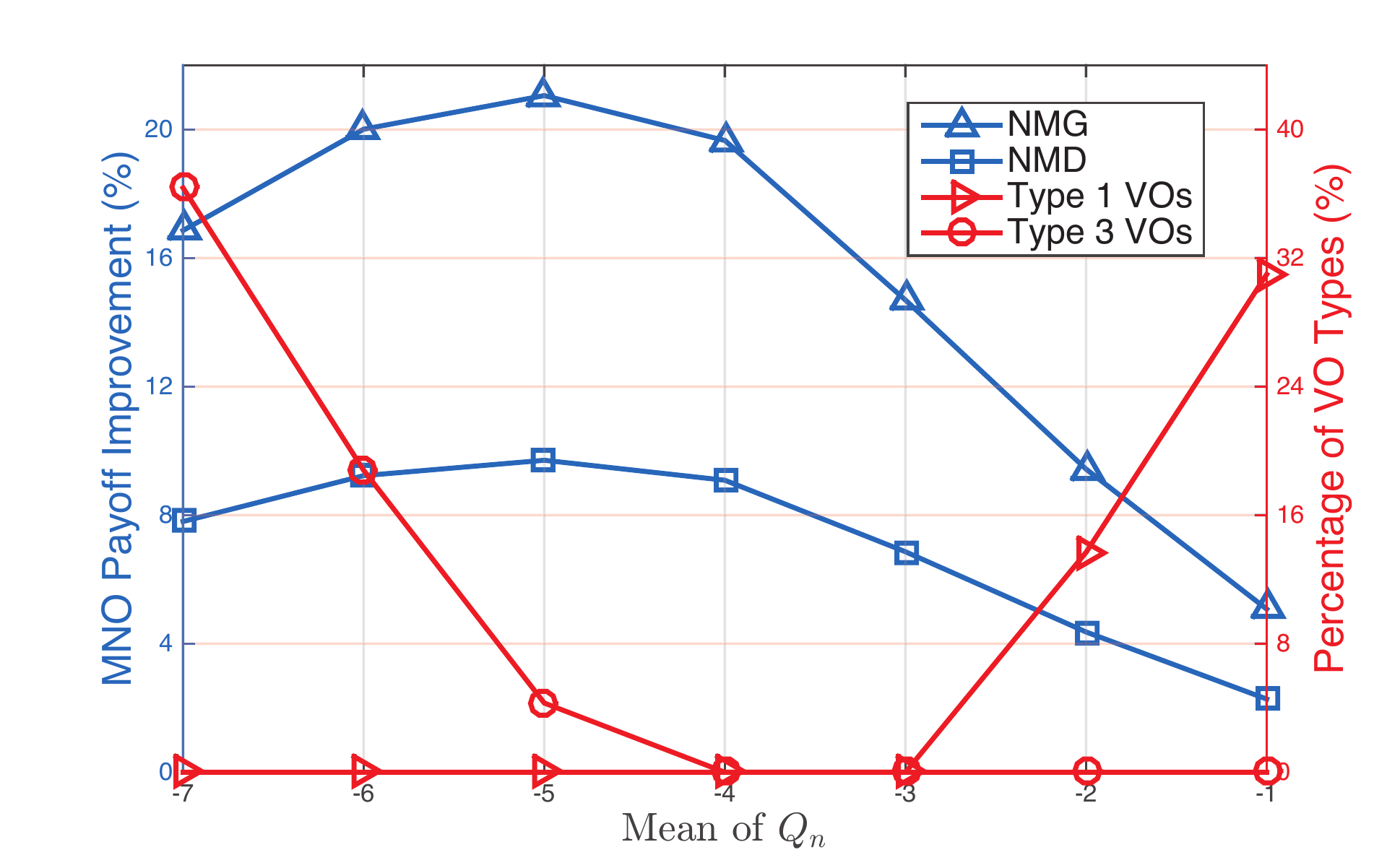}
  \centering
  \caption{Influence of ${\mathbb{E}}\left\{Q_n\right\}$ on NMG and NMD.}
  \label{fig:2B}
\end{minipage}
\end{figure*}

\begin{figure*}[t]
\begin{minipage}[t]{0.47\linewidth}
\centering
  \includegraphics[scale=0.42]{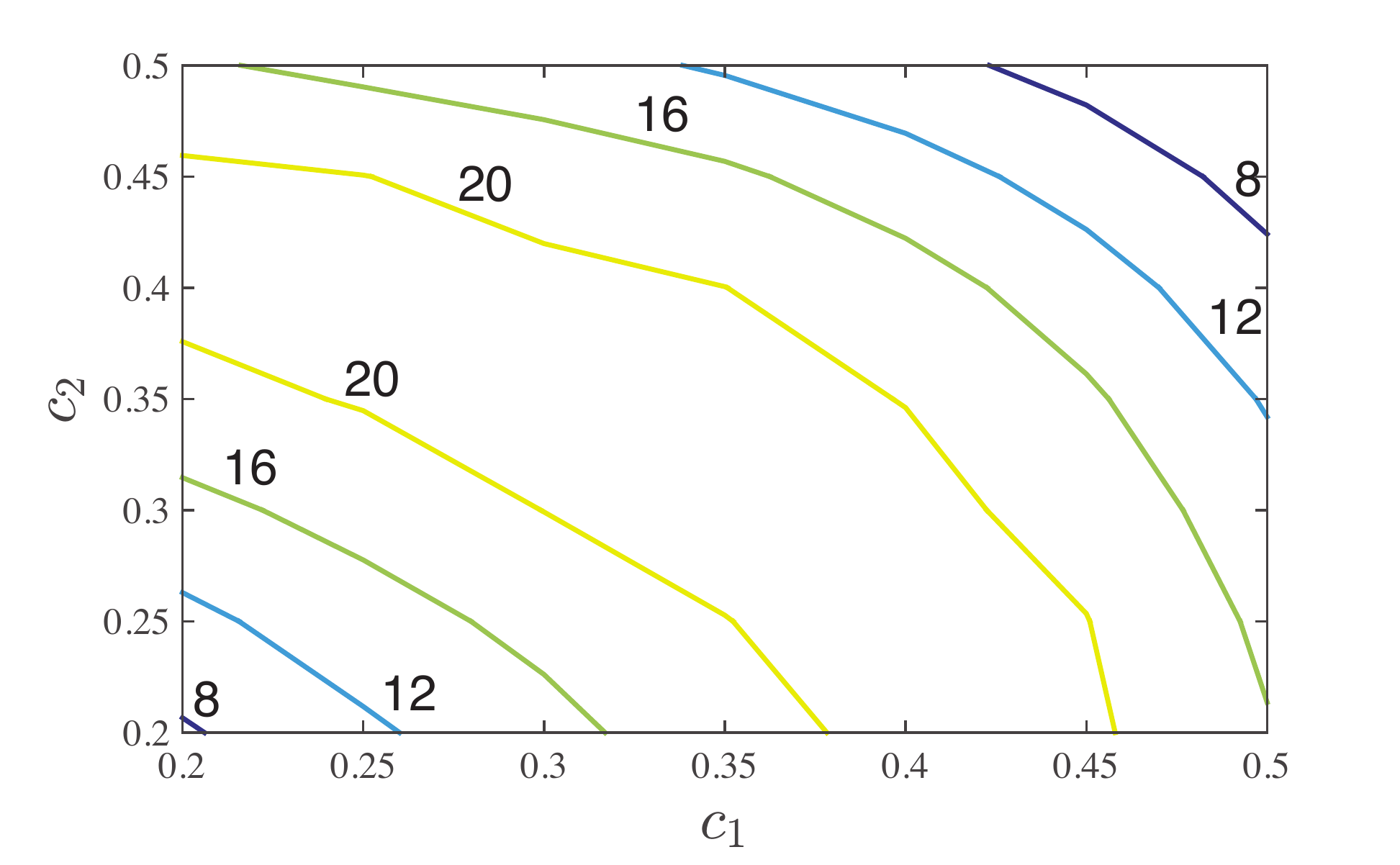}
  \caption{Expected NMG under Different $f_t\left(\cdot\right)$ (\%).}
  \label{fig:4A}
\end{minipage}
\begin{minipage}[t]{.47\linewidth}
\centering
  \includegraphics[scale=0.42]{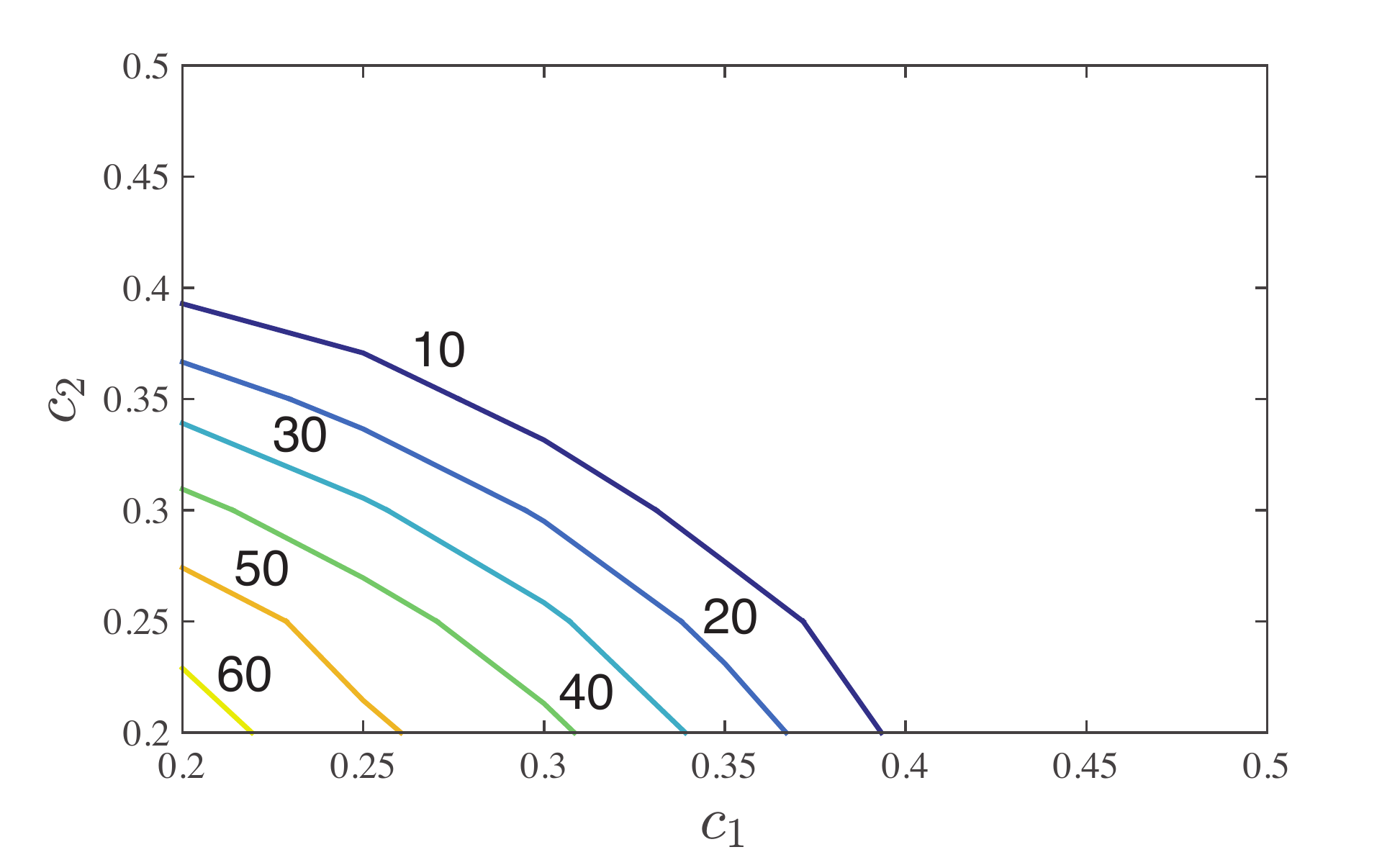}
  \caption{Percentage of Type $3$ VOs under Different $f_t\left(\cdot\right)$ (\%).}
  \label{fig:4B}
\end{minipage}
\end{figure*}

First, we study the influence of ${\mathbb{E}}\left\{X_n^t\right\}$ in Figure \ref{fig:2A}. We assume that $Q_n,n\in{\cal N},$ follows the same distribution as that in Figure \ref{fig:1A}. Moreover, we generate the distribution of $X_n^t,n\in{\cal N},t=1,2$, by truncating the normal distribution ${\cal N}\left({\mathbb{E}}\left\{X_n^t\right\},900\right)$ to interval $\left[{\mathbb{E}}\left\{X_n^t\right\}-30,{\mathbb{E}}\left\{X_n^t\right\}+30\right]$, where ${\mathbb{E}}\left\{X_n^t\right\}$ changes from $50$ to $230$. For each value of ${\mathbb{E}}\left\{X_n^t\right\}$, we run the experiments $10,000$ times, and compute the expected values of NMG and NMD. We plot the expected values of NMG and NMD against ${\mathbb{E}}\left\{X_n^t\right\}$ in Figure \ref{fig:2A}. Since the percentage of type $3$ VOs changes according to $X_n^t$ based on Definition \ref{definition1}, we also plot the expected percentage of type $3$ VOs against ${\mathbb{E}}\left\{X_n^t\right\}$.

In Figure \ref{fig:2A}, we observe that both NMG and NMD slightly increase when ${\mathbb{E}}\left\{X_n^t\right\}$ increases from $50$ to $130$. This is because when ${\mathbb{E}}\left\{X_n^t\right\}$ is small, the percentage of type $3$ VOs is large. Based on Proposition \ref{proposition2}, the MNO never cooperates with these type $3$ VOs. Hence, for a small ${\mathbb{E}}\left\{X_n^t\right\}$, the influence of the bargaining sequence on the MNO's payoff is small, and the benefit of the optimal sequencing is small as well. When ${\mathbb{E}}\left\{X_n^t\right\}$ increases from $130$ to $230$, the percentage of type $3$ VOs decreases to zero, and there are no significant changes in NMG and NMD.

Second, we investigate the influence of ${\mathbb{E}}\left\{Q_n\right\}$ in Figure \ref{fig:2B}. We assume that $X_n^t,n\in{\cal N},t=1,2,$ follows the same distribution as that in Figure \ref{fig:1A}. Furthermore, we obtain the distribution of $Q_n,n\in{\cal N}$, by truncating the normal distribution ${\cal N}\left({\mathbb{E}}\left\{Q_n\right\},9\right)$ to interval $\left[{\mathbb{E}}\left\{Q_n\right\}-3,{\mathbb{E}}\left\{Q_n\right\}+3\right]$, where ${\mathbb{E}}\left\{Q_n\right\}$ changes from $-7$ to $-1$. For each value of ${\mathbb{E}}\left\{Q_n\right\}$, we run the experiments $10,000$ times, and obtain the expected values of NMG and NMD. We plot the expected values of NMG and NMD against ${\mathbb{E}}\left\{Q_n\right\}$ in Figure \ref{fig:2B}. Based on Definition \ref{definition1}, $Q_n$ influences the percentages of both type $1$ and type $3$ VOs. Hence, we also plot the expected percentages of type $1$ and type $3$ VOs against ${\mathbb{E}}\left\{Q_n\right\}$.

In Figure \ref{fig:2B}, we observe that both NMG and NMD first increase and then decrease. The reason is that under a small ${\mathbb{E}}\left\{Q_n\right\}$, there are many type $3$ VOs, which the MNO never cooperates with based on Proposition \ref{proposition2}. Furthermore, under a large ${\mathbb{E}}\left\{Q_n\right\}$, there are many type $1$ VOs, which the MNO always cooperates with based on Proposition \ref{proposition1}. Only under a medium ${\mathbb{E}}\left\{Q_n\right\}$, the bargaining sequence has a large impact on the MNO's payoff, and both NMG and NMD become large. Compared with ${\mathbb{E}}\left\{X_n^t\right\}$ in Figure \ref{fig:2A}, we find that the change in ${\mathbb{E}}\left\{Q_n\right\}$ results in more significant changes of NMG and NMD.

We summarize the observations in Figures \ref{fig:2A} and \ref{fig:2B} as follows.
\begin{observation}
The change in ${\mathbb{E}}\left\{Q_n\right\}$ has a larger impact on the performance of the optimal sequencing than that of ${\mathbb{E}}\left\{X_n^t\right\}$. The benefit of the optimal sequencing is most significant for a medium ${\mathbb{E}}\left\{Q_n\right\}$.
\end{observation}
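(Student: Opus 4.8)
The plan is to treat this statement as a structural \emph{explanation} of the simulation data rather than a deductive theorem, since NMG and NMD are statistics of a random instance and the claim is about their qualitative dependence on the distribution means. The whole argument rests on the fact that the sequencing benefit is created \emph{only} by type $2$ VOs. By Proposition \ref{proposition1} the MNO always cooperates with every type $1$ VO, and by Proposition \ref{proposition2} it never cooperates with any type $3$ VO, independently of position; Theorem \ref{theoremB} further shows that when every VO is type $1$ the payoff is exactly sequence-independent. Hence the contributions of type $1$ and type $3$ VOs to $U_0^{\bm l}$ are (to leading order) invariant across $\bm l$, so the spread $U_0^{\max}-U_0^{\min}$, and therefore both NMG and NMD, are governed by the number $N_2$ of type $2$ VOs. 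Consistent with this, the reduced search set of Algorithm \ref{algorithm2} has cardinality $N_2!$: the problem is trivial when $N_2\in\{0,1\}$ (all sequences coincide, so $\mathrm{NMG}=\mathrm{NMD}=0$) and becomes richer as $N_2$ grows. This reduces the observation to understanding how $\mathbb{E}\{N_2\}$ depends on $\mathbb{E}\{Q_n\}$ versus $\mathbb{E}\{X_n^t\}$.

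First I would compute the type-membership probabilities from Definition \ref{definition1}. A VO is type $2$ exactly when $-\sum_{t=1}^T f_t(X_n^t)\le Q_n<0$, so $\mathbb{P}(\text{type }2)=\mathbb{P}\big(-\sum_t f_t(X_n^t)\le Q_n<0\big)$. Viewed as a function of $\mathbb{E}\{Q_n\}$ with the $X$-distribution fixed, this probability is a unimodal bump: it vanishes as $\mathbb{E}\{Q_n\}\to+\infty$ (almost all VOs become type $1$) and as $\mathbb{E}\{Q_n\}\to-\infty$ (almost all type $3$), and is maximized at an intermediate value where the mass of $Q_n$ sits inside the window $[-\sum_t f_t(X_n^t),0)$. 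Combined with the monotone dependence of the sequencing gain on $N_2$, this directly yields the ``most significant for a medium $\mathbb{E}\{Q_n\}$'' half of the statement, matching the non-monotone curves of Figure \ref{fig:2B}.

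For the comparative-sensitivity half I would compare how the two means move the type boundaries. The key asymmetry is that $Q_n$ enters \emph{linearly and directly}: both the type-$1$/type-$2$ threshold $Q_n=0$ and the type-$2$/type-$3$ threshold $Q_n=-\sum_t f_t(X_n^t)$ shift with unit slope in $Q_n$. By contrast, $X_n^t$ enters only through the \emph{concave} aggregate $\sum_t f_t(X_n^t)$, whose marginal $f_t'(\cdot)$ is decreasing and small in the operating regime. Thus shifting $\mathbb{E}\{Q_n\}$ sweeps probability mass across both type boundaries at full rate, whereas an equal shift in $\mathbb{E}\{X_n^t\}$ only moves the single type-$2$/type-$3$ boundary, does so with an attenuated, diminishing effect, and never touches the type-$1$ region at all. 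This makes $\mathbb{E}\{N_2\}$—and hence NMG and NMD—far more responsive to $\mathbb{E}\{Q_n\}$, consistent with the flat curves of Figure \ref{fig:2A} against the pronounced hump of Figure \ref{fig:2B}.

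The main obstacle is that ``larger impact on performance'' is only loosely tied to $\mathbb{E}\{N_2\}$: two instances with the same $N_2$ can yield very different NMG, because the magnitude of the sequencing gain also depends on the realized $Q$- and $X$-values of the type $2$ VOs and on the curvature of $f_t$, not merely on their count. A fully rigorous claim would require controlling the conditional spread $\mathbb{E}[U_0^{\max}-U_0^{\min}\mid N_2]$ and ruling out compensating effects from the concavity of $f_t$. I expect this to be the hard part, and I would therefore argue the observation at the level of qualitative tendencies—the gain is increasing in $N_2$, $\mathbb{E}\{N_2\}$ is unimodal in $\mathbb{E}\{Q_n\}$, and $\mathbb{E}\{N_2\}$ is only weakly and diminishingly responsive to $\mathbb{E}\{X_n^t\}$—rather than claiming an exact inequality, which is all that the figures and the word \emph{Observation} actually assert.
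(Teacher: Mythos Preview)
Your proposal is sound and, importantly, correctly identifies that this statement is a numerical \emph{Observation} rather than a theorem: in the paper there is no proof at all. The paper simply summarizes the simulation curves in Figures~\ref{fig:2A} and~\ref{fig:2B}, together with the accompanying text noting that (i) as $\mathbb{E}\{X_n^t\}$ grows the fraction of type~$3$ VOs decreases and NMG/NMD change only slightly, and (ii) as $\mathbb{E}\{Q_n\}$ varies both the type~$1$ and type~$3$ fractions move, so NMG/NMD are unimodal with a peak at intermediate $\mathbb{E}\{Q_n\}$. Your explanation via the type-$2$ count $N_2$ and Propositions~\ref{proposition1}--\ref{proposition2}/Theorem~\ref{theoremB} is exactly the mechanism the paper appeals to, just stated more carefully.

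Where you go beyond the paper is the comparative-sensitivity argument: the paper merely \emph{asserts} from the plots that $\mathbb{E}\{Q_n\}$ has the larger effect, whereas you give a structural reason (the type boundaries move with unit slope in $Q_n$ but only through the damped map $X_n^t\mapsto \sum_t f_t(X_n^t)$). That is a genuine addition and a nice one; the caveat you flag---that NMG depends on more than $N_2$---is real, but since the paper itself treats the claim as an empirical summary, your qualitative argument is already at least as strong as what the paper provides.
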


We further investigate the influence of the concavity of function $f_t\left(\cdot\right)$ on the performance of the optimal sequencing in Figures \ref{fig:4A} and \ref{fig:4B}. We choose the same settings on $\left|{\cal N}\right|$, $T$, and the distributions of $X_n^t$ and $Q_n$ as Figure \ref{fig:1A}. Furthermore, we assume $f_t\left(x\right)=x^{c_t},t=1,2,$ and choose $c_1$ and $c_2$ from $0.2$ to $0.5$, respectively. Note that a smaller $c_t$ means a more concave function $f_t\left(\cdot\right)$. For each pair of $\left(c_1,c_2\right)$, we run the experiment 3,000 times and compute the expected NMG and the percentage of type $3$ VOs, as shown in Figures \ref{fig:4A} and \ref{fig:4B}, respectively.

\begin{figure*}[t]
  \centering
  \begin{minipage}[t]{.48\linewidth}
  \includegraphics[scale=0.42]{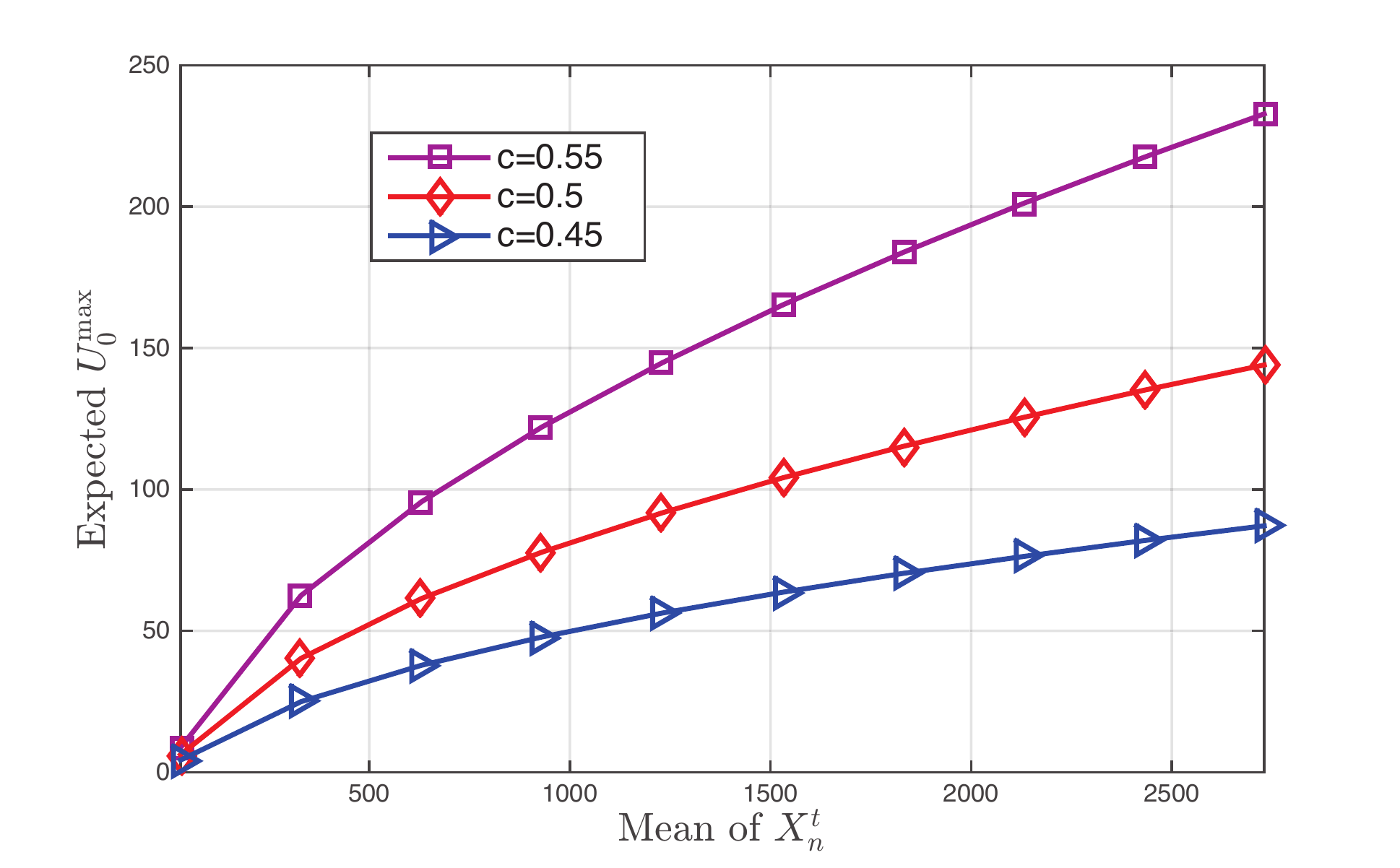}
  \caption{Influence of ${\mathbb{E}}\left\{X_n^t\right\}$ on $U_0^{\max}$.}
  \label{fig:7A}
  \end{minipage}
  \begin{minipage}[t]{.48\linewidth}
  \centering
  \includegraphics[scale=0.42]{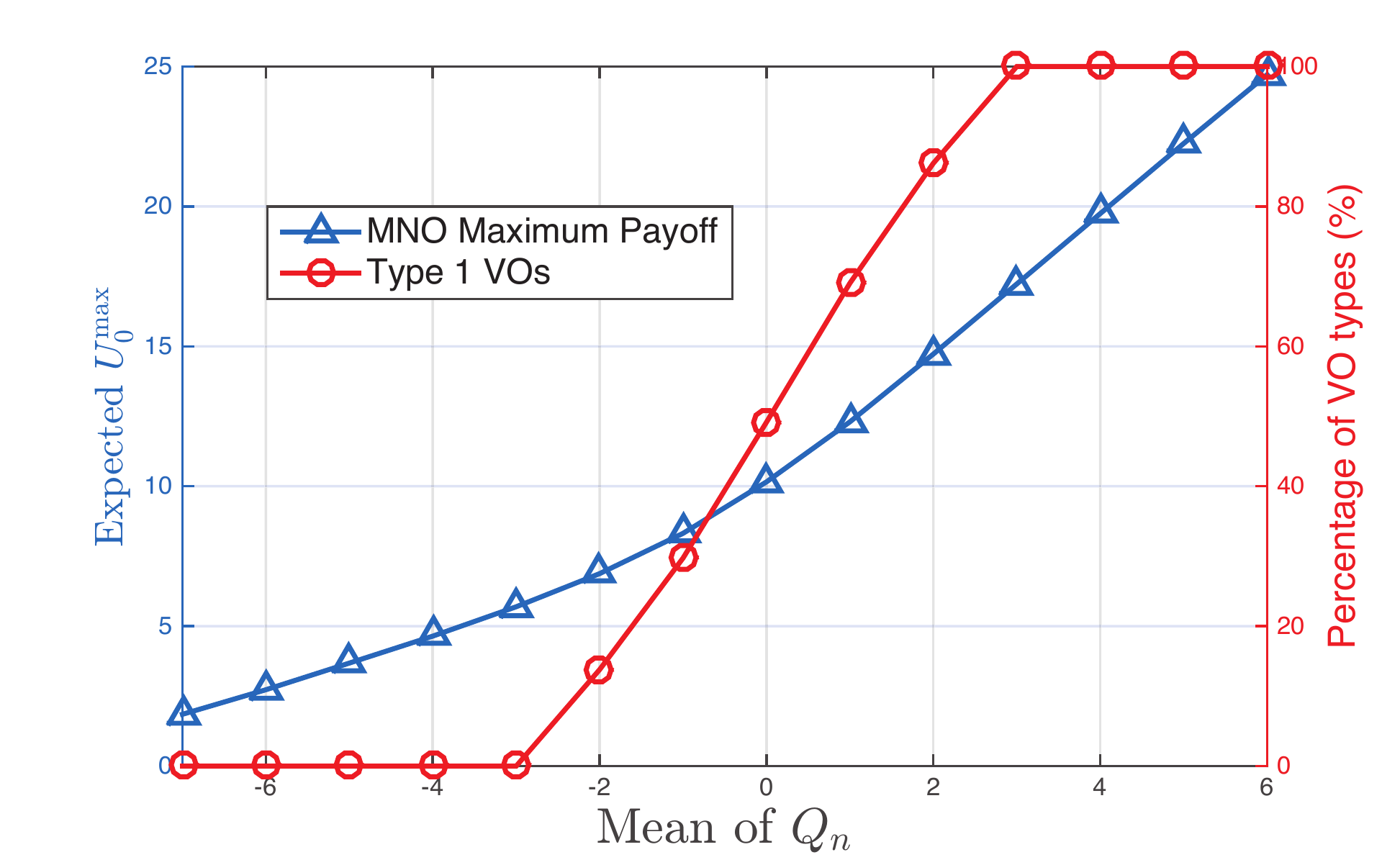}
  \caption{Influence of ${\mathbb{E}}\left\{Q_n\right\}$ on $U_0^{\max}$.}
  \label{fig:7B}
  \end{minipage}
\end{figure*}

In Figure \ref{fig:4A}, we observe that the expected NMG reaches its peak value for medium $c_1$ and $c_2$. This is because when both $c_1$ and $c_2$ are small, the offloading benefit for the MNO is small and most VOs are of type $3$ as shown in Figure \ref{fig:4B}. Recall that the MNO never cooperates with these type $3$ VOs. Hence, for small $c_1$ and $c_2$, the optimal sequencing does not significantly improve the MNO's payoff. {{When both $c_1$ and $c_2$ are large, functions $f_1\left(\cdot\right)$ and $f_2\left(\cdot\right)$ become less concave. In this case, given the same number of deployed Wi-Fi networks, the MNO is more willing to deploy new Wi-Fi networks. That is to say, the externalities among different steps of bargaining become weaker. As a result, different bargaining steps are less tightly coupled, and the bargaining sequence has a smaller impact on the MNO's payoff. Therefore, the advantage of the optimal sequencing reduces and the expected NMG decreases.}}

We summarize the following observation for Figures \ref{fig:4A} and \ref{fig:4B}.
\begin{observation}
The benefit of the optimal sequencing is most significant when the offloading benefit function $f_t\left(\cdot\right)$ has a medium concavity.
\end{observation}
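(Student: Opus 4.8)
The plan is to treat this observation as a \emph{limit analysis} of the normalized maximum gap $\mathrm{NMG}$ viewed as a function of the concavity exponents $(c_1,c_2)$ in $f_t(x)=x^{c_t}$, rather than as an exact optimization. Since the paper itself stresses that $U_0^{\bm l}$ has no closed form once type~$2$ VOs are present, I would not attempt to differentiate $\mathrm{NMG}$ in $(c_1,c_2)$. Instead I would (i) show that $\mathrm{NMG}\to 0$ at both ends of the concavity range, and (ii) identify the single mechanism that makes $\mathrm{NMG}$ strictly positive in between, so that continuity forces the maximum into the interior.

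First I would handle the \emph{near-linear} end $c_t\to 1$. As $f_t$ approaches a linear function, the externality across bargaining steps vanishes, and by the footnote following~(\ref{MNOpayoff}) the one-to-many problem degenerates into $N$ independent one-to-one bargainings, each governed by Proposition~\ref{proposition:one2one} with a surplus $\sum_t f_t(X_n^t)+Q_n$ that is independent of the other VOs and of position. Hence $U_0^{\bm l}$ is identical for every sequence, giving $U_0^{\max}=U_0^{\min}$ and $\mathrm{NMG}=0$; by continuity of $U_0^{\bm l}$ in $c_t$ (a finite composition of continuous operations), $\mathrm{NMG}\to 0$ as $c_t\to 1$. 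For the \emph{strongly concave} end $c_t\to 0^{+}$, the quantity $\sum_t f_t(X_n^t)=\sum_t (X_n^t)^{c_t}$ stays bounded and small while $Q_n<0$ dominates it over the chosen parameter ranges, so $\sum_t f_t(X_n^t)+Q_n<0$ for almost every VO; by Definition~\ref{definition1} these are type~$3$, and by Proposition~\ref{proposition2} the MNO cooperates with none of them regardless of position. The surviving sequence-dependence is then confined to a vanishing fraction of non-type-$3$ VOs, so $\mathrm{NMG}$ is again small. This is exactly the regime in which Figure~\ref{fig:4B} reports a type-$3$ percentage approaching one.

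The central step is to localize the source of $\mathrm{NMG}$, and here lies the main obstacle. I would first prove that $\mathrm{NMG}$ can be positive only when type~$2$ VOs are present: by Proposition~\ref{proposition1} the MNO cooperates with every type~$1$ VO and by Proposition~\ref{proposition2} with no type~$3$ VO, so absent type~$2$ VOs the cooperation set is fixed; Proposition~\ref{proposition4} then lets one move every type~$3$ VO out of the way through payoff-preserving adjacent swaps, reducing the instance to the all-type-$1$ case of Theorem~\ref{theoremB}, for which $U_0$ is sequence-independent and hence $\mathrm{NMG}=0$. Consequently only type~$2$ VOs, whose cooperation outcome can flip with the order (cf.\ Proposition~\ref{proposition:nonconsecutive}), can produce $U_0^{\max}>U_0^{\min}$.

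Finally I would argue that the interior is where $\mathrm{NMG}$ is large by exhibiting two competing monotone effects in $c_t$: the \emph{size} of the type-$2$ population grows with $c_t$, because the admissible window $-\sum_t f_t(X_n^t)\le Q_n<0$ widens as $\sum_t f_t(X_n^t)$ increases; whereas the \emph{externality strength}, i.e.\ the curvature of $f_t$ that couples the bargaining steps, shrinks as $c_t$ increases. Since $\mathrm{NMG}$ is driven by the product of a sequence-sensitive population and the coupling that makes it sequence-sensitive, a quantity behaving like the product of one increasing and one decreasing nonnegative factor is maximized at an interior $(c_1,c_2)$, which is the ``medium concavity'' conclusion (and the same reasoning applies verbatim to $\mathrm{NMD}$). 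The hard part is that the absence of a closed form for $U_0^{\bm l}$ in type-$2$-heavy instances blocks a rigorous proof of strict unimodality or of a unique peak; the rigorous content is therefore limited to the two vanishing-limit claims together with strict positivity of $\mathrm{NMG}$ at some intermediate $(c_1,c_2)$, with the precise peak location in Figure~\ref{fig:4A} established numerically rather than analytically.
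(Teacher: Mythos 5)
Be aware first of what the paper itself does with this statement: it is an \emph{observation}, not a theorem, and the paper offers no proof of it anywhere, including the appendix. Its entire support is Monte Carlo simulation over $(c_1,c_2)\in[0.2,0.5]^2$ (Figures \ref{fig:4A} and \ref{fig:4B}) plus a verbal mechanism --- at small $c_t$ most VOs are type $3$ and, by Proposition \ref{proposition2}, never cooperated with; at large $c_t$ the functions are less concave, the externalities weaken, and the sequence matters less. Your proposal is therefore a genuinely different route: you convert that intuition into a limit argument (NMG vanishes at both concavity extremes, is positive in between, and is continuous, so its maximum is interior). The two end mechanisms you invoke coincide exactly with the paper's, and your localization lemma --- that $U_0^{\bm l}$ is sequence-independent whenever no type-$2$ VO is present --- is correct and sharper than anything the paper states: an adjacent swap involving a type-$3$ VO preserves the payoff by Proposition \ref{proposition4} (payoff equality is symmetric in the two orders), an adjacent swap of two type-$1$ VOs preserves it by applying Proposition \ref{proposition3} in both directions, adjacent transpositions generate all permutations, and the all-type-$1$ core is sequence-independent by Theorem \ref{theoremB}. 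What your route buys is a qualitative theorem (an interior maximum exists) where the paper only has data; what it gives up --- as you candidly concede --- is the actual quantitative content of the observation, namely where the peak sits and how large it is, which remains numerical in both treatments.

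Three repairs would tighten your rigorous part. (i) At the strongly concave end, $x^{c}\to 1$ as $c\to 0^{+}$ for $x>0$, so $\sum_t f_t(X_n^t)\to T=2$, and since $-Q_n\ge 3$ under the paper's ranges, for all sufficiently small $c_t$ \emph{every} VO is deterministically type $3$; but then $U_0^{\bm l}=0$ for every sequence, so NMG is a $0/0$ expression rather than ``small.'' The clean limit claim is that the absolute gap $U_0^{\max}-U_0^{\min}$ vanishes, or that NMG $\to 0$ conditional on at least two type-$2$ VOs surviving. (ii) Continuity of $U_0^{\bm l}$ in $(c_1,c_2)$ should be routed explicitly through Lemma \ref{lemma1}: the decisions $b_k^*$ jump, but the recursion $W_k=\frac{1}{2}W_{k+1}(\cdot,0)+\frac{1}{2}\max\{W_{k+1}(\cdot,0),W_{k+1}(\cdot,1)+Q_{k+1}\}$ expresses $W_0=U_0$ as a finite composition of continuous maps, which is what your parenthetical assertion needs. (iii) The strict positivity at an intermediate point, which you list as required but never supply, is available from the paper itself: Examples 1--2 use $f(x)=\sqrt{x}$, i.e.\ $c=0.5$, and yield $U_0=0.875$ versus $U_0=1$ under the two orders, hence NMG $>0$ there. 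Finally, note that your increasing-times-decreasing-factor heuristic would not by itself force an interior maximum (one factor can dominate up to the boundary); it is the limits-plus-positivity-plus-continuity argument that does the work, and that part of your write-up is sound.
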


\subsection{MNO's Payoff}
In Figures \ref{fig:7A} and \ref{fig:7B}, we study the impact of different parameters on the MNO's maximum payoff, \emph{i.e.}, $U_0^{\max}$.
\subsubsection{Influence of ${\mathbb E}\left\{X_n^t\right\}$}

We apply the same simulation settings on $\left|{\cal N}\right|$, $T$, and the distributions of $X_n^t$ and $Q_n$ as Figure \ref{fig:2A}. {{We assume that $f_t\left(x\right)=x^c$ for $t=1,2,$ and choose $c=0.45,0.5$, and $0.55$. For each $c$, we change ${\mathbb{E}}\left\{ X_n^t\right\}$ from $30$ to $2730$, and illustrate the corresponding expected $U_0^{\max}$ in Figure \ref{fig:7A}.}}
We observe that $U_0^{\max}$ concavely increases with ${\mathbb{E}}\left\{ X_n^t\right\}$. Based on (\ref{MNOpayoff}), such a concavity is due to the concave offloading benefit function $f_t\left(\cdot\right)$. Since a larger $c$ corresponds to a less concave function $f_t\left(\cdot\right)$, we observe in Figure \ref{fig:7A} that an increase of $c$ leads to a decrease of the concavity of the $U_0^{\max}$ curve.

\subsubsection{Influence of ${\mathbb{E}}\left\{Q_n\right\}$}

We use the same simulation settings on $\left|{\cal N}\right|$, $T$, $f_t\left(\cdot\right)$, and the distributions of $X_n^t$ and $Q_n$ as Figure \ref{fig:2B}. We change ${\mathbb{E}}\left\{Q_n\right\}$ from $-7$ to $6$ and illustrate the corresponding $U_0^{\max}$ in Figure \ref{fig:7B}.
We find that $U_0^{\max}$ increases with ${\mathbb{E}}\left\{ Q_n \right\}$, because a large ${\mathbb{E}}\left\{ Q_n \right\}$ implies a large benefit (or a small cost) of deploying Wi-Fi network, and the MNO can earn more profit from the cooperative Wi-Fi deployment. Furthermore, we find that $U_0^{\max}$ eventually linearly increases when ${\mathbb{E}}\left\{Q_n \right\}\ge3$. To explain this, we also show the percentage of type $1$ VOs in Figure \ref{fig:7B}. As ${\mathbb{E}}\left\{Q_n\right\}$ increases, the percentage of type $1$ VOs approaches 100\%. Based on (\ref{socialwelfare}) and (\ref{flipcoin}), when all VOs are of type $1$, we have
\begin{align}
{U_0^{\max }} = \frac{1}{{{2^N}}}\sum\limits_{{\bm b}_N \in {\cal B}}\sum\limits_{t=1}^T {f_t\left( {\sum\limits_{n = 1}^N {{b_n}{X_n^t}} } \right)}  + \frac{1}{2}\sum\limits_{n = 1}^N {{Q_n}},
\end{align}
where ${\cal B}$ is defined in Theorem \ref{theoremB}. Hence, $U_0^{\max}$ linearly increases with ${\mathbb{E}}\left\{ Q_n \right\}$, and the slope of the curve is $N/{2}$.

We conclude the following observations for Figures \ref{fig:7A} and \ref{fig:7B}.
\begin{observation}
The MNO's maximum payoff concavely increases with ${\mathbb{E}}\left\{X_n^t\right\}$, and the concavity of the curve increases with the concavity of function $f_t\left(\cdot\right)$. Moreover, the MNO's maximum payoff increases with ${\mathbb{E}}\left\{Q_n\right\}$. In particular, it linearly increases with ${\mathbb{E}}\left\{Q_n\right\}$ when all VOs are of type $1$.
\end{observation}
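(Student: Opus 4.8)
The observation bundles four comparative-statics claims: (i) $U_0^{\max}$ is increasing and concave in ${\mathbb E}\{X_n^t\}$; (ii) this concavity sharpens as $f_t(\cdot)$ becomes more concave; (iii) $U_0^{\max}$ is increasing in ${\mathbb E}\{Q_n\}$; and (iv) the dependence on ${\mathbb E}\{Q_n\}$ becomes exactly affine once all VOs are type $1$. My plan is to anchor every claim on the closed form that Theorem~\ref{theoremB} supplies in the all-type-$1$ regime, prove the sharp statements there, and then argue that the qualitative behaviour extends to the mixed-type regime by monotonicity of a pointwise maximum.

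First, the $Q$-claims. Substituting the social-welfare definition~(\ref{socialwelfare}) into the flip-a-coin formula~(\ref{flipcoin}) gives, when all VOs are type $1$,
\begin{align}
U_0^{\max} = \frac{1}{2^N}\sum_{{\bm b}_N\in{\cal B}}\sum_{t=1}^{T} f_t\!\left(\sum_{n=1}^{N} b_n X_n^t\right) + \frac{1}{2}\sum_{n=1}^{N} Q_n, \nonumber
\end{align}
so the $X$-dependent average and the linear term $\tfrac12\sum_n Q_n$ separate additively. Since in the simulation the $Q_n$ are drawn independently of the $X_n^t$, taking expectations leaves the first term constant in ${\mathbb E}\{Q_n\}$ and makes the second equal to $\tfrac{N}{2}{\mathbb E}\{Q_n\}$; this is claim (iv), an affine curve of slope $N/2$. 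For the monotonicity claim (iii) in this regime, each $\Psi({\bm b}_N)$ is non-decreasing in every $Q_n$ (its coefficient is $1$ when $b_n=1$ and $0$ otherwise), so the average $U_0^{\max}$ is non-decreasing in $Q_n$. For the general regime I would show that, for each \emph{fixed} sequence ${\bm l}$, the backward-induction payoff $U_0^{\bm l}$ is non-decreasing in every $Q_n$ by induction down the steps on the $\Delta_k$ of~(\ref{kjude}), and then conclude that $U_0^{\max}=\max_{\bm l}U_0^{\bm l}$ inherits monotonicity as a pointwise maximum; a first-order stochastic-dominance shift of the truncated-normal law as ${\mathbb E}\{Q_n\}$ rises then transfers monotonicity to ${\mathbb E}\{U_0^{\max}\}$.

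Next, the $X$-claims. Working again in the all-type-$1$ regime, $U_0^{\max}$ is a finite, nonnegatively-weighted average over ${\bm b}_N\in{\cal B}$ of terms $f_t(\sum_n b_n X_n^t)$, each a concave increasing function composed with a linear map of the vector $(X_n^t)_n$. Since concavity and monotonicity are preserved under affine precomposition and under nonnegative averaging, $U_0^{\max}$ is concave and increasing in the $X$-vector; specializing the $X_n^t$ to a common scale proportional to ${\mathbb E}\{X_n^t\}$ traces a concave increasing curve, which is claim (i). For claim (ii) I would exploit that among power functions a smaller exponent is a concave transform of a larger one: if $c_1<c_2$ then $x^{c_1}=g(x^{c_2})$ with $g(y)=y^{c_1/c_2}$ concave increasing. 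Hence a more concave $f_t$ is, pointwise, a concave increasing transformation of a less concave one, an ordering that survives the composition-and-average in the closed form, so the $U_0^{\max}$ curve is correspondingly more sharply bent.

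The main obstacle is that both closed forms hold only while \emph{every} VO is type $1$, whereas as ${\mathbb E}\{X_n^t\}$ or ${\mathbb E}\{Q_n\}$ sweeps its range the VO types migrate from type $3$ toward type $1$ per Definition~\ref{definition1}, the optimal sequence changes, and $U_0^{\max}$ becomes a maximum over discrete sequences of backward-induction outputs rather than a single smooth expression. For monotonicity in ${\mathbb E}\{Q_n\}$ this is not fatal, since the max-of-monotone argument needs no closed form; the delicate point there is the externality, because cooperating with a newly valuable VO shrinks the marginal offloading benefit and hence the surplus shared on other steps, so the induction on~(\ref{kjude}) must verify that the net effect on the MNO's share stays nonnegative. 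Global concavity in ${\mathbb E}\{X_n^t\}$ across type transitions is harder still and is not guaranteed in closed form; I would therefore present claims (i)--(iv) as exact in the type-$1$ regime and argue that the observed shape persists because type $3$ VOs contribute nothing (the MNO never serves them, Proposition~\ref{proposition2}) while each surviving $U_0^{\bm l}$ retains the concave $f_t$ offloading contribution, leaving the global shape governed by the concave benefit terms. It is precisely this boundary behaviour between type regimes, rather than any single calculation, that makes a fully rigorous global statement delicate and motivates phrasing the result as an observation anchored by the closed forms in the limiting regimes.
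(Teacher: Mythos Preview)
Your approach to claim (iv) --- substituting the social welfare~(\ref{socialwelfare}) into the flip-a-coin formula~(\ref{flipcoin}) to separate the $Q$-linear term and read off the slope $N/2$ --- is exactly what the paper does. For claims (i)--(iii), however, you are doing considerably more work than the paper: this statement is an \emph{Observation} drawn from the numerical experiments in Figures~\ref{fig:7A} and~\ref{fig:7B}, and the paper's supporting text is purely descriptive. It attributes the concavity in ${\mathbb E}\{X_n^t\}$ to the concavity of $f_t$ in~(\ref{MNOpayoff}), notes visually that smaller exponents $c$ bend the curve more, and explains monotonicity in ${\mathbb E}\{Q_n\}$ by the intuition that larger net benefit means more profit --- none of this is argued analytically. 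Your backward-induction monotonicity argument for $U_0^{\bm l}$ in $Q_n$, the concave-transform comparison for power functions, and the max-of-monotone lift are all sound additions that the paper does not attempt. Your closing caveat --- that global concavity across type transitions is not available in closed form and that this is why the statement is phrased as an observation --- is exactly the right reading of the paper's intent; the paper simply never tries to make claims (i)--(iii) rigorous outside the all-type-$1$ regime.
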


\section{Conclusion}\label{sec:conclusion}
In this paper, we investigated the economic interactions among the MNO and VOs in the cooperative Wi-Fi deployment. We analyzed the problem under the one-to-many bargaining framework, with both exogenous and endogenous sequences. For the exogenous case, we applied backward induction to compute the bargaining results in terms of the cooperation decisions and payments for a given bargaining sequence. For the endogenous case, we proposed the OVBS algorithm that searches for the optimal bargaining sequence by leveraging the structural property. Furthermore, we studied the influence of the bargaining sequence on VOs, and found that when VOs are homogenous, the earlier bargaining positions are always no worse for the VOs.
Numerical results showed that the optimal bargaining sequence significantly improves the MNO's payoff as compared with the random and worst bargaining sequences. 
We illustrated that the optimal sequencing is most beneficial when the offloading benefit functions have medium concavities.

In our future work, we will further consider the incomplete information scenario, where the MNO and the VO have limited information of the remaining VOs for each step of the bargaining. Moreover, we are interested in studying the MNO competition, where multiple MNOs compete for the VOs' cooperation.

\bibliographystyle{IEEEtran}
\bibliography{IEEEabrv,bare_conf}
\begin{IEEEbiography}
[{\includegraphics[width=1in,height=1.25in,clip,keepaspectratio]{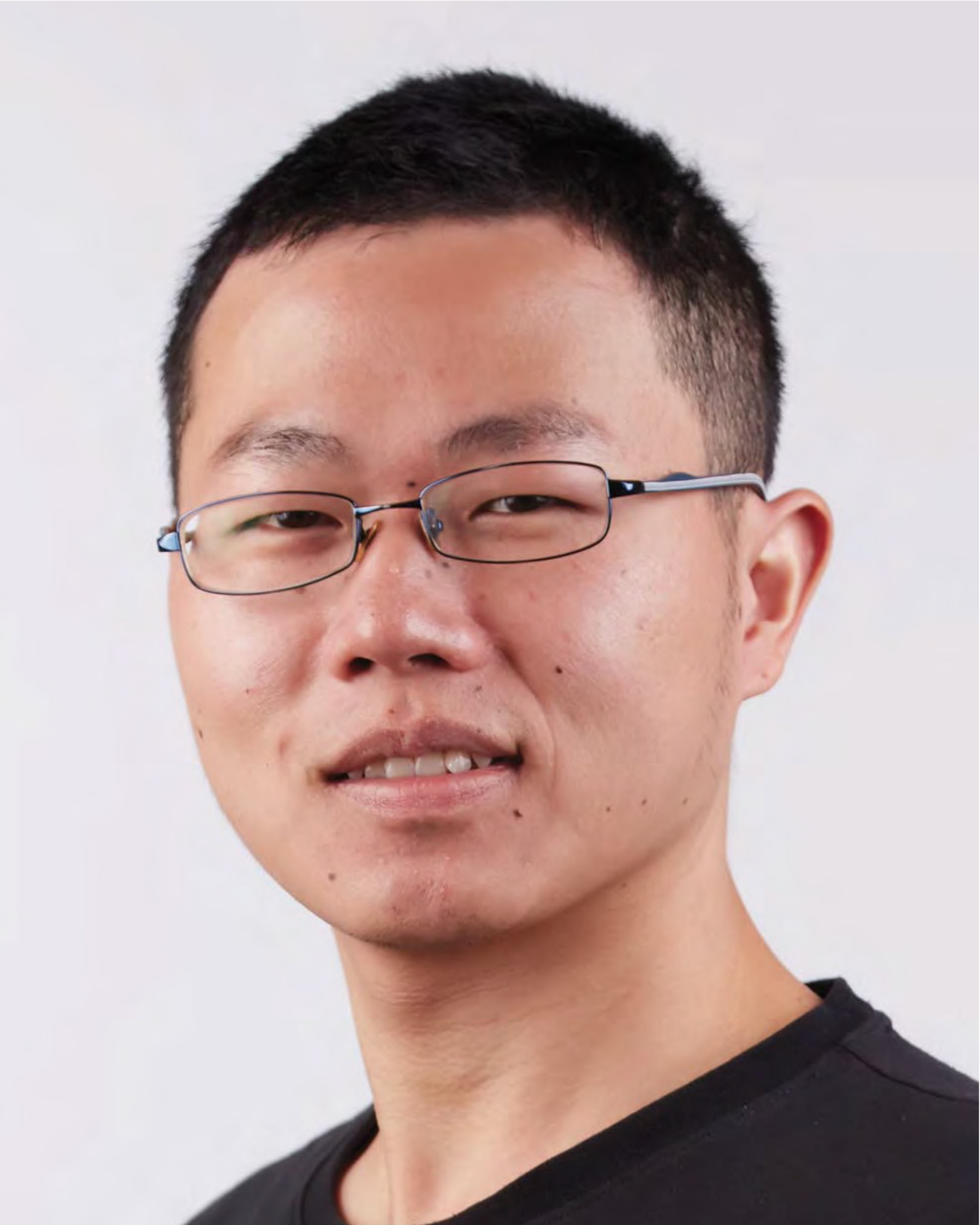}}]
{Haoran Yu} (S'14) is a Ph.D. student in the Department of Information Engineering at the Chinese University of Hong Kong (CUHK). He was a visiting student in the Yale Institute for Network Science (YINS) and the Department of Electrical Engineering at Yale University during 2015-2016. His research interests lie in the field of wireless communications and network economics, with current emphasis on mobile data offloading, cellular/Wi-Fi integration, LTE in unlicensed spectrum, and economics of public Wi-Fi networks. 
He was awarded the Global Scholarship Programme for Research Excellence by CUHK. 
His paper in {\it IEEE INFOCOM 2016} was selected as a Best Paper Award finalist and one of top 5 papers from 1600+ submissions. 
\end{IEEEbiography}


\begin{IEEEbiography}
[{\includegraphics[width=1in,height=1.25in,clip,keepaspectratio]{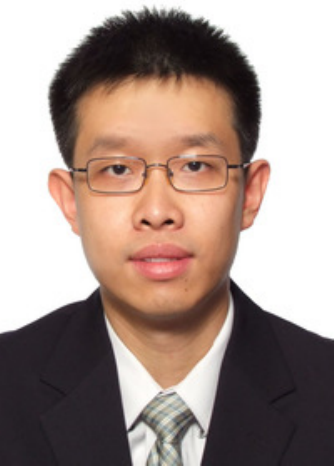}}]
{Man Hon Cheung} received the B.Eng. and M.Phil. degrees in Information Engineering from the Chinese University of Hong Kong (CUHK) in 2005 and 2007, respectively, and the Ph.D. degree in Electrical and Computer Engineering from the University of British Columbia (UBC) in 2012.
 Currently, he is a postdoctoral fellow in the Department of Information Engineering in CUHK.
 He received the IEEE Student Travel Grant for attending {\it IEEE ICC 2009}. He was awarded the Graduate Student International Research Mobility Award by UBC, and the Global Scholarship Programme for Research Excellence by CUHK.
 He serves as a Technical Program Committee member in {\it IEEE ICC}, {\it Globecom}, and {\it WCNC}.
 His research interests include the design and analysis of wireless network protocols using optimization theory, game theory, and dynamic programming, with current focus on mobile data offloading, mobile crowd sensing, and network economics.
\end{IEEEbiography}

\begin{IEEEbiography}[{\includegraphics[width=1in,height=1.25in,clip,keepaspectratio]{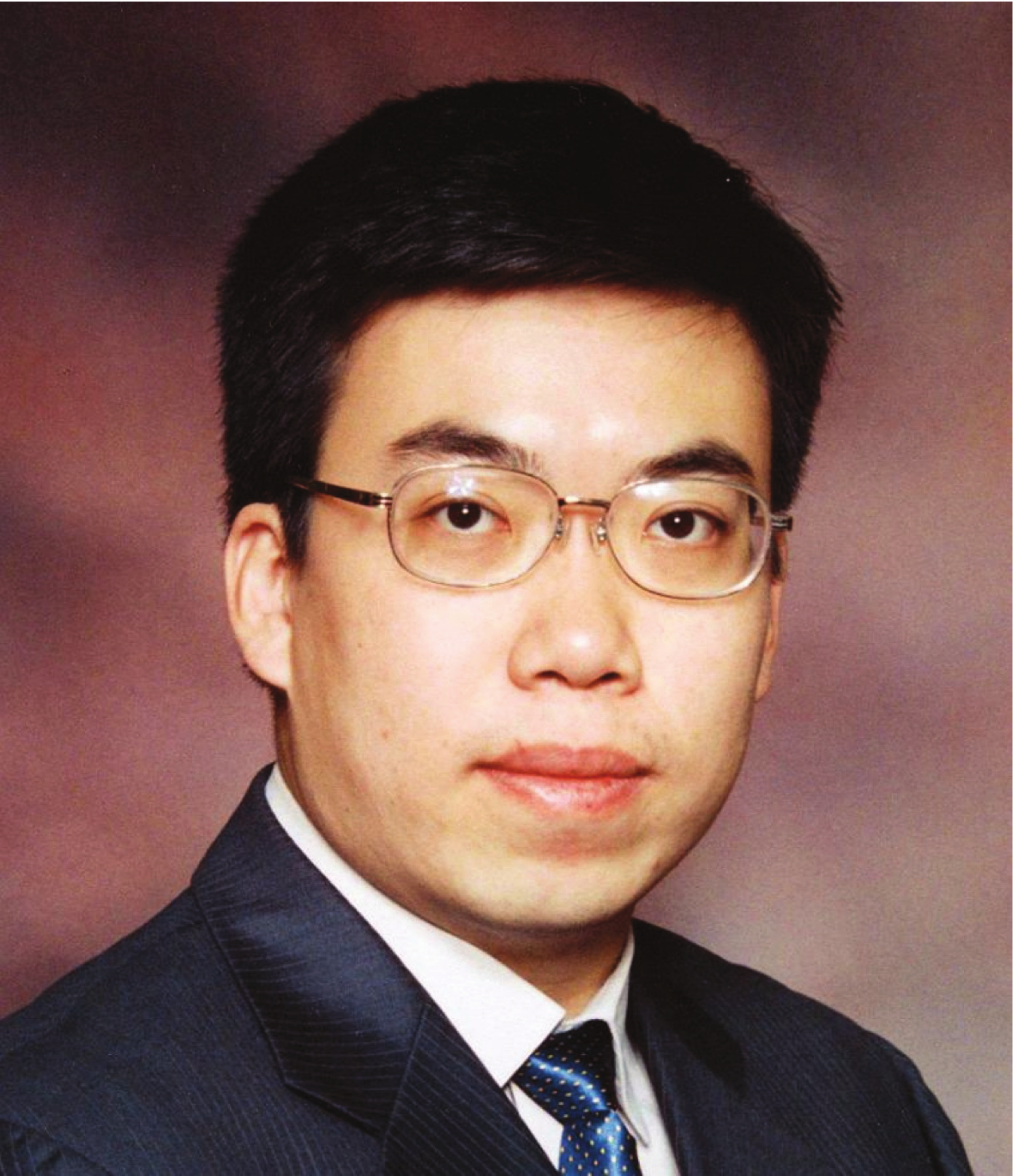}}]
{Jianwei Huang} (S'01-M'06-SM'11-F'16) is an Associate Professor and Director of the Network Communications and Economics Lab (ncel.ie.cuhk.edu.hk), in the Department of Information Engineering at the Chinese University of Hong Kong. He received the Ph.D. degree from Northwestern University in 2005, and worked as a Postdoc Research Associate in Princeton during 2005-2007. He is the co-recipient of 8 international Best Paper Awards, including IEEE Marconi Prize Paper Award in Wireless Communications in 2011. He has co-authored five books: ``Wireless Network Pricing," ``Monotonic Optimization in Communication and Networking Systems,"  ``Cognitive Mobile Virtual Network Operator Games,'' ``Social Cognitive Radio Networks," and ``Economics of Database-Assisted Spectrum Sharing''. He has served as an Associate Editor of IEEE Transactions on Cognitive Communications and Networking, IEEE Transactions on Wireless Communications, and IEEE Journal on Selected Areas in Communications - Cognitive Radio Series. He is the Vice Chair of IEEE ComSoc Cognitive Network Technical Committee and the Past Chair of IEEE ComSoc Multimedia Communications Technical Committee. He is a Fellow of IEEE (Class of 2016) and a Distinguished Lecturer of IEEE Communications Society.
\end{IEEEbiography}

\newpage

\appendices

\section{Proof of Proposition \ref{proposition:one2one}}
\begin{proof}
We study $\Psi \left( {{1}} \right)\ge0$ and $\Psi \left( {{1}} \right)<0$ separately.

(Case 1) $\Psi \left( {{1}} \right)\ge0$: we first consider the optimal $\pi_1^*$ given $b_1=0$. Based on (\ref{socialwelfare}), we have $\Psi\left(0\right)=0$. Hence, with $b_1=0$, the constraints in problem (\ref{formulation2}) become $0-\pi_1\ge0$ and $\pi_1\ge0$. Therefore, the only feasible $\pi_1$ is zero, and the corresponding value of the objective function of problem (\ref{formulation2}) is zero.

Next we consider the optimal $\pi_1^*$ given $b_1=1$. We obtain $\pi_1^*$ by optimizing $\left( {\Psi \left( {{1}} \right) -{\pi _1}} \right) \cdot {\pi _1}$ over $\pi_1\in\left[0,\Psi\left(1\right)\right]$. It is easy to find that $\pi_1^*=\frac{1}{2}\Psi \left( 1 \right)$ and the corresponding value of the objective function of problem (\ref{formulation2}) is $\frac{1}{4}\Psi^2\left(1\right)$.

Finally, we compare the optimal values of the objective function of problem (\ref{formulation2}) under $b_1=0$ and $b_1=1$. Since $\Psi \left( {{1}} \right)\ge0$, we have $\frac{1}{4}\Psi^2 \left( {{1}} \right)\ge0$. Therefore, the optimal solution to problem (\ref{formulation2}) is $\left( {b_1^*,\pi _1^*} \right)=\left( {1,\frac{1}{2}\Psi \left( 1 \right)} \right)$.

(Case 2) $\Psi \left( {{1}} \right)<0$: given $b_1=0$, the analysis of the optimal $\pi_1^*$ is the same as that in Case 1. The only feasible $\pi_1$ is zero, and the corresponding objective function's value is zero.
Given $b_1=1$, the constraints in problem (\ref{formulation2}) become $\Psi\left(1\right)-\pi_1\ge0$ and $\pi_1\ge0$. Since $\Psi \left( {{1}} \right)<0$, there is no feasible solution for $\pi_1$.
Considering the feasibilities of $\pi_1$ under $b_1=0$ and $b_1=1$, we conclude that the optimal solution to problem (\ref{formulation2}) is $\left( {b_1^*,\pi _1^*} \right)=\left( {0,0} \right)$.

Combining Case 1 and Case 2 completes the proof.
\end{proof}

\section{Preliminary Lemmas I}\label{appendix:section:pro1}
In this section, we prove a series of lemmas, which are useful to show the propositions and theorems in the paper.

We first analyze the NBS and the MNO's payoff under a particular bargaining sequence. Without loss of generality, we assume the bargaining sequence follows $1,2,...,N$, \emph{i.e.}, at step $n$, the MNO bargains with VO $n$.

For vector ${\bm b}_k=\left(b_1,b_2,\ldots,b_k\right)$, where $b_1,b_2,\ldots,b_k\in\left\{0,1\right\}$, we define ${W_{k}}\left( {{{\bm b}_{k}}} \right)$, $k\in{\cal N}$, as follows:
\begin{align}
{W_{k}}\left( {{{\bm b}_{k}}} \right) \triangleq \Psi\left(B_N^k\left({\bm b}_{k}\right)\right) - \sum\limits_{n = 1}^{k} {{b_n}{Q_n}}-\sum\limits_{n = k+1}^{N} {{\pi_n^*}\left(B_{n-1}^k\left({\bm b}_k\right)\right)}.\label{complicated}
\end{align}
In particular, we define $W_0$ as
\begin{align}
W_0 \triangleq U_0 = \Psi \left( {\hat{\bm b}}_N \right) - \sum\limits_{n = 1}^N {{{\hat \pi} _n}}.
\end{align}

Next we show $W_{k}\left( {{{\bm b}_{k}}} \right)$ have the following properties.
\begin{lemma}\label{lemma1}
For $k\in\left\{1,\!\ldots,N-1\right\}$ and any ${\bm b}_k$, we have
\begin{align}
\nonumber
& {W_k}\left( {{{\bm b}_k}} \right) = \frac{1}{2}{W_{k+1}}\left( {{{\bm b}_{k}},0}\right)\\
& +\frac{1}{2}\max \left\{ {{W_{k+1}}\left( {{\bm{b}}_k,0} \right),{W_{k+1}}\left( {{{\bm b}_{k}},1} \right) + {Q_{k+1}}} \right\}.\label{appendix:equ:pro1a}
\end{align}
In particular, we have
\begin{align}
{W_0} = \frac{1}{2}{W_{1}}\left( {0} \right) +\frac{1}{2}\max \left\{ {{W_{1}}\left(0\right),{W_{1}}\left( 1 \right) + {Q_{1}}} \right\}.\label{appendix:equ:pro1b}
\end{align}
\end{lemma}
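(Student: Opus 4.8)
The plan is to prove the recursion by expressing $W_k$ directly in terms of $W_{k+1}$ evaluated at the two possible outcomes of step $k+1$, and then invoking the step-$(k+1)$ Nash bargaining solution from (\ref{solutionk}). The starting point is the recursive structure of the outcome operator: by definition (\ref{equ:Bfunction}), fixing the first $k$ outcomes and letting the remaining steps follow backward induction is the same as additionally fixing the $(k+1)$-th outcome to $b_{k+1}^*\left({\bm b}_k\right)$, i.e. $B_N^k\left({\bm b}_k\right)=B_N^{k+1}\left({\bm b}_k,b_{k+1}^*\left({\bm b}_k\right)\right)$, and likewise $B_{n-1}^k\left({\bm b}_k\right)=B_{n-1}^{k+1}\left({\bm b}_k,b_{k+1}^*\left({\bm b}_k\right)\right)$ for every $n\ge k+2$. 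Substituting these into the definition (\ref{complicated}) of $W_k$ and peeling off the $n=k+1$ term of the payoff sum, which equals $\pi_{k+1}^*\left({\bm b}_k\right)$ since $B_k^k\left({\bm b}_k\right)={\bm b}_k$, I obtain the clean identity $W_k\left({\bm b}_k\right)=W_{k+1}\left({\bm b}_k,b_{k+1}^*\left({\bm b}_k\right)\right)+b_{k+1}^*\left({\bm b}_k\right)Q_{k+1}-\pi_{k+1}^*\left({\bm b}_k\right)$.

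The second ingredient, which I expect to be the crux of the argument, is a clean expression for the bargaining threshold $\Delta_{k+1}\left({\bm b}_k\right)$ in terms of the $W_{k+1}$ values. Comparing the definition (\ref{kjude}) of $\Delta_{k+1}$ with the definition (\ref{complicated}) of $W_{k+1}$, I note that the two differ only by the terms $-\sum_{n=1}^k b_n Q_n$ and $-b\,Q_{k+1}$; the first is common to both arguments $b\in\{0,1\}$ and cancels in the difference $W_{k+1}\left({\bm b}_k,1\right)-W_{k+1}\left({\bm b}_k,0\right)$, while the second contributes a single $-Q_{k+1}$. Carrying out this cancellation gives the key identity $\Delta_{k+1}\left({\bm b}_k\right)=\left[W_{k+1}\left({\bm b}_k,1\right)+Q_{k+1}\right]-W_{k+1}\left({\bm b}_k,0\right)$, so that the agreement condition $\Delta_{k+1}\left({\bm b}_k\right)\ge0$ of (\ref{solutionk}) is exactly the condition $W_{k+1}\left({\bm b}_k,1\right)+Q_{k+1}\ge W_{k+1}\left({\bm b}_k,0\right)$ determining which argument attains the maximum in (\ref{appendix:equ:pro1a}).

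With these two identities in hand, I finish by a case split on the sign of $\Delta_{k+1}\left({\bm b}_k\right)$. When $\Delta_{k+1}\left({\bm b}_k\right)\ge0$, the solution gives $b_{k+1}^*=1$ and $\pi_{k+1}^*=\tfrac12\Delta_{k+1}\left({\bm b}_k\right)$; substituting into the first identity and replacing $\Delta_{k+1}$ by the second identity yields $W_k\left({\bm b}_k\right)=\tfrac12 W_{k+1}\left({\bm b}_k,0\right)+\tfrac12\left[W_{k+1}\left({\bm b}_k,1\right)+Q_{k+1}\right]$, which matches (\ref{appendix:equ:pro1a}) because the maximum is attained by the second term. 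When $\Delta_{k+1}\left({\bm b}_k\right)<0$, the solution gives $b_{k+1}^*=0$ and $\pi_{k+1}^*=0$, so the first identity collapses to $W_k\left({\bm b}_k\right)=W_{k+1}\left({\bm b}_k,0\right)$, which again matches (\ref{appendix:equ:pro1a}) since the maximum is now attained by $W_{k+1}\left({\bm b}_k,0\right)$. The boundary identity (\ref{appendix:equ:pro1b}) for $W_0$ follows from the identical argument with $k=0$: there is no prior outcome vector and no $\sum_{n=1}^k b_n Q_n$ term, the role of $B_N^k\left({\bm b}_k\right)$ is played by the full backward-induction outcome $B_N^1\left(b_1^*\right)$, and the step-$1$ solution (\ref{solution1}) together with $\Delta_1=\left[W_1\left(1\right)+Q_1\right]-W_1\left(0\right)$ reproduces the claimed formula. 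The only mild care needed throughout is bookkeeping of the nested $B_{n-1}^{k}$ arguments, which is precisely why isolating the two structural identities before the case split is the cleanest route.
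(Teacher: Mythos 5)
Your proof is correct and takes essentially the same route as the paper's: the identity $\Delta_{k+1}\left({\bm b}_k\right)=W_{k+1}\left({\bm b}_k,1\right)-W_{k+1}\left({\bm b}_k,0\right)+Q_{k+1}$ you derive is exactly the paper's equation (\ref{deltaandW}), and your case split on the sign of $\Delta_{k+1}\left({\bm b}_k\right)$ reproduces the paper's Case 1/Case 2 analysis, including the same closed forms in each branch. The only difference is one of explicitness: you justify the bookkeeping identity $W_k\left({\bm b}_k\right)=W_{k+1}\left({\bm b}_k,b_{k+1}^*\left({\bm b}_k\right)\right)+b_{k+1}^*\left({\bm b}_k\right)Q_{k+1}-\pi_{k+1}^*\left({\bm b}_k\right)$ via the composition property $B_m^k\left({\bm b}_k\right)=B_m^{k+1}\left({\bm b}_k,b_{k+1}^*\left({\bm b}_k\right)\right)$ of the outcome operator, a step the paper compresses into ``according to the definition of $W_k\left({\bm b}_k\right)$.''
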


\begin{proof}
We first prove (\ref{appendix:equ:pro1a}). Recall the NBS for step $k+1$:
\begin{align}
& \left( {b_{k+1}^*\left( {{{\bm b}_{k}}} \right),\pi _{k+1}^*\left( {{{\bm b}_{k}}} \right)} \right)=\\
& \left\{ {\begin{array}{*{20}{l}}
{\left( {1,\frac{1}{2}{\Delta _{k+1}}\left( {{{\bm b}_{k}}} \right)} \right),}&{{\rm if~}{\Delta _{k+1}}\left( {{{\bm b}_{k}}} \right)\ge0},\\
{\left( {0,0} \right),}&{\rm otherwise},
\end{array}} \right.\label{night}
\end{align}
where we define
\begin{align}
\nonumber
&{\Delta _{k+1}}\left( {{\bm b}_{k}} \right) \triangleq \\
\nonumber
& \Psi \left( B_{N}^{k+1}\left({\bm b}_{k},1\right) \right) -\sum\limits_{n = k+2}^{N} {{\pi_n^*}\left( B_{n-1}^{k+1}\left({\bm b}_{k},1\right) \right)}\\
& -\Psi \left( B_{N}^{k+1}\left({\bm b}_{k},0\right) \right)+\sum\limits_{n = k+2}^{N} {{\pi_n^*}\left( B_{n-1}^{k+1}\left({\bm b}_{k},0\right) \right)}.
\end{align}
By checking the definition of $W_{k}\left( {{{\bm b}_{k}}} \right)$, we find:
\begin{eqnarray}
{\Delta _{k+1}}\left( {{\bm b}_{k}} \right)=W_{k+1}\left( {{{\bm b}_{k}}},1\right)-W_{k+1}\left( {{{\bm b}_{k}}},0\right) +Q_{k+1}.\label{deltaandW}
\end{eqnarray}

Now we study the following two cases.

\textbf{Case 1}: ${\Delta _{k+1}}\left( {{\bm b}_{k}} \right)<0$.

Based on (\ref{night}), we have
\begin{eqnarray}
{b_{k+1}^*}\left( {\bm b}_{k} \right)=0 {\rm{~and~}} {\pi_{k+1}^*}\left( {\bm b}_{k} \right)=0.
\end{eqnarray}
According to the definition of $W_{k}\left( {{{\bm b}_{k}}} \right)$, we obtain
\begin{eqnarray}
{W_k}\left( {{{\bm b}_k}} \right) ={W_{k+1}}\left( {{{\bm b}_{k}},0} \right).\label{mark1}
\end{eqnarray}

\textbf{Case 2}: ${\Delta _{k+1}}\left( {{\bm b}_{k}} \right)\ge0$.

Based on (\ref{night}), we have
\begin{align}
&{b_{k+1}^*}\left( {\bm b}_{k} \right)=1,\\
&{\pi_{k+1}^*}\left( {\bm b}_{k} \right)=\frac{1}{2}\left({W_{k+1}\left({{{\bm b}_{k}}},1\right) -W_{k+1}\left( {{{\bm b}_{k}}},0\right)+Q_{k+1}}\right).
\end{align}
According to the definition of $W_{k}\left( {{{\bm b}_{k}}} \right)$, we obtain
\begin{align}
\nonumber
&{W_k}\left( {{{\bm b}_k}} \right)={W_{k+1}}\left( {{{\bm b}_{k}},1} \right)+Q_{k+1}-\pi_{k+1}^*\left({\bm b}_k\right)\\
&=\frac{1}{2}\left({W_{k+1}\left( {{{\bm b}_{k}}},1\right)+W_{k+1}\left( {{{\bm b}_{k}}},0\right)+Q_{k+1}}\right).\label{mark2}
\end{align}

Combining \textbf{Case 1} and \textbf{Case 2} completes the proof of equation (\ref{appendix:equ:pro1a}). We can use the similar approach to prove equation (\ref{appendix:equ:pro1b}), and the details are omitted.
\end{proof}

Let ${\bm b}_{k}^{1} \triangleq \left( {b_1^1, \ldots b_k^1} \right)$ and ${\bm b}_{k}^{2} \triangleq \left( {b_1^2, \ldots b_k^2} \right)$ for all $k\in{\cal N}$. We state the following lemmas.

\begin{lemma}\label{lemma2}
If for a particular $k\in {\cal N}$, we have $\sum_{n = 1}^k {b_n^1{X_n^t}}  = \sum_{n = 1}^k {b_n^2{X_n^t}}$, for all $t=1,2,\ldots,T$, then we have
\begin{eqnarray}
{W_{k}}\left( {{\bm b}_k^1} \right) = {W_{k}}\left( {{\bm b}_k^2} \right).\label{pro2}
\end{eqnarray}
\end{lemma}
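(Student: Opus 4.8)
The plan is to prove the lemma by backward induction on $k$, running from $k=N$ down to $k=1$, and establishing the stronger structural fact that $W_k(\bm b_k)$ depends on $\bm b_k$ \emph{only} through the aggregate offloaded traffic $\left(\sum_{n=1}^k b_n X_n^1,\ldots,\sum_{n=1}^k b_n X_n^T\right)$. Once this holds for every $k$, the lemma is immediate: the hypothesis $\sum_{n=1}^k b_n^1 X_n^t = \sum_{n=1}^k b_n^2 X_n^t$ for all $t$ says precisely that $\bm b_k^1$ and $\bm b_k^2$ induce the same aggregate traffic, so they must yield the same value of $W_k$.

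For the base case $k=N$, I would expand the definition (\ref{complicated}). The sum $\sum_{n=N+1}^N \pi_n^*(\cdot)$ is empty, and $B_N^N(\bm b_N)=\bm b_N$ by (\ref{equ:Bfunction}), so $W_N(\bm b_N)=\Psi(\bm b_N)-\sum_{n=1}^N b_n Q_n$. Substituting the social welfare (\ref{socialwelfare}), the $\sum_{n=1}^N b_n Q_n$ terms cancel exactly, leaving $W_N(\bm b_N)=\sum_{t=1}^T f_t\!\left(\sum_{n=1}^N b_n X_n^t\right)$, which manifestly depends on $\bm b_N$ only through the aggregate traffic. This verifies the base case of the strengthened claim.

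For the inductive step, assume the claim at index $k+1$ and apply the recursion (\ref{appendix:equ:pro1a}) of Lemma \ref{lemma1}. Its right-hand side involves only the two quantities $W_{k+1}(\bm b_k,0)$ and $W_{k+1}(\bm b_k,1)$ together with the fixed constant $Q_{k+1}$. Extending $\bm b_k^1$ and $\bm b_k^2$ by the same bit $b_{k+1}\in\{0,1\}$ produces length-$(k+1)$ vectors whose aggregate traffic equals $\sum_{n=1}^k b_n X_n^t + b_{k+1} X_{k+1}^t$; since $\bm b_k^1$ and $\bm b_k^2$ already agree on $\sum_{n=1}^k b_n X_n^t$ for every $t$, the extended vectors agree on the aggregate traffic as well. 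The induction hypothesis then yields $W_{k+1}(\bm b_k^1,b_{k+1})=W_{k+1}(\bm b_k^2,b_{k+1})$ for both $b_{k+1}=0$ and $b_{k+1}=1$. Feeding these two equalities into (\ref{appendix:equ:pro1a}) makes the right-hand sides for $\bm b_k^1$ and $\bm b_k^2$ identical, so $W_k(\bm b_k^1)=W_k(\bm b_k^2)$, which closes the induction.

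The one point requiring care, and effectively the crux of the argument, is the base-case cancellation: the definition of $W_N$ is arranged so that the explicit $Q_n$ contributions drop out, leaving a pure function of the aggregate traffic. Everything downstream is then a mechanical consequence of the recursion in Lemma \ref{lemma1}, which preserves this ``aggregate-traffic-only'' dependence because appending a coordinate shifts the aggregate traffic by an increment that is independent of how that traffic was accumulated. I do not anticipate any genuine obstacle beyond correctly tracking which arguments the recursion exposes and confirming that the hypothesis propagates to length $k+1$ under both extensions.
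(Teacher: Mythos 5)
Your proposal is correct and follows essentially the same route as the paper's proof: backward induction anchored at $k=N$, where the $\sum_{n=1}^N b_n Q_n$ terms cancel in (\ref{complicated}) to leave $W_N(\bm b_N)=\sum_{t=1}^T f_t\left(\sum_{n=1}^N b_n X_n^t\right)$, followed by propagating equality through the recursion of Lemma \ref{lemma1} after extending both vectors by the same bit. Your reformulation as ``$W_k$ depends only on the aggregate traffic'' is just an equivalent restatement of the lemma rather than a genuine strengthening, and your base-case computation is merely made explicit where the paper calls it easy, so the two arguments are the same.
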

\begin{proof}
We prove it by mathematical induction.

\textbf{Part A}: It's easy to show Lemma \ref{lemma2} is true for $k=N$.

\textbf{Part B}: We assume that Lemma \ref{lemma2} holds for a particular $k\in\left\{2,3,\ldots,N\right\}$, \emph{i.e.}, if for vectors ${\bm b}_k^1$ and ${\bm b}_k^2$, we have $\sum_{n = 1}^k {b_n^1{X_n^t}}  = \sum_{n = 1}^k {b_n^2{X_n^t}}$ for all $t$, we can obtain ${W_{k}}\left( {{\bm b}_k^1} \right) = {W_{k}}\left( {{\bm b}_k^2} \right)$. Now we check whether Lemma \ref{lemma2} also holds for $k-1$.

We assume that, for vectors ${\bm b}_{k-1}^1$ and ${\bm b}_{k-1}^2$, we have $\sum_{n = 1}^{k-1} {b_n^1{X_n^t}}  = \sum_{n = 1}^{k-1} {b_n^2{X_n^t}}$ for all $t$. Naturally, we get:
\begin{align}
\sum\limits_{n = 1}^{k-1} {b_n^1{X_n^t}} + 0\cdot X_k^t  = \sum\limits_{n = 1}^{k-1} {b_n^2{X_n^t}} + 0\cdot X_k^t, \forall t,\\
\sum\limits_{n = 1}^{k-1} {b_n^1{X_n^t}} + 1\cdot X_k^t  = \sum\limits_{n = 1}^{k-1} {b_n^2{X_n^t}} + 1\cdot X_k^t, \forall t.
\end{align}
Since Lemma \ref{lemma2} holds for $k$, we have:
\begin{align}
{W_{k}}\left({{\bm b}_{k-1}^1},0 \right) = {W_{k}}\left( {{\bm b}_{k-1}^2},0 \right),\label{morning1}\\
{W_{k}}\left( {{\bm b}_{k-1}^1},1 \right) = {W_{k}}\left( {{\bm b}_{k-1}^2},1 \right).\label{morning2}
\end{align}
According to (\ref{morning1}), (\ref{morning2}), and Lemma \ref{lemma1}, we conclude
\begin{eqnarray}
{W_{k-1}}\left( {{\bm b}_{k-1}^1} \right) = {W_{k-1}}\left( {{\bm b}_{k-1}^2} \right).
\end{eqnarray}
Therefore, we prove that Lemma \ref{lemma2} also holds for $k-1$.

Combining \textbf{Part A} and \textbf{Part B} completes the proof.
\end{proof}

\begin{lemma}\label{lemma3}
If for a particular $k\in \cal N$, we have $\sum_{n = 1}^k {b_n^1{X_n^t}}  \ge \sum_{n = 1}^k {b_n^2{X_n^t}}$ for all $t=1,2,\ldots,T$, then we have
\begin{eqnarray}
{W_{k}}\left( {{\bm b}_k^1} \right) \ge {W_{k}}\left( {{\bm b}_k^2} \right).\label{pro3}
\end{eqnarray}
\end{lemma}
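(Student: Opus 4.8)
The plan is to prove Lemma \ref{lemma3} by the same backward induction on $k$ (from $k=N$ down to $k=1$) that established Lemma \ref{lemma2}, simply replacing the equality chain by an inequality chain. The only two ingredients I would need are the recursive identity of Lemma \ref{lemma1} and the fact that each offloading benefit function $f_t(\cdot)$ is increasing.

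First I would treat the base case $k=N$. By the definition of $W_N$ in (\ref{complicated}) together with the social welfare in (\ref{socialwelfare}), the two copies of $\sum_{n=1}^N b_n Q_n$ cancel, so that $W_N\left({\bm b}_N\right)=\sum_{t=1}^T f_t\left(\sum_{n=1}^N b_n X_n^t\right)$ depends on ${\bm b}_N$ only through the per-period aggregate offloaded traffic. Hence, if $\sum_{n=1}^N b_n^1 X_n^t \ge \sum_{n=1}^N b_n^2 X_n^t$ for every $t$, the increasing property of each $f_t$ gives the inequality termwise, and summing over $t$ yields $W_N\left({\bm b}_N^1\right)\ge W_N\left({\bm b}_N^2\right)$. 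Note that only monotonicity, not concavity, of $f_t$ is used here.

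For the inductive step, I would assume the claim holds for $k$ and establish it for $k-1$. Given $\sum_{n=1}^{k-1} b_n^1 X_n^t \ge \sum_{n=1}^{k-1} b_n^2 X_n^t$ for all $t$, appending either $b_k=0$ or $b_k=1$ to both vectors preserves the inequality for each $t$, since the same quantity $b_k X_k^t$ is added to both sides. Applying the inductive hypothesis to these length-$k$ vectors then gives both $W_k\left({\bm b}_{k-1}^1,0\right)\ge W_k\left({\bm b}_{k-1}^2,0\right)$ and $W_k\left({\bm b}_{k-1}^1,1\right)\ge W_k\left({\bm b}_{k-1}^2,1\right)$. Substituting these into the Lemma \ref{lemma1} recursion for $W_{k-1}$, the leading $\frac{1}{2}W_k\left(\cdot,0\right)$ term is clearly monotone, and the $\max\left\{W_k\left(\cdot,0\right),W_k\left(\cdot,1\right)+Q_k\right\}$ term is monotone as well, because the $\max$ operator is nondecreasing in each argument while the constant $Q_k$ is identical on both sides. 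Adding the two monotone pieces yields $W_{k-1}\left({\bm b}_{k-1}^1\right)\ge W_{k-1}\left({\bm b}_{k-1}^2\right)$, closing the induction.

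I do not expect a serious obstacle, as the argument is structurally identical to that of Lemma \ref{lemma2} and merely upgrades equalities to inequalities. The one point deserving explicit care is the monotonicity of the $\max$ term in the recursion: I must confirm that both of its arguments increase (weakly) when passing from ${\bm b}_{k-1}^2$ to ${\bm b}_{k-1}^1$, which is exactly what the two inductive-hypothesis inequalities provide, with the shared constant $Q_k$ playing no role in the comparison.
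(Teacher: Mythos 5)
Your proposal is correct and follows essentially the same route as the paper's proof: backward induction from $k=N$, appending $b_k\in\{0,1\}$ to both vectors to invoke the inductive hypothesis, and pushing the two resulting inequalities through the Lemma \ref{lemma1} recursion, whose leading term and $\max$ term are both monotone with the shared constant $Q_k$ dropping out of the comparison. Your explicit computation of the base case $W_N\left({\bm b}_N\right)=\sum_{t=1}^T f_t\left(\sum_{n=1}^N b_n X_n^t\right)$ (and the remark that only monotonicity of $f_t$ is needed) merely fills in a step the paper declares easy.
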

\begin{proof}
We prove it by mathematical induction.

\textbf{Part A}: It is easy to show Lemma \ref{lemma3} is true for $k=N$.

\textbf{Part B}: We assume that Lemma \ref{lemma3} holds for a particular $k\in\left\{2,3,\ldots,N\right\}$, \emph{i.e.}, if for vectors ${\bm b}_k^1$ and ${\bm b}_k^2$, we have $\sum_{n = 1}^k {b_n^1{X_n^t}}  \ge \sum_{n = 1}^k {b_n^2{X_n^t}}$ for all $t$, we can obtain ${W_{k}}\left( {{\bm b}_k^1} \right) \ge {W_{k}}\left( {{\bm b}_k^2} \right)$. Now we check whether Lemma \ref{lemma3} also holds for $k-1$.

We assume that, for vectors ${\bm b}_{k-1}^1$ and ${\bm b}_{k-1}^2$, we have $\sum_{n = 1}^{k-1} {b_n^1{X_n^t}}  \ge \sum_{n = 1}^{k-1} {b_n^2{X_n^t}}$ for all $t$. Naturally, we get:
\begin{align}
\sum\limits_{n = 1}^{k-1} {b_n^1{X_n^t}} + 0\cdot X_k^t  \ge \sum\limits_{n = 1}^{k-1} {b_n^2{X_n^t}} + 0\cdot X_k^t,\forall t\\
\sum\limits_{n = 1}^{k-1} {b_n^1{X_n^t}} + 1\cdot X_k^t  \ge \sum\limits_{n = 1}^{k-1} {b_n^2{X_n^t}} + 1\cdot X_k^t,\forall t.
\end{align}
Since Lemma \ref{lemma3} holds for $k$, we have:
\begin{align}
{W_{k}}\left({{\bm b}_{k-1}^1},0 \right) \ge {W_{k}}\left( {{\bm b}_{k-1}^2},0 \right),\label{morning3}\\
{W_{k}}\left( {{\bm b}_{k-1}^1},1\right) \ge {W_{k}}\left( {{\bm b}_{k-1}^2},1 \right).\label{morning4}
\end{align}
According to (\ref{morning3}), (\ref{morning4}), and Lemma \ref{lemma1}, we conclude
\begin{eqnarray}
{W_{k-1}}\left( {{\bm b}_{k-1}^1} \right) \ge {W_{k-1}}\left( {{\bm b}_{k-1}^2} \right).
\end{eqnarray}
Therefore, we prove that Lemma \ref{lemma3} also holds for $k-1$.

Combining \textbf{Part A} and \textbf{Part B} completes the proof.
\end{proof}

\begin{lemma}\label{lemma4}
If the following two inequalities hold:
\begin{eqnarray}
{W_{k + 1}}\left( {{\bm b}_k^1,0} \right) + \delta  > {W_{k + 1}}\left(  {{\bm b}_k^2,0} \right),\\
{W_{k + 1}}\left( {{\bm b}_k^1,1} \right) + \delta  > {W_{k + 1}} \left( {{\bm b}_k^2,1} \right),
\end{eqnarray}
where $k\in\left\{1,\ldots,N-1\right\}$ and $\delta>0$, we have:
\begin{eqnarray}
{W_{k}}\left( {{\bm b}_k^1} \right) + \delta  > {W_{k}}\left( {{\bm b}_k^2} \right).
\end{eqnarray}
\end{lemma}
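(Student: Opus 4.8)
The plan is to push the shifted strict inequality through the single one-step recursion established in Lemma~\ref{lemma1}. First I would invoke (\ref{appendix:equ:pro1a}) to expand both sides, writing
\[
W_k\!\left({\bm b}_k^i\right) = \tfrac{1}{2}W_{k+1}\!\left({\bm b}_k^i,0\right) + \tfrac{1}{2}\max\!\left\{ W_{k+1}\!\left({\bm b}_k^i,0\right),\, W_{k+1}\!\left({\bm b}_k^i,1\right)+Q_{k+1}\right\}
\]
for $i=1,2$. This reduces the claim to comparing two convex combinations of the same form, so it suffices to dominate each of the two constituent terms (up to the shift $\delta$) separately.

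The key observation is an elementary additive-shift property of the maximum: if $a_1+\delta>a_2$ and $b_1+\delta>b_2$, then $\max\{a_1,b_1\}+\delta>\max\{a_2,b_2\}$. I would prove this by a two-line case analysis on which argument attains $\max\{a_2,b_2\}$: if it is $a_2$, then $\max\{a_1,b_1\}+\delta\ge a_1+\delta>a_2=\max\{a_2,b_2\}$, and the case $b_2$ is symmetric. Applying this with $a_i=W_{k+1}({\bm b}_k^i,0)$ and $b_i=W_{k+1}({\bm b}_k^i,1)+Q_{k+1}$ --- where the hypothesis $W_{k+1}({\bm b}_k^1,1)+\delta>W_{k+1}({\bm b}_k^2,1)$ yields $b_1+\delta>b_2$ after adding $Q_{k+1}$ to both sides --- gives the needed bound on the $\max$ term, while the first hypothesis directly handles the $W_{k+1}(\cdot,0)$ term.

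Finally I would combine the two estimates with weights $\tfrac12,\tfrac12$, splitting the shift as $\delta=\tfrac12\delta+\tfrac12\delta$, so that
\[
W_k\!\left({\bm b}_k^1\right)+\delta > \tfrac{1}{2} W_{k+1}\!\left({\bm b}_k^2,0\right) + \tfrac{1}{2}\max\!\left\{W_{k+1}\!\left({\bm b}_k^2,0\right),\,W_{k+1}\!\left({\bm b}_k^2,1\right)+Q_{k+1}\right\} = W_k\!\left({\bm b}_k^2\right),
\]
which is exactly the claim. There is no serious obstacle here; the only mathematical content is the additive-shift property of $\max$, and the one point to be careful about is distributing $\delta$ correctly across the convex combination so that the coefficient of $\delta$ on the left remains exactly $1$ rather than shrinking.
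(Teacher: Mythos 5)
Your proposal is correct and is exactly what the paper intends: the paper's proof of this lemma is the one-liner ``This is obvious by checking Lemma~\ref{lemma1},'' and your argument --- expanding both sides via the recursion (\ref{appendix:equ:pro1a}), invoking the additive-shift property of $\max$ (with $b_i = W_{k+1}({\bm b}_k^i,1)+Q_{k+1}$ absorbing $Q_{k+1}$ on both sides), and splitting $\delta=\tfrac12\delta+\tfrac12\delta$ across the two halves --- is precisely the verification the authors left implicit. No gaps; your version simply makes the omitted details explicit.
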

\begin{proof}
This is obvious by checking Lemma \ref{lemma1}.
\end{proof}

\section{Proof of Proposition \ref{proposition1}}
\begin{proof}
Without loss of generality, we assume that VO $k\in\cal{N}$ is of type $1$, \emph{i.e.}, $Q_k\ge0$. Furthermore, we assume that the MNO and the first $k-1$ VOs reached ${{\bm b}_{k-1}}$ in the first $k-1$ steps of bargaining. By checking the definition of ${\Delta _{k}}\left( {{\bm b}_{k-1}} \right)$, we have:
\begin{eqnarray}
{\Delta _{k}}\left( {{\bm b}_{k-1}} \right)=W_{k}\left( {{{\bm b}_{k-1}}},1 \right)-W_{k}\left( {{{\bm b}_{k-1}}},0 \right)+Q_{k}.\label{waha}
\end{eqnarray}
Since $\sum_{n = 1}^{k-1} {b_n {X_n^t}}+X_k^t\ge \sum_{n = 1}^{k-1} {b_n {X_n^t}}$ for all $t$, according to Lemma \ref{lemma3}, we have:
\begin{eqnarray}
W_{k}\left({{{\bm b}_{k-1}}},1\right)\ge W_{k}\left( {{{\bm b}_{k-1}}},0\right).
\end{eqnarray}
Together with $Q_k\ge0$, we conclude that ${\Delta _{k}}\!\left( {{\bm b}_{k\!-\!1}} \right)\ge0$. Based on (\ref{solutionk}), ${\Delta _{k}}\left( {{\bm b}_{k-1}} \right)\ge0$ implies $b_k^*\left( {{\bm b}_{k-1}} \right)=1$. Therefore, the MNO definitely cooperates with VO $k$.
\end{proof}

\section{Proof of Proposition \ref{proposition2}}
\begin{proof}
Without loss of generality, we assume that VO $k\in\cal{N}$ is of type $3$, \emph{i.e.}, $Q_k<0$ and $\sum_{t=1}^{T} {f_t\left( {{X_k^t}} \right)} + {Q_k} < 0$. Furthermore, we assume that the MNO and the first $k-1$ VOs reached ${{\bm b}_{k-1}}$ in the first $k-1$ steps of bargaining.

First of all, we consider the following two functions: ${W_{N}}\left( {{\bm b}_{k-1},0,{\bm 0}_{N-k}} \right)$ and ${W_{N}}\left( {{\bm b}_{k-1},1,{\bm 0}_{N-k}} \right)$, where we define ${\bm 0}_{N-k}$ as the vector that has $N-k$ zeros as its entries. Based on the definition,
\begin{align}
&{W_{N}}\left( {{\bm b}_{k-1},0,{\bm 0}_{N-k}} \right)=\sum\limits_{t=1}^T f_t\left(\sum\limits_{n = 1}^{k-1} {b_n{X_n^t}}\right),\label{threeA}\\
&{W_{N}}\left( {{\bm b}_{k-1},1,{\bm 0}_{N-k}} \right)=\sum\limits_{t=1}^T f_t\left(\sum\limits_{n = 1}^{k-1} {b_n{X_n^t}}+X_k^t\right).\label{threeB}
\end{align}
Due to the concavity of function $f_t\left(\cdot\right),t=1,2\ldots,T$, we have:
\begin{eqnarray}
f_t\left(\sum\limits_{n = 1}^{k-1} {b_n{X_n^t}}+X_k^t\right)-f_t\left(\sum\limits_{n = 1}^{k-1} {b_n{X_n^t}}\right)\le f_t\left(X_k^t\right), \forall t,
\end{eqnarray}
Since $\sum_{t=1}^{T} {f_t\left( {{X_k^t}} \right)} + {Q_k} < 0$, we conclude that
\begin{eqnarray}
\sum\limits_{t=1}^T f_t\left(\sum\limits_{n = 1}^{k-1} {b_n{X_n^t}}+X_k^t\right)-\sum\limits_{t=1}^T f_t\left(\sum\limits_{n = 1}^{k-1} {b_n{X_n^t}}\right)+Q_k<0.\label{threeC}
\end{eqnarray}
Based on (\ref{threeA}), (\ref{threeB}), and (\ref{threeC}), we conclude
\begin{align}
{W_{N}}\left( {{\bm b}_{k-1},0,{\bm 0}_{N-k}} \right)-Q_k>{W_{N}}\left( {{\bm b}_{k-1},1,{\bm 0}_{N-k}} \right).\label{108A}
\end{align}
Similarly, we next consider the following two functions: ${W_{N}}\left( {{\bm b}_{k-1},0,{\bm 0}_{N-k-1},1} \right)$ and ${W_{N}}\left( {{\bm b}_{k-1},1,{\bm 0}_{N-k-1},1} \right)$. We can also prove
\begin{align}
{W_{N}}\left( {{\bm b}_{k-1},0,{\bm 0}_{N-k-1},1} \right)-Q_k>{W_{N}}\left( {{\bm b}_{k-1},1,{\bm 0}_{N-k-1},1} \right).\label{108B}
\end{align}
Recall that $-Q_k>0$, based on (\ref{108A}), (\ref{108B}), and Lemma \ref{lemma4}, we conclude,
\begin{align}
{W_{N-1}}\left( {{\bm b}_{k-1},0,{\bm 0}_{N-1-k}} \right)-Q_k> {W_{N-1}}\left( {{\bm b}_{k-1},1,{\bm 0}_{N-1-k}} \right).\label{secondA}
\end{align}
Repeating these processes, we can eventually conclude
\begin{align}
{W_{k}}\left( {{\bm b}_{k-1},0} \right)-Q_k>{W_{k}}\left( {{\bm b}_{k-1},1} \right).
\end{align}
Recall (\ref{waha}), where we express ${\Delta _{k}}\left( {{\bm b}_{k-1}} \right)$ as
\begin{eqnarray}
{\Delta _{k}}\left( {{\bm b}_{k-1}} \right)=W_{k}\left( {{{\bm b}_{k-1}}},1 \right)-W_{k}\left( {{{\bm b}_{k-1}}},0 \right)+Q_{k}.
\end{eqnarray}
We conclude that ${\Delta _{k}}\!\left( {{\bm b}_{k\!-\!1}} \right)<0$. Based on (\ref{solutionk}), $b_k^*\left( {{\bm b}_{k-1}} \right)=0$, \emph{i.e.}, the MNO does not cooperate with VO $k$.
\end{proof}

\section{Preliminary Lemmas II}\label{appendix:section:pro2}
In this section, we continue to prove lemmas, which are useful to show the propositions and theorems in the paper.

Same as Appendix \ref{appendix:section:pro1}, we first assume the bargaining sequence follows $1,2,\ldots,N$, and define $W_k\left({\bm b}_k\right)$ as (\ref{complicated}). Then we interchange the bargaining positions of VO $k+1$ and VO $k+2$, where $k\in\left\{0,1,\ldots,N-2\right\}$. That is, the MNO bargains with VO $k+2$ at step $k+1$, and bargains with VO $k+1$ at step $k+2$. For the new sequence, we define
\begin{align}
{{\tilde W}_k}\left( {{{\bm b}_{k}}} \right) \triangleq  \Psi\left({{\tilde B}_N^k}\left({\bm b}_{k}\right)\right) - \sum\limits_{n = 1}^{k} {{b_n}{Q_n}}-\sum\limits_{n = k+1}^{N} {{{\tilde \pi}_n^*}\left({\tilde B}_{n-1}^k\left({\bm b}_k\right)\right)}.
\label{complicatedprime}
\end{align}
Here, functions ${{\tilde B}_N^k}\left({\bm b}_{k}\right)$ and ${{{\tilde \pi}_n^*}\left({\tilde B}_{n-1}^k\left({\bm b}_k\right)\right)}$ are defined for the new bargaining sequence, and are generally not equal to  ${{ B}_N^k}\left({\bm b}_{k}\right)$ and ${{{ \pi}_n^*}\left({ B}_{n-1}^k\left({\bm b}_k\right)\right)}$ in (\ref{complicated}).

Next we state the following lemmas.
\begin{lemma}\label{lemma5}
If $Q_{k+2}\ge0$, we have ${{\tilde W}_k}\left( {{{\bm b}_{k}}} \right) \ge {W_k}\left( {{{\bm b}_{k}}} \right)$ for any ${\bm b}_{k}$.
\end{lemma}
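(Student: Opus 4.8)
The plan is to condition on the common prefix $\bm b_k$ and reduce the claim to an elementary inequality between the two continuation values produced by the swapped pair of steps. Since the two sequences agree on the first $k$ steps and also on steps $k+3,\ldots,N$ (the same VOs in the same order), they differ only in the order in which VOs $k+1$ and $k+2$ are handled. By Lemma \ref{lemma2}, both $W_{k+2}$ and ${\tilde W}_{k+2}$ depend on their outcome vectors only through the cumulative offloaded traffic $\sum_n b_n X_n^t$; hence, once the deployment decisions at venues $k+1$ and $k+2$ are fixed, the continuation value from step $k+2$ onward is one and the same function of those two decisions for both orderings. I would denote by $g_{ac}$ this common value when venue $k+1$ has decision $a\in\{0,1\}$ and venue $k+2$ has decision $c\in\{0,1\}$, i.e. $W_{k+2}$ evaluated at cumulative traffic $\sum_{n=1}^k b_n X_n^t + aX_{k+1}^t + cX_{k+2}^t$. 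Lemma \ref{lemma3} then yields the monotonicity $g_{00}\le g_{10}\le g_{11}$ and $g_{00}\le g_{01}\le g_{11}$, because $X_{k+1}^t,X_{k+2}^t\ge 0$.

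Next I would expand $W_k(\bm b_k)$ and ${\tilde W}_k(\bm b_k)$ by applying Lemma \ref{lemma1} (and its ${\tilde W}$ analog for the swapped sequence) twice each, carefully tracking which net benefit appears at which step. For the original order, VO $k+1$ (net benefit $Q_{k+1}$) is decided at the outer step and VO $k+2$ (net benefit $Q_{k+2}$) at the inner step, giving $W_{k+1}(\bm b_k,a)=\tfrac12 g_{a0}+\tfrac12\max\{g_{a0},\,g_{a1}+Q_{k+2}\}$ and then the outer recursion carrying $Q_{k+1}$. For the swapped order the roles are interchanged: VO $k+2$ sits at the outer step and VO $k+1$ at the inner step, giving ${\tilde W}_{k+1}(\bm b_k,c)=\tfrac12 g_{0c}+\tfrac12\max\{g_{0c},\,g_{1c}+Q_{k+1}\}$ and the outer recursion carrying $Q_{k+2}$.

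The crucial simplification comes from the hypothesis $Q_{k+2}\ge 0$ together with the monotonicity of the $g_{ac}$. In the original order, $g_{a1}+Q_{k+2}\ge g_{a0}$ for each $a$, so the inner maxima collapse to $W_{k+1}(\bm b_k,a)=\tfrac12(g_{a0}+g_{a1}+Q_{k+2})$, which is exactly the statement that the MNO always cooperates with the type-$1$ VO $k+2$ (cf. Proposition \ref{proposition1}). In the swapped order, monotonicity gives ${\tilde W}_{k+1}(\bm b_k,1)\ge {\tilde W}_{k+1}(\bm b_k,0)$, and adding $Q_{k+2}\ge 0$ collapses the outer maximum. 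After these reductions both quantities take the form $\tfrac12(g_{00}+g_{01}+Q_{k+2})$ plus a single positive-part term, and their difference becomes
\[
{\tilde W}_k(\bm b_k)-W_k(\bm b_k)=\tfrac14\big(\max\{0,u\}+\max\{0,v\}-\max\{0,u+v\}\big),
\]
where $u=g_{10}-g_{00}+Q_{k+1}$ and $v=g_{11}-g_{01}+Q_{k+1}$.

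The claim then follows from the subadditivity of the positive-part function, $\max\{0,u+v\}\le\max\{0,u\}+\max\{0,v\}$, which is immediate since $u+v\le\max\{0,u\}+\max\{0,v\}$ and $0\le\max\{0,u\}+\max\{0,v\}$. I expect the only delicate part to be the bookkeeping in the first two steps: justifying that the continuation from step $k+2$ onward is a single shared function $g_{ac}$ (which rests on Lemma \ref{lemma2}) and correctly tracking which of $Q_{k+1},Q_{k+2}$ sits at the inner versus the outer recursion after the swap. Once the two max-collapses permitted by $Q_{k+2}\ge 0$ are in place, the remaining algebra and the concluding subadditivity step are routine.
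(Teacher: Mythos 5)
Your proof is correct and takes essentially the same route as the paper's: your $g_{ac}$ values are the paper's $A,B,C,D$ (identified across the two orderings via Lemma \ref{lemma2}), the two max-collapses enabled by $Q_{k+2}\ge 0$ are exactly the paper's applications of Lemmas \ref{lemma1} and \ref{lemma3}, and your concluding subadditivity $\max\{0,u+v\}\le\max\{0,u\}+\max\{0,v\}$ just makes explicit the final inequality the paper dismisses as ``easy to check.'' One cosmetic slip: after the reductions, ${\tilde W}_k\left({\bm b}_k\right)$ carries \emph{two} positive-part terms (not one), but your displayed formula for the difference is exactly right, so nothing is affected.
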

\begin{proof}
First, we study $W_k\left({\bm b}_k\right)$. We define:
\begin{align}
A \triangleq W_{k+2}\left({\bm b}_k,0,0\right),\\
B \triangleq W_{k+2}\left({\bm b}_k,0,1\right),\\
C \triangleq W_{k+2}\left({\bm b}_k,1,0\right),\\
D \triangleq W_{k+2}\left({\bm b}_k,1,1\right).
\end{align}
According to Lemma \ref{lemma1},
\begin{align}
&W_{k+1}\left({\bm b}_k,0\right)=\frac{1}{2}A+\frac{1}{2}\max\left\{A,B+Q_{k+2}\right\},\label{W1k}\\
&W_{k+1}\left({\bm b}_k,1\right)=\frac{1}{2}C+\frac{1}{2}\max\left\{C,D+Q_{k+2}\right\},\\
\nonumber
&W_k\left({\bm b}_k\right)=\frac{1}{2}W_{k+1}\left({\bm b}_k,0\right)+{~~~~~~~~~~~~~~~~~~~~~}\\
&\frac{1}{2}\max\left\{W_{k+1}\left({\bm b}_k,0\right),W_{k+1}\left({\bm b}_k,1\right)+Q_{k+1}\right\}.\label{Wk}
\end{align}
Based on Lemma \ref{lemma3}, we conclude
\begin{eqnarray}
B\ge A {\rm{~and~}}D\ge C.
\end{eqnarray}
Since $Q_{k+2}\ge0$, we further have
\begin{eqnarray}
B+Q_{k+2}\ge A {\rm{~and~}}D+Q_{k+2}\ge C.
\end{eqnarray}
Therefore, we rewrite (\ref{W1k})-(\ref{Wk}) as:
\begin{align}
&W_{k+1}\left({\bm b}_k,0\right)=\frac{1}{2}A+\frac{1}{2}B+\frac{1}{2} Q_{k+2},\\
&W_{k+1}\left({\bm b}_k,1\right)=\frac{1}{2}C+\frac{1}{2}D+\frac{1}{2}Q_{k+2},\\
\nonumber
&W_k\left({\bm b}_k\right)=\frac{1}{4}A+\frac{1}{4}B+\frac{1}{4} Q_{k+2}{~~~~~~~~~}\\
\nonumber
&{~~~~~~~~~~~~~~}+\max\left\{\frac{1}{4}A+\frac{1}{4}B+\frac{1}{4} Q_{k+2},\right.\\
&{~~~~~~~~~~~~~~~}\left.\frac{1}{4}C+\frac{1}{4}D+\frac{1}{4}Q_{k+2}+\frac{1}{2}Q_{k+1}\right\}.\label{needA}
\end{align}
Next we study ${\tilde W}_k\left({\bm b}_k\right)$. We define:
\begin{align}
{\tilde A} \triangleq {\tilde W}_{k+2}\left({\bm b}_k,0,0\right),\\
{\tilde B} \triangleq {\tilde W}_{k+2}\left({\bm b}_k,0,1\right),\\
{\tilde C} \triangleq {\tilde W}_{k+2}\left({\bm b}_k,1,0\right),\\
{\tilde D} \triangleq {\tilde W}_{k+2}\left({\bm b}_k,1,1\right).
\end{align}

According to Lemma \ref{lemma1},
\begin{align}
&{\tilde W}_{k+1}\left({\bm b}_k,0\right)=\frac{1}{2}{\tilde A}+\frac{1}{2}\max\left\{{\tilde A},{\tilde B}+Q_{k+1}\right\},\\
&{\tilde W}_{k+1}\left({\bm b}_k,1\right)=\frac{1}{2}{\tilde C}+\frac{1}{2}\max\left\{{\tilde C},{\tilde D}+Q_{k+1}\right\},\\
\nonumber
&{\tilde W}_k\left({\bm b}_k\right)=\frac{1}{2}{\tilde W}_{k+1}\left({\bm b}_k,0\right)+{~~~~~~~~~~~~~~~~~~~~~}\\
&\frac{1}{2}\max\left\{{\tilde W}_{k+1}\left({\bm b}_k,0\right),{\tilde W}_{k+1}\left({\bm b}_k,1\right)+Q_{k+2}\right\}.
\end{align}
Based on Lemma \ref{lemma3}, we conclude
\begin{eqnarray}
{\tilde W}_{k+1}\left({\bm b}_k,1\right)\ge {\tilde W}_{k+1}\left({\bm b}_k,0\right).
\end{eqnarray}
Since $Q_{k+2}\ge0$, we further have
\begin{eqnarray}
{\tilde W}_{k+1}\left({\bm b}_k,1\right)+Q_{k+2}\ge {\tilde W}_{k+1}\left({\bm b}_k,0\right).
\end{eqnarray}
Therefore, we have
\begin{align}
\nonumber
&{\tilde W}_k\left({\bm b}_k\right)=\frac{1}{4}{\tilde A}+\frac{1}{4}\max\left\{{\tilde A},{\tilde B}+Q_{k+1}\right\}\\
&{~~}+\frac{1}{4}{\tilde C}+\frac{1}{4}\max\left\{{\tilde C},{\tilde D}+Q_{k+1}\right\}+\frac{1}{2}{Q_{k+2}}.\label{markkk}
\end{align}

Next we compare $W_k\left({\bm b}_k\right)$ and ${\tilde W}_k\left({\bm b}_k\right)$. Based on Lemma \ref{lemma2}, we have:
\begin{eqnarray}
A={\tilde A}{\rm{~and~}}D={\tilde D},\label{uu1}\\
B={\tilde C}{\rm{~and~}}C={\tilde B}.\label{uu2}
\end{eqnarray}
Based on (\ref{uu1}) and (\ref{uu2}), we rewrite (\ref{markkk}) as
\begin{align}
\nonumber
&{\tilde W}_k\left({\bm b}_k\right)=\frac{1}{4}A+\frac{1}{4}\max\left\{A,C+Q_{k+1}\right\}\\
&{~~}+\frac{1}{4}B+\frac{1}{4}\max\left\{B,D+Q_{k+1}\right\}+\frac{1}{2}{Q_{k+2}}.\label{needB}
\end{align}
By (\ref{needA}) and (\ref{needB}), we obtain
\begin{align}
\nonumber
&{\tilde W}_k\left({\bm b}_k\right)-W_k\left({\bm b}_k\right)=\\
\nonumber
&\frac{1}{4}\max\left\{A,C+Q_{k+1}\right\}+\frac{1}{4}\max\left\{B,D+Q_{k+1}\right\}\\
&-\max\left\{\frac{1}{4}A+\frac{1}{4}B,\frac{1}{4}C+\frac{1}{4}D+\frac{1}{2}Q_{k+1}\right\}.
\end{align}
It is easy to check that ${\tilde W}_k\left({\bm b}_k\right)-W_k\left({\bm b}_k\right)\ge0$. Here we complete the proof.
\end{proof}

\begin{lemma}\label{lemma:add1}
If VO $k+2$ is of type $3$, we have ${{\tilde W}_k}\left( {{{\bm b}_{k}}} \right) = {W_k}\left( {{{\bm b}_{k}}} \right)$ for any ${\bm b}_{k}$.
\end{lemma}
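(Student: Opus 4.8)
The plan is to mirror the structure of the proof of Lemma \ref{lemma5}, but to use Proposition \ref{proposition2} (a type $3$ VO is never served, independently of its bargaining position) to collapse both recursions into the \emph{same} closed form. First I would introduce the four continuation values under the original order, $A \triangleq W_{k+2}(\bm b_k,0,0)$, $B \triangleq W_{k+2}(\bm b_k,0,1)$, $C \triangleq W_{k+2}(\bm b_k,1,0)$, $D \triangleq W_{k+2}(\bm b_k,1,1)$, together with their counterparts $\tilde A,\tilde B,\tilde C,\tilde D$ under the interchanged order, exactly as in the proof of Lemma \ref{lemma5}. By Lemma \ref{lemma2} (the continuation depends only on the aggregate offloaded traffic, not on the labels of the venues producing it), these satisfy $A=\tilde A$, $D=\tilde D$, $B=\tilde C$, and $C=\tilde B$, which are precisely (\ref{uu1})--(\ref{uu2}).

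Next I would evaluate $W_k(\bm b_k)$ under the original order. Since VO $k+2$ is of type $3$, Proposition \ref{proposition2} gives $b_{k+2}^\ast=0$ at step $k+2$ for every preceding outcome, i.e. $\Delta_{k+2}<0$; by the index-shifted form of (\ref{deltaandW}) this means $B+Q_{k+2}<A$ and $D+Q_{k+2}<C$. Substituting into the recursion of Lemma \ref{lemma1} then forces the inner maxima to be attained by $A$ and $C$, so that $W_{k+1}(\bm b_k,0)=A$ and $W_{k+1}(\bm b_k,1)=C$, and hence $W_k(\bm b_k)=\tfrac12 A+\tfrac12\max\{A,\,C+Q_{k+1}\}$.

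Then I would repeat the computation for the interchanged order, where VO $k+2$ now sits at step $k+1$ and VO $k+1$ at step $k+2$. Instantiating Lemma \ref{lemma1} with the parameter $Q_{k+1}$ at step $k+2$ gives $\tilde W_{k+1}(\bm b_k,0)=\tfrac12\tilde A+\tfrac12\max\{\tilde A,\,\tilde B+Q_{k+1}\}$. Proposition \ref{proposition2} still applies to VO $k+2$ at its new position, so $\tilde\Delta_{k+1}<0$ and the outer $\max$ at step $k+1$ is attained by its first argument, yielding $\tilde W_k(\bm b_k)=\tilde W_{k+1}(\bm b_k,0)$. Using $\tilde A=A$ and $\tilde B=C$ from Lemma \ref{lemma2}, this equals $\tfrac12 A+\tfrac12\max\{A,\,C+Q_{k+1}\}$, which coincides with $W_k(\bm b_k)$, completing the proof.

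I expect the main obstacle to be the bookkeeping of which VO (hence which of $Q_{k+1}$, $Q_{k+2}$) occupies each step after the swap, so that the two applications of Lemma \ref{lemma1} are instantiated with the correct coefficients; a secondary subtlety is justifying $\tilde\Delta_{k+1}<0$. The cleanest route is to invoke the position-independence asserted in Proposition \ref{proposition2} directly, rather than re-deriving the inequality from $B+Q_{k+2}<A$ and $D+Q_{k+2}<C$, since the latter would require a four-way case split on the inner maxima.
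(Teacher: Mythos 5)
Your proposal is correct and follows essentially the same route as the paper's own proof: the same continuation values $A,B,C,D$ and $\tilde A,\tilde B,\tilde C,\tilde D$, the same identification $A=\tilde A$, $D=\tilde D$, $B=\tilde C$, $C=\tilde B$ via Lemma \ref{lemma2}, and the same position-independent application of Proposition \ref{proposition2} (through its proof, which establishes $\Delta<0$ for a type $3$ VO at any step under any prior outcomes) to collapse the maxima at VO $k+2$'s step in both orders, giving $W_k\left({\bm b}_k\right)=\frac{1}{2}A+\frac{1}{2}\max\left\{A,C+Q_{k+1}\right\}=\tilde W_k\left({\bm b}_k\right)$. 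The only cosmetic difference is that you route the strict inequalities through the index-shifted form of (\ref{deltaandW}) and $\tilde\Delta_{k+1}<0$, where the paper simply cites the proof of Proposition \ref{proposition2} at both places; the content is identical.
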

\begin{proof}
First, we study $W_k\left({\bm b}_k\right)$. We define:
\begin{align}
A \triangleq W_{k+2}\left({\bm b}_k,0,0\right),\\
B \triangleq W_{k+2}\left({\bm b}_k,0,1\right),\\
C \triangleq W_{k+2}\left({\bm b}_k,1,0\right),\\
D \triangleq W_{k+2}\left({\bm b}_k,1,1\right).
\end{align}
According to Lemma \ref{lemma1},
\begin{align}
&W_{k+1}\left({\bm b}_k,0\right)=\frac{1}{2}A+\frac{1}{2}\max\left\{A,B+Q_{k+2}\right\},\label{W1k:add}\\
&W_{k+1}\left({\bm b}_k,1\right)=\frac{1}{2}C+\frac{1}{2}\max\left\{C,D+Q_{k+2}\right\},\\
\nonumber
&W_k\left({\bm b}_k\right)=\frac{1}{2}W_{k+1}\left({\bm b}_k,0\right)+{~~~~~~~~~~~~~~~~~~~~~}\\
&\frac{1}{2}\max\left\{W_{k+1}\left({\bm b}_k,0\right),W_{k+1}\left({\bm b}_k,1\right)+Q_{k+1}\right\}.\label{Wk:add}
\end{align}
Based on the proof of Proposition \ref{proposition2}, since VO $k+2$ is of type $3$, we have
\begin{align}
A>B+Q_{k+2} {\rm~and~} C>D+Q_{k+2}.
\end{align}
Therefore, we rewrite (\ref{W1k:add})-(\ref{Wk:add}) as:
\begin{align}
&W_{k+1}\left({\bm b}_k,0\right)=A,\\
&W_{k+1}\left({\bm b}_k,1\right)=C,\\
&W_k\left({\bm b}_k\right)=\frac{1}{2}A+\frac{1}{2}\max\left\{A,C+Q_{k+1}\right\}.\label{uu1:add}
\end{align}
Next we study ${\tilde W}_k\left({\bm b}_k\right)$. We define:
\begin{align}
{\tilde A} \triangleq {\tilde W}_{k+2}\left({\bm b}_k,0,0\right),\\
{\tilde B} \triangleq {\tilde W}_{k+2}\left({\bm b}_k,0,1\right),\\
{\tilde C} \triangleq {\tilde W}_{k+2}\left({\bm b}_k,1,0\right),\\
{\tilde D} \triangleq {\tilde W}_{k+2}\left({\bm b}_k,1,1\right).
\end{align}

According to Lemma \ref{lemma1},
\begin{align}
&{\tilde W}_{k+1}\left({\bm b}_k,0\right)=\frac{1}{2}{\tilde A}+\frac{1}{2}\max\left\{{\tilde A},{\tilde B}+Q_{k+1}\right\},\\
&{\tilde W}_{k+1}\left({\bm b}_k,1\right)=\frac{1}{2}{\tilde C}+\frac{1}{2}\max\left\{{\tilde C},{\tilde D}+Q_{k+1}\right\},\\
\nonumber
&{\tilde W}_k\left({\bm b}_k\right)=\frac{1}{2}{\tilde W}_{k+1}\left({\bm b}_k,0\right)+{~~~~~~~~~~~~~~~~~~~~~}\\
&\frac{1}{2}\max\left\{{\tilde W}_{k+1}\left({\bm b}_k,0\right),{\tilde W}_{k+1}\left({\bm b}_k,1\right)+Q_{k+2}\right\}.
\end{align}
Based on the proof of Proposition \ref{proposition2}, since VO $k+2$ is of type $3$, we have
\begin{align}
{\tilde W}_{k+1}\left({\bm b}_k,0\right)>{\tilde W}_{k+1}\left({\bm b}_k,1\right)+Q_{k+2}.
\end{align}
Therefore, we have
\begin{align}
{\tilde W}_k\left({\bm b}_k\right)=\frac{1}{2}{\tilde A}+\frac{1}{2}\max\left\{{\tilde A},{\tilde B}+Q_{k+1}\right\}.\label{uu2:add}
\end{align}

Next we compare $W_k\left({\bm b}_k\right)$ and ${\tilde W}_k\left({\bm b}_k\right)$. Based on Lemma \ref{lemma2}, we have:
\begin{eqnarray}
A={\tilde A}{\rm{~and~}}D={\tilde D},\\
B={\tilde C}{\rm{~and~}}C={\tilde B}.
\end{eqnarray}
Based on (\ref{uu1:add}) and (\ref{uu2:add}), we obtain that ${\tilde W}_k\left({\bm b}_k\right)=W_k\left({\bm b}_k\right)$, where we complete the proof.
\end{proof}

\begin{lemma}\label{lemma6}
If ${{\tilde W}_k}\left( {{{\bm b}_{k}}} \right) \ge {W_k}\left( {{{\bm b}_{k}}} \right)$ for any ${\bm b}_{k}$, the MNO's payoff does not decrease after exchanging VO $k+1$ and VO $k+2$'s bargaining positions.
\end{lemma}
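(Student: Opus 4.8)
The plan is to track how the pointwise inequality $\tilde{W}_k(\bm{b}_k) \ge W_k(\bm{b}_k)$, assumed at bargaining level $k$, propagates downward to level $0$, where $W_0 = U_0$ is the MNO's payoff under the original sequence $1,2,\ldots,N$ and $\tilde{W}_0$ (the level-$0$ value of the tilde recursion) is its payoff under the sequence obtained by swapping VO $k+1$ and VO $k+2$. The crucial structural observation is that this swap leaves the first $k$ bargaining steps entirely unchanged: in both sequences the MNO bargains with VO $1,2,\ldots,k$ at steps $1,2,\ldots,k$. Hence the backward recursion of Lemma \ref{lemma1} that recovers each $W_{j-1}$ from $W_{j}$ for $j=1,\ldots,k$ is driven by the very same parameters $Q_j$ and the same offloading quantities for both sequences, so the original and swapped recursions share an identical operator over these levels.

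First I would record, for each $j\in\{1,\ldots,k\}$ and each length-$(j-1)$ vector $\bm{b}_{j-1}$, the instance of Lemma \ref{lemma1}
\begin{align}
\nonumber
W_{j-1}(\bm{b}_{j-1}) = &\frac{1}{2}W_{j}(\bm{b}_{j-1},0)\\
\nonumber
&+\frac{1}{2}\max\{W_{j}(\bm{b}_{j-1},0),W_{j}(\bm{b}_{j-1},1)+Q_{j}\},
\end{align}
together with the identical expression obtained by replacing every $W$ with $\tilde W$. The right-hand side is nondecreasing in each of its two level-$j$ arguments, because both the averaging term and the $\max\{\cdot,\cdot\}$ term preserve the order; this is precisely the monotonicity already exploited in Lemma \ref{lemma4}.

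Then I would run a backward induction on $j$ from $j=k$ down to $j=0$ with the inductive hypothesis $\tilde{W}_{j}(\bm{b}_{j}) \ge W_{j}(\bm{b}_{j})$ for all $\bm{b}_{j}$. The base case $j=k$ is exactly the assumption of the lemma. For the inductive step, applying the monotonicity above to the two copies of the recursion (which share the common $Q_j$) converts the level-$j$ inequality into $\tilde{W}_{j-1}(\bm{b}_{j-1}) \ge W_{j-1}(\bm{b}_{j-1})$ for every $\bm{b}_{j-1}$. At $j=0$ this yields $\tilde{W}_0 \ge W_0 = U_0$, so the MNO's payoff does not decrease after the exchange.

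The argument is mechanically short; the one point that genuinely has to be argued—rather than merely computed—is that swapping positions $k+1$ and $k+2$ really does leave the first $k$ recursion steps operator-identical, which is what allows the single level-$k$ inequality to travel all the way to level $0$. Everything else reduces to the monotonicity of the Lemma \ref{lemma1} map, which is immediate, so I expect no genuine obstacle beyond carefully matching the parameters $Q_1,\ldots,Q_k$ across the two sequences.
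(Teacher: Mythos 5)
Your proof is correct and takes essentially the same route as the paper's: both propagate the level-$k$ inequality down to level $0$ by repeatedly applying the Lemma \ref{lemma1} recursion and the monotonicity of the map $(x,y)\mapsto \frac{1}{2}x+\frac{1}{2}\max\left\{x,\,y+Q_j\right\}$, concluding via $W_0=U_0$ and ${\tilde W}_0={\tilde U}_0$. Your explicit justification that the swap of positions $k+1$ and $k+2$ leaves the first $k$ recursion steps parameter-identical (same $Q_1,\ldots,Q_k$) is only implicit in the paper's proof, but this is a matter of exposition, not a different argument.
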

\begin{proof}
Similar as ${\tilde W}_k\left({\bm b}_k\right)$, we define ${\tilde W}_{k-1}\left({\bm b}_{k-1}\right)$ for the new sequence after the position exchange. Based on Lemma \ref{lemma1}, we have the following equalities for the sequences before and after the position exchange.
\begin{align}
\nonumber
& {W_{k-1}}\left( {{{\bm b}_{k-1}}} \right) = \frac{1}{2}{W_{k}}\left( {{{\bm b}_{k-1}},0} \right)\\
& +\frac{1}{2}\max \left\{ {{W_{k}}\left( {{\bm{b}}_{k-1},0} \right),{W_{k}}\left( {{{\bm b}_{k-1}},1} \right)+ {Q_{k}}} \right\},\label{goneA}\\
\nonumber
& {{\tilde W}_{k-1}}\left( {{{\bm b}_{k-1}}} \right) = \frac{1}{2}{{\tilde W}_{k}}\left({{{\bm b}_{k-1}},0} \right)\\
& +\frac{1}{2}\max \left\{ {{{\tilde W}_{k}}\left({{\bm{b}}_{k-1},0}\right),{{\tilde W}_{k}}\left( {{{\bm b}_{k-1}},1}\right) + {Q_{k}}} \right\}\label{goneB}.
\end{align}
Since ${{\tilde W}_k}\left( {{{\bm b}_{k}}} \right) \ge {W_k}\left( {{{\bm b}_{k}}} \right)$ for any ${\bm b}_{k}$, from (\ref{goneA}) and (\ref{goneB}), we conclude that ${{\tilde W}_{k-1}}\left( {{{\bm b}_{k-1}}} \right) \ge {W_{k-1}}\left( {{{\bm b}_{k-1}}} \right)$ for any ${\bm b}_{k-1}$.

Similarly, we define ${\tilde W}_{k-2}\left({\bm b}_{k-2}\right)$ for the new sequence and conclude that ${{\tilde W}_{k-2}}\left( {{{\bm b}_{k-2}}} \right) \ge {W_{k-2}}\left( {{{\bm b}_{k-2}}} \right)$ for any ${\bm b}_{k-2}$.

Repeating the process, we can eventually conclude that ${{\tilde W}_{0}} \ge {W_{0}}$. According to the definitions of $W$ and $\tilde W$, we have $W_0=U_0$ and ${\tilde W}_0={\tilde U}_0$, where $U_0$ and ${\tilde U}_0$ are the MNO's payoffs before and after the position exchange. Hence, we conclude that
\begin{align}
{\tilde U}_0\ge U_0.
\end{align}
In other words, if ${{\tilde W}_k}\left( {{{\bm b}_{k}}} \right) \ge {W_k}\left( {{{\bm b}_{k}}} \right)$ for any ${\bm b}_{k}$, the MNO's payoff does not decrease after exchanging VO $k+1$ and VO $k+2$'s positions. Here we complete the proof.
\end{proof}

\section{Proof of Proposition \ref{proposition3}}
It is easy to prove Proposition \ref{proposition3} by directly combing Lemma \ref{lemma5} and Lemma \ref{lemma6} introduced in the last section.

\section{Proof of Proposition \ref{proposition4}}
From Lemma \ref{lemma:add1}, we know that if VO $k+2$ is of type $3$, we have ${{\tilde W}_k}\left( {{{\bm b}_{k}}} \right) = {W_k}\left( {{{\bm b}_{k}}} \right)$ for any ${\bm b}_{k}$. From Lemma \ref{lemma6}, we conclude that if ${{\tilde W}_k}\left( {{{\bm b}_{k}}} \right) = {W_k}\left( {{{\bm b}_{k}}} \right)$ for any ${\bm b}_{k}$, the MNO's payoff does not change after exchanging VO $k+1$ and VO $k+2$'s bargaining positions. By combining these two statements, we can easily prove Proposition \ref{proposition4}.



\section{Proof of Theorem \ref{theoremA}}
\begin{proof}
We first prove the existence of set $\mathcal{L}^{\ast}$. We assume that $\bm l^1=\left(l_1^1,l_2^1,\ldots,l_N^1\right)$ is one of the optimal bargaining sequences. Next we show that we can rearrange the VOs' bargaining positions in sequence $\bm l^1$ and obtain a new optimal bargaining sequence that lies in set $\mathcal{L}^{\ast}$.

We first find out the type $1$ VO with the earliest bargaining position in sequence $\bm l^1$. We assume that the index of this type $1$ VO is $l_n^1$, \emph{i.e.}, its bargaining position is $n$. If $n\ne 1$, we move VO $l_n^1$ to the first bargaining position, and obtain a new sequence ${\bm l}^2=\left(l_1^2,l_2^2,\ldots,l_N^2\right)$. Mathematically,
\begin{align}
&l_1^2=l_n^1,\\
&l_i^2=l_{i-1}^1,\forall i\in\left\{2,3,\ldots,n\right\},\\
&l_i^2=l_i^1,\forall i\in\left\{n+1,n+2,\ldots,N\right\}.
\end{align}

For sequence $\bm l^2$, we find the type $1$ VO with the $2$nd earliest bargaining position, move it to the $2$nd bargaining position, and generate a new sequence $\bm l^3$. Repeating the process $N_1$ times (recall that $N_1$ is the number of type $1$ VOs), we obtain a sequence $\bm l^{N_1}$. Apparently, the first $N_1$ VOs in sequence $\bm l^{N_1}$ are of type $1$. Based on Proposition \ref{proposition3} and the optimality of $\bm l^1$, it is easy to conclude that sequences ${\bm l}^1,{\bm l}^2,\ldots,{\bm l}^{N_1}$ generate the same MNO's payoff and all of them are optimal sequences.

Then we apply the similar rule to move all type $3$ VOs in sequence $\bm l^{N_1}$ to the last $N_3$ bargaining positions. We denote the resulting sequences as ${\bm l}^{N_1+1},{\bm l}^{N_1+2},\ldots,{\bm l}^{N_1+N_3}$. Based on Proposition \ref{proposition4}, sequences ${\bm l}^{N_1},{\bm l}^{N_1+1},\ldots,{\bm l}^{N_1+N_3}$ generate the same MNO's payoff. In other words, ${\bm l}^{N_1+N_3}$ is also one of the optimal bargaining sequences.

The first $N_1$ VOs in sequence $\bm l^{N_1+N_3}$ are of type $1$, and the last $N_3$ VOs in sequence $\bm l^{N_1+N_3}$ are of type $3$. This means that the optimal bargaining sequence $\bm l^{N_1+N_3}$ lies in set ${\cal L}^\ast$, which shows the existence of the non-empty set ${\cal L}^\ast$.

For any $\boldsymbol{l} \in \mathcal{L}^{\ast}$, there are only type $1$ VOs between any two non-adjacent type $1$ VOs. Based on Proposition \ref{proposition3}, it is easy to show that if the MNO interchanges the bargaining positions of any two type 1 VOs in $\boldsymbol{l} \in \mathcal{L}^{\ast}$, the MNO's payoff will not change. Similarly, for any $\boldsymbol{l} \in \mathcal{L}^{\ast}$, there are only type $3$ VOs between any two non-adjacent type $3$ VOs. Based on Proposition \ref{proposition4}, it is easy to show that if the MNO interchanges the bargaining positions of any two type 3 VOs in $\boldsymbol{l} \in \mathcal{L}^{\ast}$, the MNO's payoff will not change. Here we complete the proof.
\end{proof}



\vspace{-0.4cm}
\section{Proof of Theorem \ref{theoremadd}}
\begin{proof}
Based on Algorithm \ref{algorithm2}, the first $N_1$ and last $N_3$ VOs in sequence ${\bm l}^{RE}=\left(l_1^{RE},l_2^{RE},\ldots,l_N^{RE}\right)$ are of type $1$ and type $3$, respectively. Therefore, to show that ${\bm l}^{R\!E}$ lies in set ${\cal L}^*$, we only need to prove that bargaining sequence ${\bm l}^{RE}$ optimizes the MNO's payoff.

We first show that there exists at least one optimal bargaining sequence in set ${\cal L}^{RE}$. Based on Theorem \ref{theoremA}, ${\cal L}^\ast$ is non-empty and we pick a sequence ${\bm l}^A=\left(l_1^{A},l_2^{A},\ldots,l_N^{A}\right)$ from set $\in{\cal L}^\ast$. For sequence ${{\bm l}^A}$, the first $N_1$ VOs are of type $1$, the last $N_3$ VOs are of type $3$, and the remaining $N_2$ VOs in the middle are of type $2$. Based on Theorem \ref{theoremA}, if we interchange the bargaining positions of any two type $1$ VOs or any two type $3$ VOs in ${{\bm l}^A}$, the MNO's payoff will not change. Therefore, by interchanging the bargaining positions of type $1$ or type $3$ VOs in ${{\bm l}^A}$, we can obtain a new optimal bargaining sequence ${{\bm l}^B}=\left(l_1^{B},l_2^{B},\ldots,l_N^{B}\right)$, where
\begin{align}
& l_i^{B}=l_i^{RE}, \forall i\in\left\{1,2,\ldots,N_1\right\},\\
& l_i^{B}=l_i^{A}, \forall i\in\left\{N_1+1,N_1+2,\ldots,N_1+N_2\right\},\\
& l_i^{B}=l_i^{RE}, \forall i\in\left\{N_1+N_2+1,N_1+N_2+2,\ldots,N\right\}.
\end{align}
In other words, the first $N_1$ VOs and the last $N_3$ VOs in sequence ${{\bm l}^B}$ are the same as those in sequence ${{\bm l}^{RE}}$, and the remaining $N_2$ VOs in the middle of sequence ${{\bm l}^B}$ are the same as those in sequence ${{\bm l}^{A}}$.
Apparently, ${{\bm l}^B}$ lies in set ${\cal L}^{RE}$. Since sequence ${{\bm l}^B}$ is optimal, there exists at least one optimal bargaining sequence in set ${\cal L}^{RE}$.

From Algorithm \ref{algorithm2}, we have ${\bm l}^{RE}=\mathop {\arg\!\max}_{{\bm l} \in {{\cal L}^{R\!E}}} U_0^{\bm l}$. Hence, it is easy to conclude that ${\bm l}^{RE}$ is also an optimal bargaining sequence for the MNO, and it lies in set ${\cal L}^*$. Here we complete the proof.
\end{proof}


\section{Proof of Theorem \ref{theoremB}}

\begin{proof}
Based on Theorem \ref{theoremA}, when all VOs are of type $1$, all bargaining sequences generate the same MNO's payoff. Without loss of generality, we consider sequence ${\bm l}=\left(1,2,\ldots,N\right)$, \emph{i.e.}, the MNO bargains with VO $n$ at step $n\in{\cal N}$.

To facilitate the proof, we define $W_0$ as
\begin{align}
W_0 \triangleq U_0 = \Psi \left( {\hat{\bm b}}_N \right) - \sum\limits_{n = 1}^N {{{\hat \pi} _n}},\label{main:defineW0}
\end{align}
and define function ${W_{k}}\left( {{{\bm b}_{k}}} \right)$, $k\in{\cal N}$, as
\begin{align}
\nonumber
{W_{k}}\left( {{{\bm b}_{k}}} \right) \triangleq & \Psi\left(B_N^k\left({\bm b}_{k}\right)\right) - \sum\limits_{n = 1}^{k} {{b_n}{Q_n}}\\
&-\sum\limits_{n = k+1}^{N} {{\pi_n^*}\left(B_{n-1}^k\left({\bm b}_k\right)\right)}.\label{main:complicated}
\end{align}

From Proposition \ref{proposition1}, the MNO cooperates with all VOs. Hence, we have ${\hat b}_1=1$, and we can rewrite $W_0$ in (\ref{main:defineW0}) as
\begin{align}
W_0=\Psi\left(B_N^1\left(1\right)\right) -{\pi}_1^*- \sum\limits_{n = 2}^N {{\pi_n^*}\left(B_{n-1}^1\left(1\right)\right)}.\label{main:W0}
\end{align}
From (\ref{main:complicated}), we have
\begin{align}
{W_{1}}\left(1\right) = & \Psi\left(B_N^1\left(1\right)\right) - Q_1-\sum\limits_{n = 2}^{N} {{\pi_n^*}\left(B_{n-1}^1\left(1\right)\right)}.
\end{align}
Hence, we can further rewrite $W_0$ in (\ref{main:W0}) as
\begin{align}
W_0=W_1\left(1\right)+Q_1-\pi_1^*.\label{main:W0:2}
\end{align}
According to (\ref{solution1}) and the fact that the MNO cooperates with VO $1$, we have $\pi_1^*=\frac{1}{2}\Delta_1$. Furthermore, based on the definition of $\Delta_1$ in (\ref{equ:delta1}), and the definitions of $W_1\left(0\right)$ and $W_1\left(1\right)$ in (\ref{main:complicated}), we have
\begin{align}
\Delta _{1}=W_{1}\left(1\right)-W_{1}\left(0\right)+Q_{1}.
\end{align}
Therefore, we can rewrite $W_0$ in (\ref{main:W0:2}) as
\begin{align}
\nonumber
W_0& = W_1\left(1\right)+Q_1-\frac{1}{2}\left(W_{1}\left(1\right)-W_{1}\left(0\right)+Q_{1}\right)\\
& = \frac{1}{2}{W_{1}}\left( {0} \right) +\frac{1}{2} {W_{1}}\left( 1 \right) + \frac{1}{2}{Q_{1}}.\label{main:W0:3}
\end{align}

Based on the similar approach, we can show that for $k\in\left\{1,\!\ldots,N-1\right\}$ and any ${\bm b}_k$, we have
\begin{align}
{W_k}\left( {{{\bm b}_k}} \right) = \frac{1}{2}{W_{k+1}}\left( {{{\bm b}_{k}},0} \right)+\frac{1}{2} {W_{k+1}}\left( {{{\bm b}_{k}},1} \right) +\frac{1}{2}{Q_{k+1}}.\label{main:useful}
\end{align}

By choosing $k=1$ and $b_1=0$ or $b_1=1$ in (\ref{main:useful}) to further expand ${W_{1}}\left( {0} \right)$ and ${W_{1}}\left( {1}\right)$ in (\ref{main:W0:3}), we obtain
\begin{align}
\nonumber
{W_0} & = \frac{1}{4}{W_{2}}\left( {0,0}\right)+\frac{1}{4}{W_{2}} \left({0,1}\right)+\frac{1}{2}{Q_{2}}\\
&{~~~~}+\frac{1}{4}{W_{2}}\left( {1,0}\right)+\frac{1}{4}{W_{2}}\left( {1,1}\right)+\frac{1}{2}{Q_{1}}.
\end{align}
Repeating the process above, we eventually obtain the following equality:
\begin{align}
\nonumber
{W_0}& = \frac{1}{{{2^N}}} \sum\limits_{{{\bm b}_N}\in\cal B} {W_N \left( {{\bm b}_N} \right)}+\frac{1}{2}\sum\limits_{n\in\cal N} {Q_n}\\
&=\frac{1}{{{2^N}}} \sum\limits_{{{\bm b}_N}\in\cal B} {\Psi \left( {{\bm b}_N} \right)},
\end{align}
where ${\cal B} \triangleq \left\{ {\left( {{b_1},{b_2}, \ldots ,{b_N}} \right):{b_n} \in \left\{ {0,1} \right\},\forall n \in {\cal N}} \right\}$. Since $W_0=U_0$, and $U_0$ is the MNO's payoff, we complete the proof.
\end{proof}
\vspace{-2cm}

\section{Preliminary Lemmas III}
In this section, we introduce a lemma that helps us prove Theorem \ref{theorem:sortable}.

Same as Appendix \ref{appendix:section:pro1} and Appendix \ref{appendix:section:pro2}, we first assume that the bargaining sequence follows $1,2,\ldots,N$. Then we define $W_k\left({\bm b}_k\right)$ as (\ref{complicated}). Next we interchange the positions of VO $k+1$ and VO $k+2$, where $k\in\left\{0,1,\ldots,N-2\right\}$, and define ${\tilde W}_k\left({\bm b}_k\right)$ for the new sequence as (\ref{complicatedprime}). We introduce the following lemma.
\begin{lemma}\label{lemma7}
If $Q_{k+2}\ge Q_{k+1}$ and $X_{k+2}^t\ge X_{k+1}^t$ for all $t=1,2,\ldots,T$, we have ${{\tilde W}_k}\left( {{{\bm b}_{k}}} \right) \ge {W_k}\left( {{{\bm b}_{k}}} \right)\!$ for any ${\bm b}_{k}$.
\end{lemma}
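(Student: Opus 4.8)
The plan is to mirror the proof of Lemma \ref{lemma5}, which treated the same adjacent swap under the stronger hypothesis $Q_{k+2}\ge 0$. First I would fix $\bm{b}_k$ and introduce the four ``leaf'' values of the original (unswapped) sequence, $A = W_{k+2}(\bm{b}_k,0,0)$, $B = W_{k+2}(\bm{b}_k,0,1)$, $C = W_{k+2}(\bm{b}_k,1,0)$ and $D = W_{k+2}(\bm{b}_k,1,1)$, together with the corresponding $\tilde A,\tilde B,\tilde C,\tilde D$ for the swapped sequence. Because VOs $k+3,\dots,N$ are bargained in the same order in both cases, each of these eight numbers depends on the first $k+2$ decisions only through the cumulative offloaded traffic, so Lemma \ref{lemma2} yields the identifications $A=\tilde A$, $D=\tilde D$, $B=\tilde C$ and $C=\tilde B$, exactly as in Lemma \ref{lemma5}.

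The second ingredient is the ordering of $A,B,C,D$. Adding a venue's traffic only increases the cumulative offload, so Lemma \ref{lemma3} gives $A\le B$, $A\le C$, $B\le D$ and $C\le D$. The new feature relative to Lemma \ref{lemma5} is that the hypothesis $X_{k+2}^t\ge X_{k+1}^t$ for all $t$ makes the cumulative offload in state $B$ (only VO $k+2$ deployed) at least that in state $C$ (only VO $k+1$ deployed), so Lemma \ref{lemma3} also gives $B\ge C$. Altogether this produces the total order $A\le C\le B\le D$.

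Next I would expand both $W_k$ and $\tilde W_k$ by two applications of the recursion in Lemma \ref{lemma1}, where the inner combination for $W_k$ carries parameter $Q_{k+2}$ and the outer one $Q_{k+1}$, while for $\tilde W_k$ the roles are interchanged. Using $\tfrac12 x+\tfrac12\max\{x,y\}=\max\{x,\tfrac{x+y}{2}\}$, each nested expression becomes a maximum of affine functions of $A,B,C,D,Q_{k+1},Q_{k+2}$; the expansion writes $W_k$ as the maximum of six explicit affine pieces and $\tilde W_k$ as the maximum of six others. The key observation is that $W_k\le\tilde W_k$ is \emph{equivalent} to the statement that every one of the six pieces of $W_k$ is bounded above by $\tilde W_k$, since $W_k$ is itself a maximum. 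I would then verify this piece by piece: four of the comparisons are immediate, because the pieces coincide (for instance the ``both cooperate'' piece of $W_k$ carries $Q$-weight $\tfrac{2Q_{k+2}}{4}+\tfrac{Q_{k+1}}{2}$ while that of $\tilde W_k$ carries $\tfrac{2Q_{k+1}}{4}+\tfrac{Q_{k+2}}{2}$, and these are equal) or are dominated trivially by $\tilde A=A$.

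The only substantive work lies in the two ``mixed'' pieces of $W_k$, where exactly one venue cooperates. Pairing each with the matching mixed piece of $\tilde W_k$ and subtracting, both differences collapse to the single inequality $(B-C)+(Q_{k+2}-Q_{k+1})\ge 0$, which holds by $B\ge C$ (Lemma \ref{lemma3}) and the assumption $Q_{k+2}\ge Q_{k+1}$; this is exactly where both hypotheses enter. I expect the main obstacle to be the bookkeeping of the expansion and the matching of pieces across the two orders, rather than any deep inequality: unlike in Lemma \ref{lemma5}, the absence of a sign assumption on $Q_{k+2}$ means none of the inner or outer maxima can be collapsed in advance, and it is precisely the full expansion that makes the two hypotheses surface together in one clean comparison. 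Having shown each piece of $W_k$ is at most $\tilde W_k$, I would conclude $\tilde W_k(\bm{b}_k)\ge W_k(\bm{b}_k)$ for every $\bm{b}_k$.
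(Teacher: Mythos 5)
Your proposal is correct, and it shares the paper's skeleton --- the four leaf values $A,B,C,D$ and their tilde counterparts, the identifications $A=\tilde A$, $B=\tilde C$, $C=\tilde B$, $D=\tilde D$ from Lemma \ref{lemma2}, and the key inequality $B\ge C$ from Lemma \ref{lemma3} combined with $X_{k+2}^t\ge X_{k+1}^t$ --- but your endgame is genuinely different. The paper resolves the nested maxima by enumerating nine cases according to where $A$ sits relative to $B+Q_{k+2}$ and $C+Q_{k+1}$ and where $D$ sits relative to $B-Q_{k+1}$ and $C-Q_{k+2}$; it works out only the first case explicitly and asserts the remaining eight are similar. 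You instead use the identity $\tfrac12 x+\tfrac12\max\{x,y\}=\max\{x,\tfrac{x+y}{2}\}$ to expand each of $W_k$ and $\tilde W_k$ as a maximum of six affine pieces and then dominate $W_k$ piece by piece: the pieces $A$, $\tfrac{A+B+Q_{k+2}}{2}$, $\tfrac{A+C+Q_{k+1}}{2}$, and the all-cooperate piece coincide exactly across the two orderings (the $Q$-weights $\tfrac{Q_{k+1}}{2}+\tfrac{Q_{k+2}}{2}$ being symmetric), while each of the two mixed pieces is dominated by its counterpart with difference $\tfrac14\bigl[(B-C)+(Q_{k+2}-Q_{k+1})\bigr]\ge 0$ --- I verified the algebra and this is exactly right, as is the logical reduction that $W_k\le\tilde W_k$ iff every piece of $W_k$ is at most $\tilde W_k$. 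Your route buys uniformity and completeness: there is no case split, hence nothing left to the reader, and it isolates transparently where both hypotheses enter, whereas the paper's enumeration is shorter to state but leaves eight of nine cases implicit. Two cosmetic remarks: you only need $B\ge C$, not the full chain $A\le C\le B\le D$; and the aside about pieces being ``dominated trivially by $\tilde A=A$'' is superfluous once the four exact matchings are noted.
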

\begin{proof}
The proof is similar to the proof of Lemma \ref{lemma5}.

First, we study $W_k\left({\bm b}_k\right)$. We define:
\begin{align}
A \triangleq W_{k+2}\left({\bm b}_k,0,0\right),\\
B \triangleq W_{k+2}\left({\bm b}_k,0,1\right),\\
C \triangleq W_{k+2}\left({\bm b}_k,1,0\right),\\
D \triangleq W_{k+2}\left({\bm b}_k,1,1\right).
\end{align}

According to Lemma \ref{lemma1},
\begin{align}
&W_{k+1}\left({\bm b}_k,0\right)=\frac{1}{2}A+\frac{1}{2}\max\left\{A,B+Q_{k+2}\right\},\\
&W_{k+1}\left({\bm b}_k,1\right)=\frac{1}{2}C+\frac{1}{2}\max\left\{C,D+Q_{k+2}\right\},\\
\nonumber
&W_k\left({\bm b}_k\right)=\frac{1}{2}W_{k+1}\left({\bm b}_k,0\right)+{~~~~~~~~~~~~~~~~~~~~~}\\
&\frac{1}{2}\max\left\{W_{k+1}\left({\bm b}_k,0\right),W_{k+1}\left({\bm b}_k,1\right)+Q_{k+1}\right\}.\label{labelA}
\end{align}

Then, we study ${\tilde W}_k\left({\bm b}_k\right)$. We define:
\begin{align}
{\tilde A} \triangleq {\tilde W}_{k+2}\left({\bm b}_k,0,0\right),\\
{\tilde B} \triangleq {\tilde W}_{k+2}\left({\bm b}_k,0,1\right),\\
{\tilde C} \triangleq {\tilde W}_{k+2}\left({\bm b}_k,1,0\right),\\
{\tilde D} \triangleq {\tilde W}_{k+2}\left({\bm b}_k,1,1\right).
\end{align}

According to Lemma \ref{lemma1},
\begin{align}
&{\tilde W}_{k+1}\left({\bm b}_k,0\right)=\frac{1}{2}{\tilde A}+\frac{1}{2}\max\left\{{\tilde A},{\tilde B}+Q_{k+1}\right\},\\
&{\tilde W}_{k+1}\left({\bm b}_k,1\right)=\frac{1}{2}{\tilde C}+\frac{1}{2}\max\left\{{\tilde C},{\tilde D}+Q_{k+1}\right\},\\
\nonumber
&{\tilde W}_k\left({\bm b}_k\right)=\frac{1}{2}{\tilde W}_{k+1}\left({\bm b}_k,0\right)+{~~~~~~~~~~~~~~~~~~~~~}\\
&\frac{1}{2}\max\left\{{\tilde W}_{k+1}\left({\bm b}_k,0\right),{\tilde W}_{k+1}\left({\bm b}_k,1\right)+Q_{k+2}\right\}.\label{labelB}
\end{align}

Now we compare $W_k\left({\bm b}_k\right)$ and ${\tilde W}_k\left({\bm b}_k\right)$.

Based on Lemma \ref{lemma2}, we have
\begin{align}
A={\tilde A},C={\tilde B},B={\tilde C},D={\tilde D}.\label{smallAA}
\end{align}

Based on $Q_{k+2}\ge Q_{k+1}$, $X_{k+2}\ge X_{k+1}$, and Lemma \ref{lemma3}, we have
\begin{align}
B\ge C.\label{smallBB}
\end{align}

We compare $W_k\left({\bm b}_k\right)$ and ${\tilde W}_k\left({\bm b}_k\right)$ under the following nine cases:
\vspace{-0.2cm}
\begin{itemize}
\item \textbf{Case 1}: $B+Q_{k+2}\ge C+Q_{k+1} \ge A$, $D\ge B-Q_{k+1}\ge C-Q_{k+2}$;
\item \textbf{Case 2}: $B+Q_{k+2}\ge C+Q_{k+1} \ge A$, $B-Q_{k+1}\ge D \ge C-Q_{k+2}$;
\item \textbf{Case 3}: $B+Q_{k+2}\ge C+Q_{k+1} \ge A$ , $B-Q_{k+1}\ge C-Q_{k+2}\ge D$;
\item \textbf{Case 4}: $B+Q_{k+2}\ge A \ge C+Q_{k+1}$ , $D\ge B-Q_{k+1}\ge C-Q_{k+2}$;
\item \textbf{Case 5}: $B+Q_{k+2}\ge A \ge C+Q_{k+1}$ , $B-Q_{k+1}\ge D \ge C-Q_{k+2}$;
\item \textbf{Case 6}: $B+Q_{k+2}\ge A \ge C+Q_{k+1}$ , $B-Q_{k+1}\ge C-Q_{k+2}\ge D$;
\item \textbf{Case 7}: $A\ge B+Q_{k+2}\ge C+Q_{k+1} $ , $D\ge B-Q_{k+1}\ge C-Q_{k+2}$;
\item \textbf{Case 8}: $A\ge B+Q_{k+2}\ge C+Q_{k+1} $ , $B-Q_{k+1}\ge D \ge C-Q_{k+2}$;
\item \textbf{Case 9}: $A\ge B+Q_{k+2}\ge C+Q_{k+1} $ , $B-Q_{k+1}\ge C-Q_{k+2}\ge D$.
\end{itemize}
\vspace{-0.1cm}

Here, we only provide the analysis of \textbf{Case 1}. Under this case, we can rewrite (\ref{labelA}) and (\ref{labelB}) as
\begin{align}
\nonumber
& W_k\left({\bm b}_k\right)=\frac{1}{4}A+\frac{1}{4}B+\frac{1}{2}Q_{k+2}+\\
&\max\left\{\frac{1}{4}A+\frac{1}{4}B,\frac{1}{4}C+\frac{1}{4}D+\frac{1}{2}Q_{k+1}\right\},\label{longA}\\
\nonumber
& {\tilde W}_k\left({\bm b}_k\right)=\frac{1}{4}A+\frac{1}{4}C+\frac{1}{2}Q_{k+1}+\\
&\max\left\{\frac{1}{4}A+\frac{1}{4}C,\frac{1}{4}B+\frac{1}{4}D+\frac{1}{2}Q_{k+2}\right\},\label{longB}
\end{align}
where we use (\ref{smallAA}). We further rewrite equalities (\ref{longA}) and (\ref{longB}) as
\begin{align}
\nonumber
& W_k\left({\bm b}_k\right)=\frac{1}{4}A+\max\left\{\frac{1}{4}A+\frac{1}{2}B+\frac{1}{2}Q_{k+2},\right.\\
&{~~~~~~~~~~~~~~}\left.\frac{1}{4}B+\frac{1}{4}C+\frac{1}{4}D+\frac{1}{2}Q_{k+1}+\frac{1}{2}Q_{k+2}\right\},\\
\nonumber
& {\tilde W}_k\left({\bm b}_k\right)=\frac{1}{4}A+\max\left\{\frac{1}{4}A+\frac{1}{2}C+\frac{1}{2}Q_{k+1},\right.\\
&{~~~~~~~~~~~~~~}\left.\frac{1}{4}B+\frac{1}{4}C+\frac{1}{4}D+\frac{1}{2}Q_{k+1}+\frac{1}{2}Q_{k+2}\right\}.
\end{align}
Using (\ref{smallBB}) and $Q_{k+2}\ge Q_{k+1}$, we can easily conclude ${\tilde W}_k\left({\bm b}_k\right)\ge W_k\left({\bm b}_k\right)$. We skip the analysis for the other eight cases. For all cases, we would obtain ${\tilde W}_k\left({\bm b}_k\right)\ge W_k\left({\bm b}_k\right)$, which completes the proof.
\end{proof}

\section{Proof of Theorem \ref{theorem:sortable}}
\begin{proof}
\textbf{Part A}: We first prove that bargaining sequence ${\bm l}=\left(l_1,l_2,\ldots,l_N\right)$ with ${Q_{l_n}} \ge {Q_{l_{n + 1}}}$ and ${X_{l_n}^t} \ge {X_{l_{n + 1}}^t}$ for all $n=1,2,\ldots,N-1,t=1,2,\ldots,T$, is optimal.

We assume that sequence ${\bm l}'=\left(l_1',l_2',\ldots,l_N'\right)$ is one of the optimal bargaining sequences, and consider the following two cases.

\textbf{Case 1}: ${\bm l}'$ satisfies ${Q_{l_n'}} \ge {Q_{l_{n + 1}'}}$ and ${X_{l_n'}^t} \ge {X_{l_{n + 1}'}^t}$ for all $n=1,2,\ldots,N-1,t=1,2,\ldots,T$.

It is easy to prove that $Q_{l_n}=Q_{l_n'}$ and $X_{l_n}^t=X_{l_n'}^t$ for all $n\in {\cal N},t=1,2,\ldots,T$. Therefore, ${\bm l}$ should generate the same bargaining solution and the MNO's payoff as ${\bm l}'$. In other words, ${\bm l}$ is also optimal.

\textbf{Case 2}: ${\bm l}'$ doesn't satisfy ${Q_{l_n'}} \ge {Q_{l_{n + 1}'}}$ and ${X_{l_n'}^t} \ge {X_{l_{n + 1}'}^t}$ for all $n=1,2,\ldots,N-1,t=1,2,\ldots,T$.

Based on Lemma \ref{lemma6} and Lemma \ref{lemma7}, we conclude that, for any bargaining sequence, if we exchange the bargaining positions of VO $k+1$ and VO $k+2$, and they satisfy $Q_{k+2}\ge Q_{k+1}$ and $X_{k+2}^t\ge X_{k+1}^t$ for all $t$, the MNO's payoff does not decrease. According to this and the fact that all VOs are sortable, we can rearrange sequence ${\bm l}'$ into a sequence ${\bm l}''$ that satisfies ${Q_{l_n''}} \ge {Q_{l_{n + 1}''}}$ and ${X_{l_n''}^t} \ge {X_{l_{n + 1}''}^t}$ for all $n=1,2,\ldots,N-1,t=1,2,\ldots,T$, and has $U_0^{{\bm l}''}\ge U_0^{{\bm l}'}$. Since sequence ${\bm l}'$ is one of the optimal bargaining sequences, we conclude that sequence ${\bm l}''$ is also optimal.

Since for sequence ${\bm l}$, we have ${Q_{l_n}} \ge {Q_{l_{n + 1}}}$ and ${X_{l_n}^t} \ge {X_{l_{n + 1}}^t}$ for all $n=1,2,\ldots,N-1,t=1,2,\ldots,T$. For sequence ${\bm l}''$, we also have ${Q_{l_n''}} \ge {Q_{l_{n + 1}''}}$ and ${X_{l_n''}^t} \ge {X_{l_{n + 1}''}^t}$ for all $n=1,2,\ldots,N-1,t=1,2,\ldots,T$. It is easy to show that $Q_{l_n}=Q_{l_n''}$ and $X_{l_n}^t=X_{l_n''}^t$ for all $n\in {\cal N},t=1,2,\ldots,T$. Therefore, ${\bm l}$ should generate the same bargaining solution and the MNO's payoff as ${\bm l}''$. In other words, ${\bm l}$ is also optimal.

Combining \textbf{Case 1} and \textbf{Case 2} completes the proof of \textbf{Part A}.

\textbf{Part B}: We next prove the existence of the cooperation threshold, \emph{i.e.}, under sequence $\bm l$, if the MNO does not cooperate with a particular VO ${l_n},n=1,2,\ldots,N-1$, it won't cooperate with VO ${l_{n+1}},{l_{n+2}},\ldots,{L_N}$.

To prove the existence of the cooperation threshold, we only need to prove that, under sequence $\bm l$, if the MNO does not cooperate with a particular VO ${l_n},n=1,2,\ldots,N-1$, it won't cooperate with VO ${l_{n+1}}$. We next show this by contradiction. We suppose that, the MNO does not cooperate with VO ${l_n}$, but cooperates with VO ${l_{n+1}}$.

We assume that the MNO reached ${\bm b}_{{{n-1}}}$ with the first ${n}-1$ VOs. Because the MNO does not cooperate with VO $l_n$, but cooperates with VO $l_{n+1}$, we have
\begin{align}
&\Delta_{{n}}\left({\bm b}_{{n-1}}\right)<0,\label{endA}\\
&\Delta_{{n+1}}\left({\bm b}_{{n-1}},0\right)\ge0.\label{endB}
\end{align}
By (\ref{deltaandW}), we can express $\Delta_{{n}}\left({\bm b}_{{n-1}}\right)$ and $\Delta_{{n+1}}\left({\bm b}_{{n-1}},0\right)$ as
\begin{align}
&{\Delta _{{n}}}\left( {{\bm b}_{{n}-1}} \right)=W_{{n}}\left( {{{\bm b}_{{n}-1}}},1 \right)-W_{{n}}\left( {{{\bm b}_{{n-1}}}},0\right)+Q_{{l_n}},\label{newweek}\\
\nonumber
&{\Delta _{{n}+1}}\left( {{\bm b}_{{n}-1},0} \right)=W_{{n}+1}\left( {{{\bm b}_{{n}-1}}},0,1\right)\\
&{~~~~~~~~~~~~~~~~~~~~~~~~~~~~}-W_{{n}+1}\left( {{{\bm b}_{{n-1}}}},0,0\right)+Q_{{l_{n+1}}}.\label{newweek2}
\end{align}
We define
\begin{align}
A \triangleq {W_{{n+1}}\left( {{{\bm b}_{{n}-1}}},0,0 \right)},\\
B \triangleq {W_{{n+1}}\left( {{{\bm b}_{{n}-1}}},0,1 \right)},\\
C \triangleq {W_{{n+1}}\left( {{{\bm b}_{{n}-1}}},1,0 \right)},\\
D \triangleq {W_{{n+1}}\left( {{{\bm b}_{{n}-1}}},1,1 \right)}.
\end{align}
Recall that, under sequence $\bm l$, we have $X_{l_n}^t\ge X_{l_{n+1}}^t$ for all $t$. Together with Lemma \ref{lemma3}, we obtain
\begin{align}
B\le C.
\end{align}
Based on Lemma \ref{lemma1}, we have
\begin{align}
& W_{{n}}\left( {{{\bm b}_{{n}-1}}},0\right) =\frac{1}{2}A+\frac{1}{2}\max\left\{A,{B+Q_{l_{n+1}}}\right\},\\
& W_{{n}}\left( {{{\bm b}_{{n}-1}}},1\right) =\frac{1}{2}C+\frac{1}{2}\max\left\{C,{D+Q_{l_{n+1}}}\right\}.
\end{align}
Now we rewrite (\ref{newweek}) and (\ref{newweek2}) as
\begin{align}
\nonumber
& \Delta_{{n}}\left({\bm b}_{{n-1}}\right)=\frac{1}{2}C+\frac{1}{2}\max\left\{C,{D+Q_{l_{n+1}}}\right\}\\
& -\frac{1}{2}A-\frac{1}{2}\max\left\{A,{B+Q_{l_{n+1}}}\right\}+Q_{l_n},\label{boring}\\
& {\Delta _{{n}+1}}\left( {{\bm b}_{{n}-1},0} \right)=B-A+Q_{l_{n+1}}.
\end{align}
According to (\ref{endB}), we have
\begin{align}
B+Q_{l_{n+1}}\ge A.
\end{align}
Hence, we can rewrite (\ref{boring}) as
\begin{align}
\nonumber
\Delta_{{n}}\left({\bm b}_{{n-1}}\right)=& \frac{1}{2}C+\frac{1}{2}\max\left\{C,{D+Q_{l_{n+1}}}\right\}\\
& -\frac{1}{2}A-\frac{1}{2}B-\frac{1}{2}{Q_{l_{n+1}}}+Q_{l_n}.
\end{align}
By checking the two cases $D+Q_{l_{n+1}}\ge C$ and $D+Q_{l_{n+1}}<C$ separately, it is easy to conclude that $\Delta_{{n}}\left({\bm b}_{{n-1}}\right)\ge0$. However, this contradicts with (\ref{endA}).

Therefore, we have shown that, if the MNO does not cooperate with a particular VO ${l_n}$, $n=1,2,\ldots,N-1$, it won't cooperate with VO ${l_{n+1}}$. By applying such a fact consecutively, we prove that, if the MNO does not cooperate with a particular VO ${l_n}$, it won't cooperate with VO ${l_{n+1}},{l_{n+2}},\ldots,{L_N}$.

\textbf{Part C}: We then prove that there exists an unique $k\in{\left\{ 0 \right\} \cup \cal N}$ satisfying the following two inequalities:
\begin{align}
& \sum\limits_{t=1}^{T} {f_t\left( {\sum\limits_{n = {l_1}}^{{l_{k-1}}} {{X_{n}^t}}  + {X_{l_k}^t}} \right)} - \sum\limits_{t=1}^{T} {f_t\left( {\sum\limits_{n = l_1}^{{l_{k-1}}} {{X_{n}^t}} } \right)} + {Q_{l_k}} \ge 0,\label{conditions1}\\
& \sum\limits_{t=1}^{T} {f_t\left( {\sum\limits_{n = l_1}^{l_k} {{X_n^t}}  + {X_{l_{k + 1}}^t}} \right)}- \sum\limits_{t=1}^{T} {f_t\left( {\sum\limits_{n = l_1}^{l_k} {{X_n^t}} } \right)} + {Q_{l_{k + 1}}} < 0.\label{conditions2}
\end{align}

To prove this, we only need to use the following two facts: (i) function $f_t\left(\cdot\right),t=1,2,\ldots,T,$ is an increasing and concave function; (ii) under sequence $\bm l$, we have $Q_{l_n}\ge Q_{l_{n+1}}$ and $X_{l_n}^t\ge X_{l_{n+1}}^t$ for all $n$ and $t$. We omit the proof here.

\textbf{Part D}: We next prove that, if $k$ is the cooperation threshold, it satisfies both (\ref{conditions1}) and (\ref{conditions2}). In other words,  (\ref{conditions1}) and (\ref{conditions2}) are the necessary conditions for $k$ to be the cooperation threshold.

We first show that the cooperation threshold satisfies (\ref{conditions1}).

If $k$ is the cooperation threshold, the MNO only cooperates with the first $k$ VOs. Therefore, for the MNO's bargaining with VO $l_k$, we have
\begin{align}
\Delta_{k}\left({\bm 1}_{k-1}\right)\ge 0.\label{soon}
\end{align}
According to (\ref{deltaandW}), we can express $\Delta_{k}\left({\bm 1}_{k-1}\right)$ as
\begin{align}
\Delta_{k}\left({\bm 1}_{k-1}\right)=W_{k}\left({\bm 1}_{k-1} ,1\right)-W_{k}\left( {\bm 1}_{k-1},0\right) +Q_{l_k}.\label{longlong}
\end{align}
For $W_{k}\left({\bm 1}_{k-1} ,1\right)$, since the MNO does not cooperate with the last $N-k$ VOs, we have
\begin{align}
&b_{m}^*\left( B_{m-1}^k\left({\bm 1}_{k-1},1\right) \right)=0,\forall m=k+1,k+2,\ldots,N,\\
&\pi_{m}^*\left( B_{m-1}^k\left({\bm 1}_{k-1},1\right) \right)=0,\forall m=k+1,k+2,\ldots,N.
\end{align}
Therefore, by the definition of $W_{k}\left({\bm 1}_{k-1} ,1\right)$, we obtain
\begin{align}
{W_k}\left( { {\bm 1}_{k-1} ,1 } \right) = \Psi \left( {{{\bm 1}_{k-1} ,1,{\bm 0}_{N-k}} } \right) - \sum\limits_{n = {l_1}}^{{l_k}} {{Q_n}}.\label{longlong1}
\end{align}
For $W_{k}\left({\bm 1}_{k-1} ,0\right)$, based on Lemma \ref{lemma1}, we have
\begin{align}
\nonumber
W_{k}\left({\bm 1}_{k-1} ,0\right)& \ge {W_{k + 1}}\left( { {{\bm 1}_{k-1},0,0} } \right)\\
\nonumber
& \ge \ldots\\
\nonumber
& \ge {W_{N}}\left( { {{\bm 1}_{k-1},0,{\bm 0}_{N-k}}} \right)\\
& =\Psi \left( { {{\bm 1}_{k-1},0,{\bm 0}_{N-k}} } \right) - \sum\limits_{n = {l_1}}^{{l_{k-1}}} {{Q_n}}.\label{longlong2}
\end{align}
Based on (\ref{longlong}), (\ref{longlong1}), and (\ref{longlong2}), we conclude
\begin{align}
\Psi \left( { {{\bm 1}_{k-1},1,{\bm 0}_{N-k}} } \right)\ge \Psi \left( {{{\bm 1}_{k-1},0,{\bm 0}_{N-k}} } \right).\label{supercomplicated1}
\end{align}
Based on the definition of the social welfare, we have
\begin{eqnarray}
\sum\limits_{t=1}^{T} {f_t\left( {\sum\limits_{n = {l_1}}^{{l_{k-1}}} {{X_{n}^t}}  + {X_{l_k}^t}} \right)} - \sum\limits_{t=1}^{T} {f_t\left( {\sum\limits_{n = l_1}^{{l_{k-1}}} {{X_{n}^t}} } \right)} + {Q_{l_k}} \ge 0,
\end{eqnarray}
which is exactly (\ref{conditions1}).

Then we show that the cooperation threshold satisfies (\ref{conditions2}).

If $k$ is the cooperation threshold, the MNO does not cooperate with VO $l_{k+1}$. Therefore, for the MNO's bargaining with VO $l_{k+1}$, we have
\begin{align}
\Delta_{k+1}\left({\bm 1}_k\right)< 0.\label{soonB}
\end{align}
According to (\ref{deltaandW}), we can express $\Delta_{k+1}\left({\bm 1}_k\right)$ as
\begin{align}
\Delta_{k+1}\left({\bm 1}_k\right)=W_{k+1}\left( {\bm 1}_k,1\right) -W_{k+1}\left( {\bm 1}_k,0\right) +Q_{l_{k+1}}.\label{longshort}
\end{align}
For $W_{k+1}\left( {\bm 1}_k,0 \right)$, since the MNO does not cooperate with all remaining VOs, we have
\begin{align}
&b_{m}^*\left( B_{m-1}^{k+1}\left({\bm 1}_{k},0\right) \right)=0,\forall m=k+2,k+3,\ldots,N,\\
&\pi_{m}^*\left( B_{m-1}^{k+1}\left({\bm 1}_{k},0\right) \right)=0,\forall m=k+2,k+3,\ldots,N.
\end{align}
Therefore, by the definition of $W_{k+1}\left({\bm 1}_{k} ,0\right)$, we obtain
\begin{align}
W_{k+1}\left( {\bm 1}_k,0 \right) = \Psi \left( {{{\bm 1}_k,0,{\bm 0}_{N-k-1}} } \right) - \sum\limits_{n = {l_1}}^{{l_k}} {{Q_n}}.\label{longshort1}
\end{align}
For $W_{k+1}\left( {\bm 1}_k,1 \right)$, based on Lemma \ref{lemma1}, we have
\begin{align}
\nonumber
W_{k+1}\left( {\bm 1}_k,1 \right)& \ge {W_{k + 2}}\left( { {{\bm 1}_k,1,0} } \right)\\
\nonumber
& \ge \ldots\\
\nonumber
& \ge {W_{N}}\left( { {{\bm 1}_k,1,{\bm 0}_{N-k-1}} } \right)\\
& =\Psi \left( { {{\bm 1}_k,1,{\bm 0}_{N-k-1}} } \right) - \sum\limits_{n = {l_1}}^{{l_{k+1}}} {{Q_n}}.\label{longshort2}
\end{align}
Based on (\ref{longshort}), (\ref{longshort1}), and (\ref{longshort2}), we conclude
\begin{align}
\Psi \left( { {{\bm 1}_k,1,{\bm 0}_{N-k-1}} } \right)< \Psi \left( { {{\bm 1}_k,0,{\bm 0}_{N-k-1}} } \right).\label{supercomplicated2}
\end{align}
Based on the definition of the social welfare, we have
\begin{align}
\sum\limits_{t=1}^{T} {f_t\left( {\sum\limits_{n = l_1}^{l_k} {{X_n^t}}  + {X_{l_{k + 1}}^t}} \right)}- \sum\limits_{t=1}^{T} {f_t\left( {\sum\limits_{n = l_1}^{l_k} {{X_n^t}} } \right)} + {Q_{l_{k + 1}}} < 0,
\end{align}
which is exactly (\ref{conditions2}).

Therefore, we complete the proof of this part.

\textbf{Part E}: Now we summarize \textbf{part B}, \textbf{part C}, and \textbf{part D}.

In \textbf{part B}, we have proved the existence of the cooperation threshold. In \textbf{part D}, we have proved that the cooperation threshold should satisfy (\ref{conditions1}) and (\ref{conditions2}). In \textbf{part C}, we have proved that (\ref{conditions1}) and (\ref{conditions2}) together admit an unique $k$. Hence, we conclude that (\ref{conditions1}) and (\ref{conditions2}) are also the sufficient conditions for $k$ to be the cooperation threshold. Here we complete the whole proof.
\end{proof}

\section{Preliminary Lemmas IV}
In this section, we introduce a lemma that helps us prove Theorem \ref{homogethe}. 
Same as Appendix \ref{appendix:section:pro1}, we first assume that the bargaining sequence follows $1,2,\ldots,N$. We introduce the following lemma.
\begin{lemma}\label{lemma8}
If VOs are homogenous, for any vector ${\bm b}_{n-2},n=2,3,\ldots,N$, we have the following relation:
\begin{align}
\nonumber
& W_n\left({\bm b}_{n-2},0,1\right)-W_n\left({\bm b}_{n-2},0,0\right)\ge \\
& W_n\left({\bm b}_{n-2},1,1\right)-W_n\left({\bm b}_{n-2},1,0\right).\label{appendix:equ:pro8:a}
\end{align}
\end{lemma}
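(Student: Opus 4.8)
The plan is to exploit homogeneity to collapse each $W_k$ into a function of a single integer variable, and then to prove that this reduced function is \emph{discretely concave} by downward induction on $k$, using the recursion in Lemma~\ref{lemma1}. First I would reduce the claim. Since all VOs are homogenous, $\sum_{m=1}^{k} b_m X_m^t = \big(\sum_{m=1}^k b_m\big)X^t$ and $\sum_{m=1}^k b_m Q_m = \big(\sum_{m=1}^k b_m\big)Q$, so Lemma~\ref{lemma2} implies that $W_k\left({\bm b}_k\right)$ depends on ${\bm b}_k$ only through the count $j \triangleq \sum_{m=1}^k b_m$. Writing $w_k(j)$ for this common value ($j\in\{0,\dots,k\}$), the four quantities in \eqref{appendix:equ:pro8:a} collapse: with $j=\sum_{m=1}^{n-2}b_m$, the left-hand difference is $w_n(j+1)-w_n(j)$ and the right-hand difference is $w_n(j+2)-w_n(j+1)$. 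Hence the lemma is equivalent to the discrete concavity $w_n(j+1)-w_n(j)\ge w_n(j+2)-w_n(j+1)$ for every admissible $j$, which I will establish for all $n$.

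For the base case $n=N$, note that $W_N\left({\bm b}_N\right)=\Psi\left(B_N^N\left({\bm b}_N\right)\right)-\sum_m b_mQ_m=\sum_t f_t\big(\sum_m b_m X_m^t\big)$, so $w_N(j)=\sum_t f_t(jX^t)$; concavity of each $f_t$ (with $X^t\ge 0$) gives diminishing differences term by term, hence $w_N$ is concave. For the inductive step, Lemma~\ref{lemma1} reads, in reduced form,
\[
w_k(j)=\tfrac12 w_{k+1}(j)+\tfrac12\max\{w_{k+1}(j),\,w_{k+1}(j+1)+Q\}.
\]
Assuming $w_{k+1}$ concave, I would show that the envelope $g(j)\triangleq\max\{a(j),b(j)\}$, with $a(j)\triangleq w_{k+1}(j)$ and $b(j)\triangleq w_{k+1}(j+1)+Q$, is concave; then $w_k=\tfrac12\left(w_{k+1}+g\right)$ is concave as an average of concave functions, closing the induction and yielding concavity of $w_n$ for every $n$.

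The main obstacle is precisely that the pointwise maximum of two concave sequences need not be concave. The key observation that rescues it is a single-crossing property: by the inductive hypothesis, $b(j)-a(j)=\big(w_{k+1}(j+1)-w_{k+1}(j)\big)+Q$ is \emph{non-increasing} in $j$, so there is a threshold $j^\ast$ with $g=b$ on $\{j\le j^\ast\}$ and $g=a$ on $\{j\ge j^\ast+1\}$. Within each region $g$ inherits concavity from $a$ or from $b$, so only the two transition triples straddling $j^\ast$ remain. For these I would use the identities $a(j^\ast+1)-b(j^\ast)=-Q$ and $b(j^\ast)-b(j^\ast-1)=w_{k+1}(j^\ast+1)-w_{k+1}(j^\ast)$, after which the two required inequalities reduce \emph{exactly} to the threshold conditions defining $j^\ast$, namely $w_{k+1}(j^\ast+1)-w_{k+1}(j^\ast)\ge -Q$ (from $b(j^\ast)\ge a(j^\ast)$) and $w_{k+1}(j^\ast+2)-w_{k+1}(j^\ast+1)<-Q$ (from $b(j^\ast+1)<a(j^\ast+1)$).

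Finally I would dispose of the degenerate cases in which no interior crossing occurs: if $b(j)\ge a(j)$ throughout then $g=b$ is concave, and if $b(j)<a(j)$ throughout then $g=a$ is concave, so the conclusion is immediate there. Translating the concavity of $w_n$ back through the count identification then gives \eqref{appendix:equ:pro8:a} for the arbitrary prefix ${\bm b}_{n-2}$, completing the argument.
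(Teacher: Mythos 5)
Your proof is correct, and while it shares the paper's overall skeleton---a downward induction on the bargaining step driven by the recursion of Lemma~\ref{lemma1}, with Lemma~\ref{lemma2} supplying the homogeneity collapse---the way you execute the inductive step is genuinely different and cleaner. The paper keeps the eight raw values $A,\ldots,H$ of $W_k$ at the extensions of ${\bm b}_{k-3}$, uses Lemma~\ref{lemma2} to identify $B=C=E$ and $D=F=G$, invokes the inductive hypothesis only to prune the branch patterns of the two maxima down to four monotone cases, and then verifies the target inequality by separate algebra in each case. You instead collapse everything to a scalar sequence $w_k(j)$ once and for all, recognize \eqref{appendix:equ:pro8:a} as discrete concavity of $w_n$ (the four terms being $w_n$ at $j+1,j,j+2,j+1$, with $j$ ranging over exactly the concavity triples as ${\bm b}_{n-2}$ varies), and prove that the one-step operator $w_k=\tfrac12\bigl(w_{k+1}+\max\left\{w_{k+1},\,w_{k+1}(\cdot+1)+Q\right\}\bigr)$ preserves concavity. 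Your single-crossing observation---that $b(j)-a(j)=w_{k+1}(j+1)-w_{k+1}(j)+Q$ is non-increasing---is precisely the structural fact the paper exploits implicitly when it reduces to four cases (there $B-A\ge D-C$ and $F-E\ge H-G$ play the same role), but by making it explicit you obtain concavity of the envelope $g$ in one shot: away from the threshold $g$ inherits concavity from $a$ or $b$, and at the threshold the two boundary inequalities $d(j^\ast)\ge -Q$ and $d(j^\ast+1)<-Q$ are exactly the conditions defining $j^\ast$, as you note; your base case $w_N(j)=\sum_t f_t(jX^t)$ is concave since each $f_t$ is concave and $X^t\ge 0$, and you correctly dispose of the no-crossing cases. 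What your route buys is brevity, a reusable envelope lemma (the same max-of-shifted-concave mechanism underlies the case analyses in Lemmas~\ref{lemma5} and~\ref{lemma7}), and a conceptual explanation---the Bellman-type recursion of Lemma~\ref{lemma1} preserves discrete concavity under homogeneity---whereas the paper's version buys only self-contained elementary algebra. The only bookkeeping worth writing out explicitly is the endpoint behavior of $j^\ast$ (when the threshold sits at the boundary of $\{0,\ldots,k\}$ some transition triples are vacuous) and the one-line remark that homogeneity makes $Q_{k+1}=Q$ independent of the step, so the same operator applies at every level; neither affects the argument.
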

\begin{proof}
We prove it by mathematical induction.

\textbf{Part A:} It is easy to show that (\ref{appendix:equ:pro8:a}) holds for $n=N$.

\textbf{Part B:} We assume that (\ref{appendix:equ:pro8:a}) holds for $n=k$, and verify it for $n=k-1$. We define
\begin{align}
\nonumber
A\triangleq W_k\left({\bm b}_{k-3},0,0,0\right),B\triangleq W_k\left({\bm b}_{k-3},0,0,1\right),\\
\nonumber
C\triangleq W_k\left({\bm b}_{k-3},0,1,0\right),D\triangleq W_k\left({\bm b}_{k-3},0,1,1\right),\\
\nonumber
E\triangleq W_k\left({\bm b}_{k-3},1,0,0\right),F\triangleq W_k\left({\bm b}_{k-3},1,0,1\right),\\
\nonumber
G\triangleq W_k\left({\bm b}_{k-3},1,1,0\right),H\triangleq W_k\left({\bm b}_{k-3},1,1,1\right).
\end{align}
According to Lemma \ref{lemma1}, we have the following relations:
\begin{align}
W_{k-1}\left({\bm b}_{k-3},0,0\right)=\frac{1}{2}A+\frac{1}{2}\max\left\{A,B+Q\right\},\\
W_{k-1}\left({\bm b}_{k-3},0,1\right)=\frac{1}{2}C+\frac{1}{2}\max\left\{C,D+Q\right\},\\
W_{k-1}\left({\bm b}_{k-3},1,0\right)=\frac{1}{2}E+\frac{1}{2}\max\left\{E,F+Q\right\},\\
W_{k-1}\left({\bm b}_{k-3},1,1\right)=\frac{1}{2}G+\frac{1}{2}\max\left\{G,H+Q\right\}.
\end{align}
From Lemma \ref{lemma2}, we have $B=C=E$ and $D=F=G$. Furthermore, based on our assumption, we have $B-A\ge D-C$ and $F-E\ge H-G$. Therefore, we only need to consider the following four cases: 

\begin{itemize}
\item \textbf{Case 1:} $B+Q\ge A$, $D+Q\ge C$ (\emph{i.e.}, $F+Q\ge E$), and $H+Q\ge G$;
\item \textbf{Case 2:} $B+Q\ge A$, $D+Q\ge C$ (\emph{i.e.}, $F+Q\ge E$), and $H+Q< G$;
\item \textbf{Case 3:} $B+Q\ge A$, $D+Q< C$ (\emph{i.e.}, $F+Q< E$), and $H+Q< G$;
\item \textbf{Case 4:} $B+Q< A$, $D+Q< C$ (\emph{i.e.}, $F+Q< E$), and $H+Q< G$.
\end{itemize}
We need to verify the following relation for the four cases:
\begin{align}
\nonumber
& W_{k-1}\left({\bm b}_{k-3},0,1\right)-W_{k-1}\left({\bm b}_{k-3},0,0\right) \ge \\
& W_{k-1}\left({\bm b}_{k-3},1,1\right)-W_{k-1}\left({\bm b}_{k-3},1,0\right).\label{appendix:equ:pro8:b}
\end{align}

Here, we only show the analysis for \textbf{Case 1}. In this case, we have
\begin{align}
\nonumber
& W_{k-1}\left({\bm b}_{k-3},0,1\right)-W_{k-1}\left({\bm b}_{k-3},0,0\right)\\
\nonumber
& -W_{k-1}\left({\bm b}_{k-3},1,1\right)+W_{k-1}\left({\bm b}_{k-3},1,0\right)\\
\nonumber
& =\frac{1}{2}C+\frac{1}{2}D-\frac{1}{2}A-\frac{1}{2}B-\frac{1}{2}G-\frac{1}{2}H+\frac{1}{2}E+\frac{1}{2}F\\
\nonumber
& =\frac{1}{2}D-\frac{1}{2}A-\frac{1}{2}H+\frac{1}{2}E\\
\nonumber
& =\frac{1}{2}B-\frac{1}{2}A-\frac{1}{2}H+\frac{1}{2}D\\
\nonumber
& \ge \frac{1}{2}D-\frac{1}{2}C-\frac{1}{2}H+\frac{1}{2}D\\
\nonumber
& =  \frac{1}{2}F-\frac{1}{2}E-\frac{1}{2}H+\frac{1}{2}G\\
& \ge 0.
\end{align}
We can use the similar approach to prove (\ref{appendix:equ:pro8:b}) for the three remaining cases.

Combining \textbf{Part A} and \textbf{Part B} completes the proof.
\end{proof}

\vspace{-0.3cm}
\section{Proof of Theorem \ref{homogethe}}
\begin{proof}
Since VOs are homogenous, without loss of generality, we consider the bargaining sequence where the MNO bargains with VO $n$ at step $n$, $n\in{\cal N}$. Because VOs are sortable, based on Theorem \ref{theorem:sortable}, we have a threshold $k=0,1,\ldots,N,$ such that the MNO only cooperates with the first $k$ VOs and ${\hat \pi}_n=0$ for $n=k+1,\ldots,N$. Next we show that for VO $n-1$ and VO $n$, $n=2,3,\ldots,k$, we have $\Delta_{n-1}\left({\bm b}_{n-2}\right)\ge \Delta_{n}\left({\bm b}_{n-2},1\right)$ for any ${\bm b}_{n-2}$.

First, we define
\begin{align}
\nonumber
A\triangleq W_n\left({\bm b}_{n-2},0,0\right),B\triangleq W_n\left({\bm b}_{n-2},0,1\right),\\
\nonumber
C\triangleq W_n\left({\bm b}_{n-2},1,0\right),D\triangleq W_n\left({\bm b}_{n-2},1,1\right).
\end{align}

For $\Delta_{n-1}\left({\bm b}_{n-2}\right)$, we have
\begin{align}
\nonumber
& \Delta_{n-1}\left({\bm b}_{n-2}\right)=W_{n-1}\left({\bm b}_{n-2},1\right)-W_{n-1}\left({\bm b}_{n-2},0\right)+Q\\
\nonumber
& =\frac{1}{2}C+\frac{1}{2}D+\frac{1}{2}Q-\frac{1}{2}A-\frac{1}{2}B-\frac{1}{2}Q+Q\\
& =\frac{1}{2}C+\frac{1}{2}D-\frac{1}{2}A-\frac{1}{2}B+Q.
\end{align}

For $\Delta_{n}\left({\bm b}_{n-2},1\right)$, we have
\begin{align}
& \Delta_{n}\left({\bm b}_{n-2},1\right)=D-C+Q.
\end{align}

Therefore, we obtain
\begin{align}
\nonumber
& \Delta_{n-1}\left({\bm b}_{n-2}\right)- \Delta_{n}\left({\bm b}_{n-2},1\right)\\
& =\frac{1}{2}C-\frac{1}{2}D-\frac{1}{2}A+\frac{1}{2}B\ge0.
\end{align}

According to equation (\ref{solutionk}), we have $\pi_{n-1}^*\left({\bm b}_{n-2}\right)=\frac{1}{2} \Delta_{n-1}\left({\bm b}_{n-2}\right)$ and $\pi_{n}^*\left({\bm b}_{n-2},1\right)=\frac{1}{2} \Delta_{n}\left({\bm b}_{n-2},1\right)$. Since $\Delta_{n-1}\left({\bm b}_{n-2}\right)\ge \Delta_{n}\left({\bm b}_{n-2},1\right)$ for any ${\bm b}_{n-2}$, we conclude that $\pi_{n-1}^*\left({\bm b}_{n-2}\right)\ge\pi_{n}^*\left({\bm b}_{n-2},1\right)$ for any ${\bm b}_{n-2}$. Therefore, we have ${\hat \pi}_{n-1} \ge {\hat \pi}_{n}$ for $n=2,\ldots,k$. Together with ${\hat \pi}_n=0$ for $n=k+1,\ldots,N$, we complete the proof.
\end{proof}

\vspace{-0.3cm}
\section{Proof of Corollary \ref{corollary:homogenous}}
Corollary \ref{corollary:homogenous} can be easily proved by combining Theorem \ref{theorem:sortable} and Theorem \ref{homogethe}. The details are omitted here.

\vspace{-0.3cm}
\section{Examples on Heterogenous VOs}
We show examples where two VOs are homogenous in $Q_n$ but heterogenous in ${\bm X}_n$ in Figure \ref{figureextra:b}. We find that, the \emph{red} VO obtains a payoff of $1.25$ under an earlier bargaining position and a payoff of $1.5$ under a later bargaining position. That is to say, it has a higher payoff when it bargains with the MNO in the later position.
\vspace{-0.1cm}

\begin{figure}[h]
  \centering
  \includegraphics[scale=0.32]{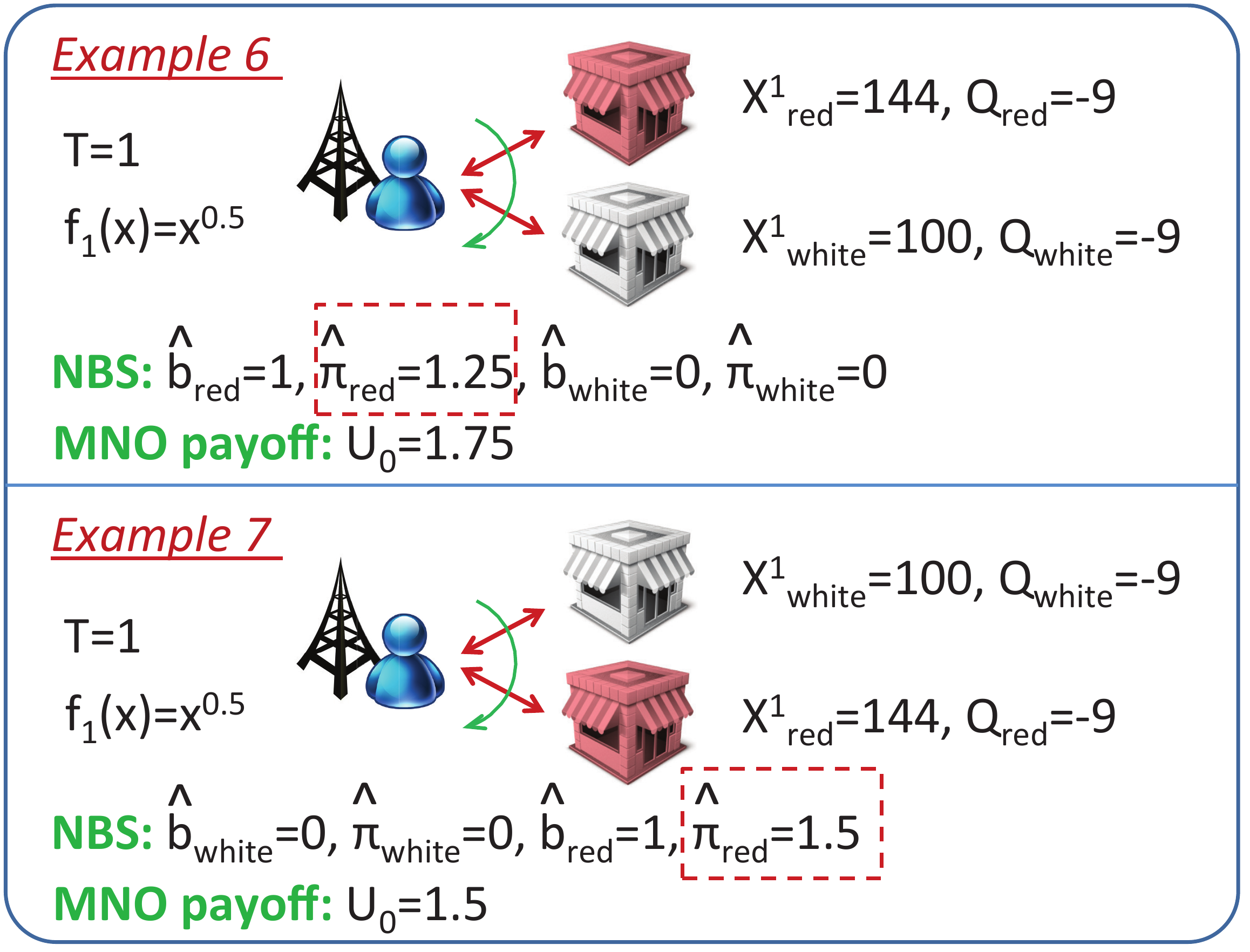}
  \centering
  \caption{Influence of Bargaining Sequence on Heterogenous VOs' Payoffs (Heterogenous $X_n$).}
  \label{figureextra:b}
\end{figure}

\end{document}